\documentclass{article}

\usepackage{PRIMEarxiv}
\usepackage[utf8]{inputenc}
\usepackage[T1]{fontenc}
\usepackage{hyperref}
\usepackage{url}
\usepackage{booktabs}
\usepackage{setspace}
\usepackage{amsmath,amssymb,amsthm,mathtools}
\usepackage{nicefrac}
\usepackage{microtype}
\usepackage{enumitem}
\usepackage{xcolor}
\usepackage{graphicx}
\usepackage{tikz}

\usetikzlibrary{arrows.meta,fit,positioning}

\numberwithin{equation}{section}

\newtheorem{theorem}{Theorem}[section]

\newtheorem{proposition}[theorem]{Proposition}
\newtheorem{corollary}[theorem]{Corollary}

\theoremstyle{definition}
\newtheorem{definition}[theorem]{Definition}
\newtheorem{assumption}[theorem]{Assumption}

\DeclareMathOperator{\argmin}{arg\,min}

\title{A Variational Framework for LLM Generator--Regulator Games}
\author{Quanyan Zhu\thanks{Department of Electrical and Computer Engineering, Tandon School of Engineering, New York University, Brooklyn, NY, USA. Contact: \href{mailto:quanyan.zhu@nyu.edu}{\texttt{quanyan.zhu@nyu.edu}}.}}

\begin{document}
\maketitle

\begin{abstract}
This paper develops a variational framework for regulated language generation. Starting from autoregressive token sampling, we derive the induced distribution over complete messages and relate it to an entropy-regularized Gibbs law. Regulation is modeled as an optimal discriminator whose convex-dual value is an $f$-divergence, and the generator--regulator interaction is formulated as a saddle-point problem. The framework applies to moderation, censorship, AI deception detection, compliance auditing, phishing defense, and manipulation control, where regulation concerns a distribution over possible messages rather than a single output. The equilibrium clarifies the tradeoff among utility, entropy, regulatory alignment, and finite-length detectability. Two finite-vocabulary case studies, censorship filtering and phishing defense, illustrate how the theory can be evaluated through utility, entropy, divergence, receiver-side scores, and detection probability.
\end{abstract}

\keywords{large language models \and generator--regulator games \and variational regulation \and censorship filtering \and phishing defense \and central limit theorem}

\section{Introduction}

Large language models generate text, code, and structured reasoning one token at a time, yet their outputs are usually judged as complete messages. Semantic quality, safety, factuality, and compliance are properties of the realized sequence, not only of the next-token distribution. This gap motivates a message-level theory: starting from temperature-scaled autoregressive sampling, we study the induced law on complete messages and place it in a variational and game-theoretic framework. The construction draws on transformer language modeling \cite{vaswani2017attention,brown2020language}, information theory, entropy maximization, convex duality, and minimax equilibrium \cite{shannon1948mathematical,cover2006elements,rockafellar1970convex,sion1958general}.

The central modeling problem is that useful generation and regulated generation need not coincide. A generator should place mass on relevant, coherent, and useful messages, but it may also need to avoid output distributions that are unsafe, noncompliant, deceptive, censored, phishing-like, or otherwise undesirable. We use ``regulation'' broadly: the regulator may be a safety moderator, censorship mechanism, deception detector, phishing filter, compliance auditor, institutional policy, or adversarial monitor. This framing is closely related to game-theoretic models of cyber deception, defensive deception, adversarial evidence, and cyber resilience \cite{pawlick2017taxonomy,pawlick2018leaky,zhang2018deceptionfoundation,zhu2019cyberdeceptiontutorial,zhu2024cyberresilience}. In each case, regulation is a constraint on a message distribution rather than only a post-processing rule for one sample. This is crucial when policy-relevant behavior can be expressed through many token sequences or depends on semantic features such as deception, evasion, persuasion, credential-seeking intent, or forbidden topics. The model captures this tension by letting the generator choose a message law and the regulator choose a score that detects deviations from a regulated reference law.

The analysis proceeds from token sampling to message laws, from entropy-regularized utility maximization to a Gibbs generator, and from variational discrimination to divergence-regularized equilibrium. This sequence makes explicit how utility, diversity, and regulatory pressure jointly determine the generated distribution. A useful way to read the paper is to keep the following dictionary in mind: $Q$ is the law actually induced by the generator, $U_L$ measures the value of a complete length-$L$ message, $H(Q)$ rewards diversity, $Q_{\mathrm{reg}}$ is the reference law regarded as acceptable by the regulator, $\phi$ is a learned or designed score, and $\lambda$ determines how strongly the generator must respect the regulator.

Figure~\ref{fig:generator-regulator-model} summarizes the model. A prompt and context induce token-level sampling probabilities and hence a message law $Q$. The generator values utility and entropy; the regulator compares $Q$ with a reference law $Q_{\mathrm{reg}}$ by optimizing a discriminator $\phi$. The induced penalty $\mathcal R(Q)$ feeds back into the generator objective, and the saddle point determines the regulated equilibrium law.

\begin{figure}[t]
\centering
\begin{tikzpicture}[
font=\scriptsize,
node distance=7mm and 8mm,
box/.style={draw, rounded corners=2pt, align=center, minimum height=8mm, inner sep=3pt, text width=25mm},
source/.style={box, fill=gray!10},
gen/.style={box, fill=blue!6},
reg/.style={box, fill=red!6},
value/.style={box, fill=green!7},
arrow/.style={-{Latex[length=2mm]}, thick},
feedback/.style={-{Latex[length=2mm]}, thick, dashed}
]
\node[source] (prompt) {Prompt and context\\$(p,I)$};
\node[gen, right=of prompt, text width=29mm] (sampler) {Autoregressive sampler\\$\pi_T(w_k\mid h_k)$};
\node[gen, right=of sampler, text width=26mm] (law) {Message law\\$Q\in\Delta(\mathcal X_L)$};
\node[source, above=of law, text width=25mm] (messages) {Complete messages\\$m=(w_1,\ldots,w_L)$};
\node[value, below=of law, text width=31mm] (objective) {Generator objective\\$\frac{1}{T}\mathbb E_Q[U_L]+H(Q)$};
\node[reg, right=11mm of law, text width=30mm] (disc) {Regulator\\discriminator\\$\phi:\mathcal X_L\to\mathbb R$};
\node[source, above=of disc, text width=29mm] (reference) {Regulated reference\\$Q_{\mathrm{reg}}$};
\node[reg, below=of disc, text width=30mm] (penalty) {Divergence penalty\\$\mathcal R(Q)=D_f(Q\|Q_{\mathrm{reg}})$};
\node[value, below=10mm of objective, xshift=22mm, text width=32mm] (equilibrium) {Saddle-point equilibrium\\$(Q^*,\phi^*)$};

\draw[arrow] (prompt) -- node[above] {$h_k$} (sampler);
\draw[arrow] (sampler) -- (law);
\draw[arrow] (law) -- (messages);
\draw[arrow] (law) -- (objective);
\draw[arrow] ([yshift=1.1mm]law.east) -- node[midway, above=2.2pt, fill=white, inner sep=1pt] {samples/law} ([yshift=1.1mm]disc.west);
\draw[arrow] (reference) -- (disc);
\draw[arrow] (disc) -- (penalty);
\draw[feedback] ([yshift=1.2mm]penalty.west) -- node[midway, above=2pt, fill=white, inner sep=1pt] {$-\lambda\mathcal R(Q)$} ([yshift=1.2mm]objective.east);
\draw[arrow] (objective) -- (equilibrium);
\draw[arrow] (penalty) -- (equilibrium);
\end{tikzpicture}
\caption{Schematic of the variational generator--regulator model. Autoregressive token sampling induces a message distribution $Q$ on the fixed-length message space $\mathcal X_L$. The generator rewards utility and entropy, while the regulator compares $Q$ with a reference distribution encoding safety, censorship, nondeception, phishing defense, compliance, or another regulated behavior. The optimized discriminator induces a divergence penalty, closing the feedback loop and determining the saddle-point equilibrium.}
\label{fig:generator-regulator-model}
\end{figure}

\subsection{Contributions and Novelty}

The paper makes six contributions: a message-level probabilistic formulation of autoregressive generation; a Gibbs representation derived from entropy-regularized utility maximization; a variational discriminator model of regulation with an $f$-divergence dual; a saddle-point characterization with a closed-form KL-regulated equilibrium; finite-length Gaussian approximations for score, log-density, and surprisal statistics; and two finite-vocabulary case studies, censorship filtering and phishing defense, that evaluate utility, entropy, regulatory divergence, receiver-side scoring, and detection probability.

The novelty is the integration of generation, moderation, censorship, deception detection, phishing defense, and compliance into one distributional variational structure. The generator is a message-level Gibbs law, the regulator is an optimal discriminator whose value is an $f$-divergence, and their interaction is a saddle-point problem. This yields an explicit KL-regulated equilibrium and connects finite-length detectability to central limit approximations for score, log-density, and surprisal.

\section{Generator Model and Variational Characterization}

This section connects the operational mechanism of autoregressive language models with a global variational formulation of generation. The central observation is that token-level sampling induces a probability distribution over complete messages, and this induced distribution admits an energy--entropy interpretation.

\subsection{Autoregressive Generation in Practice}

Modern large language models generate text through a sequential stochastic process known as \emph{autoregressive generation}: a message is constructed token by token, with each new token conditioned on the previously generated prefix and the external context. Let $\mathcal V$ denote a finite vocabulary, where each element $w \in \mathcal V$ is a token representing a word, subword unit, or symbol depending on the tokenization scheme. Let $p$ denote the prompt, and let $I$ denote auxiliary contextual information such as system-level instructions, retrieved documents, memory states, or external signals. Together, $(p,I)$ define the conditioning information under which generation takes place.

At generation step $k$, the model conditions on the history $h_k=(w_1,\dots,w_{k-1},p,I)$, which aggregates all information available before the $k$-th token is generated. This history plays the role of a state variable in a sequential stochastic process, and the autoregressive assumption implies that the next-token distribution depends on $h_k$ rather than on future tokens. Given the history $h_k$, the model computes real-valued logits $\ell(w \mid h_k)$ for $w \in \mathcal V$. These logits are unnormalized preferences over the vocabulary; they arise from the internal neural network computation and encode the model's assessment of each token as the next continuation. Since logits live in $\mathbb R$ rather than in the probability simplex, the model converts them into probabilities using the temperature-scaled softmax transformation
\begin{equation}
\pi_T(w \mid h_k)
=
\frac{\exp\big(\ell(w \mid h_k)/T\big)}
{\sum_{v \in \mathcal V} \exp\big(\ell(v \mid h_k)/T\big)}.
\label{eq:softmax}
\end{equation}
This mapping produces a valid probability distribution over $\mathcal V$, with $\sum_w \pi_T(w \mid h_k)=1$, and it is invariant to additive shifts of the logits, so only relative differences matter. The temperature parameter $T$ controls the sharpness of the distribution: smaller values amplify logit differences and yield more deterministic behavior, while larger values flatten the distribution and increase randomness.

Once the conditional distribution $\pi_T(\cdot \mid h_k)$ is defined, the model samples the next token $w_k$ according to this distribution and repeats the procedure until a full message $m=(w_1,\dots,w_L)$ is produced. The probability of generating a specific message is therefore the product of conditional probabilities,
\begin{equation}
P_T(m \mid p, I)
=
\prod_{k=1}^L \pi_T(w_k \mid h_k).
\label{eq:autoregressive-product}
\end{equation}
This expression follows directly from the chain rule of probability and gives the operational definition of generation used in practice. The sequence $\{w_k\}_{k=1}^L$ can be viewed as a stochastic process adapted to the filtration generated by the histories $\{h_k\}$: each step introduces randomness through sampling, while dependence on the history enforces coherence and context-awareness.

\subsection{Message Space and Induced Distribution}

To move from token-level generation to a global perspective, we introduce the message space. For a fixed length $L$, define $\mathcal X_L:=\mathcal V^L$, and let the variable-length message space be $\mathcal X=\bigcup_{L\ge 1}\mathcal X_L$. This construction allows sequences of arbitrary finite length; each element $m \in \mathcal X$ can be written as $m=(w_1,\dots,w_L)$ for some $L\ge 1$, where each $w_k\in\mathcal V$. In all finite-length results below, however, the relevant space is the finite set $\mathcal X_L$. The autoregressive generation mechanism in \eqref{eq:autoregressive-product} assigns a probability to every sequence $m \in \mathcal X_L$. For fixed $L$, this expression induces a probability measure $P_T(\cdot\mid p,I)\in\Delta(\mathcal X_L)$, where $\Delta(\mathcal X_L)$ denotes the probability simplex over $\mathcal X_L$. The local normalization $\sum_{w \in \mathcal V} \pi_T(w \mid h_k)=1$ ensures that the global distribution over sequences is properly normalized. Under an appropriate stopping rule, such as an end-of-sequence token, the chain rule gives $\sum_{m \in \mathcal X} P_T(m \mid p, I)=1$, so local normalization at every token position is enough to define a global probability law.

This message-level law is the central object of the paper. A next-token distribution tells us what may happen at one step, but questions about safety, censorship, phishing risk, deception, or factuality are usually questions about a complete message. By passing from $\pi_T(w_k\mid h_k)$ to $Q$ on $\mathcal X_L$, we can ask how much probability mass the model assigns to whole classes of messages, not only whether a single sampled output is acceptable.
 
\subsection{From Sequential Normalization to Global Gibbs Form}

The path-dependent normalization in the autoregressive product complicates sequence-level analysis, because the normalizing factor at position $k$ depends on the realized prefix. To expose the energy--entropy structure of generation, we consider a message-level Gibbs approximation that replaces the sequence of local normalization terms with a single global partition function. Specifically, for fixed $L$ we write
\begin{equation}
Q_T(m)
\approx
\frac{\exp\left(\frac{1}{T} U_L(m)\right)}
{Z_L(T)},
\label{eq:gibbs}
\end{equation}
where the global partition function is defined as
$Z_L(T)=\sum_{m \in \mathcal V^L}\exp\left(U_L(m)/T\right)$.
Under this approximation, the distribution over sequences takes the form of a Gibbs distribution: the trajectory-dependent normalization is replaced by a single scalar quantity that aggregates contributions from all sequences of length $L$. This approximation is most useful in regimes where aggregate behavior dominates, such as long-sequence limits or studies of typical generated messages, because fluctuations from individual normalization terms can often be absorbed into a global normalization. The quantity $U_L(m)$ is a sequence-level utility that scores semantic quality, relevance, coherence, or task value; the Gibbs form assigns exponentially larger probability mass to higher-utility messages, while $Z_L(T)$ normalizes this preference across all messages. Low temperature concentrates mass on high-utility messages, whereas high temperature gives relatively greater weight to entropy and diversity. This is the same mathematical structure that underlies maximum-entropy models and log-partition variational principles \cite{cover2006elements,rockafellar1970convex,donsker1975asymptotic}.

\subsection{Variational Interpretation}

We next identify the variational objective behind \eqref{eq:gibbs}: expected utility rewards useful messages, while entropy prevents collapse onto a small set of high-scoring sequences.

\begin{definition}[Entropy]
Let $Q \in \Delta(\mathcal X)$ be a probability distribution over the message space $\mathcal X$. The Shannon entropy of $Q$ is $H(Q):=-\sum_{m \in \mathcal X} Q(m)\log Q(m)$.
\end{definition}

\begin{definition}[Variational Objective]
For a fixed temperature $T>0$, define the functional $\mathcal J_L : \Delta(\mathcal X) \to \mathbb R$ by
\begin{equation}
\mathcal J_L(Q)
=
\frac{1}{T} \mathbb{E}_Q[U_L(m)]
+
H(Q).
\label{eq:variational-functional}
\end{equation}
\end{definition}

\begin{proposition}[Variational Characterization of the Gibbs Distribution]
\label{prop:variational}
The optimization problem $\sup_{Q \in \Delta(\mathcal X)} \mathcal J_L(Q)$
admits a unique solution given by the Gibbs distribution
\begin{equation}
Q^*(m)
=
\frac{\exp\left(\frac{1}{T} U_L(m)\right)}
{Z_L(T)},
\end{equation}
where the partition function is
$Z_L(T)=\sum_{m \in \mathcal V^L}\exp(U_L(m)/T)$.
\end{proposition}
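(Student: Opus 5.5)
The approach is the standard Gibbs variational (Donsker--Varadhan) argument: rewrite $\mathcal J_L$ as a negative KL divergence to $Q^*$ plus a constant. One interpretive point comes first. Although the supremum is written over $\Delta(\mathcal X)$, the utility $U_L$ and the partition function $Z_L(T)$ live on the finite set $\mathcal X_L=\mathcal V^L$. So I will read the problem as optimization over $\Delta(\mathcal X_L)$, equivalently over laws on $\mathcal X$ supported on length-$L$ messages. On this finite set every quantity is finite. Moreover $Z_L(T)\in(0,\infty)$, so $Q^*$ is a well-defined law with full support.

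\textbf{Key identity.} For any $Q\in\Delta(\mathcal X_L)$, substitute $U_L(m)/T=\log Q^*(m)+\log Z_L(T)$ into \eqref{eq:variational-functional}. Using the convention $0\log 0=0$, this gives
\begin{equation*}
\mathcal J_L(Q)=\sum_{m}Q(m)\log Q^*(m)+\log Z_L(T)-\sum_m Q(m)\log Q(m)=\log Z_L(T)-D_{\mathrm{KL}}(Q\,\|\,Q^*).
\end{equation*}
The manipulation is valid term by term because $Q^*(m)>0$ for every $m$. Hence terms with $Q(m)=0$ contribute zero, and no $\infty-\infty$ issue arises.

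\textbf{Conclusion from Gibbs' inequality.} By Gibbs' inequality (nonnegativity of relative entropy, e.g.\ \cite{cover2006elements}), $D_{\mathrm{KL}}(Q\|Q^*)\ge 0$, with equality if and only if $Q=Q^*$. I would prove this via Jensen's inequality applied to the strictly concave $\log$ on the support of $Q$:
\begin{equation*}
-D_{\mathrm{KL}}(Q\|Q^*)=\sum_{Q(m)>0}Q(m)\log\frac{Q^*(m)}{Q(m)}\le \log\sum_{Q(m)>0}Q^*(m)\le 0.
\end{equation*}
Equality in both steps forces two things. First, the ratio $Q^*(m)/Q(m)$ must be constant on the support of $Q$. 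Second, that support must carry full $Q^*$-mass. Since $Q^*$ has full support, this means the support of $Q$ is all of $\mathcal X_L$ and $Q=Q^*$. Consequently $\sup_Q\mathcal J_L(Q)=\log Z_L(T)$, the supremum is attained at $Q^*$, and it is attained only there.

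\textbf{Main obstacle.} The only delicate point is uniqueness and the handling of boundary distributions with zero entries. Equality analysis in Jensen's inequality settles this, as sketched above.

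As an alternative route, one could instead run Lagrange multipliers on the constraint $\sum_m Q(m)=1$, combined with strict concavity of $H$. That route would also need a separate argument excluding boundary maximizers, for instance that $\partial H/\partial Q(m)\to+\infty$ as $Q(m)\to 0$. The KL identity avoids this, so I would present it as the main proof.
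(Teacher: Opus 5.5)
Your argument is correct, and it takes a different route from the paper. The paper attaches a multiplier $c$ to $\sum_m Q(m)=1$, reads $\log Q^*(m)=U_L(m)/T-c$ off the first-order condition, and normalizes. You instead verify the Gibbs candidate through the exact identity $\mathcal J_L(Q)=\log Z_L(T)-D_{\mathrm{KL}}(Q\,\|\,Q^*)$ together with Gibbs' inequality.

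Your route gives existence, uniqueness, and the optimal value $\log Z_L(T)$ in one step; the paper proves the optimal value separately in its next proposition. It also handles laws with zero entries through the equality case of Jensen, and it needs neither compactness nor differentiability.

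Read as a necessary condition, the paper's derivation tacitly assumes that a maximizer exists and lies in the relative interior of the simplex. It also leaves uniqueness implicit. This is the gap you point to. The Lagrange route can, however, be closed without any boundary analysis by using the first-order condition as a sufficient condition. The candidate $Q^*$ is interior, and $\nabla\mathcal J_L(Q^*)=(\log Z_L(T)-1)\mathbf 1$ annihilates every direction $Q-Q^*$ within the simplex. Concavity then gives $\mathcal J_L(Q)\le\mathcal J_L(Q^*)$ for all $Q$, and strict concavity of $H$ gives uniqueness.

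The merit of the Lagrange computation is that it produces the form of the maximizer rather than verifying a guess. The paper reuses it in Theorem~\ref{thm:saddle-generator}(ii) and Proposition~\ref{prop:kl-solution}. Your identity transfers just as easily there, since the KL-regulated objective equals $(1+\lambda)\bigl[\log Z_{\mathrm{reg}}-D_{\mathrm{KL}}(Q\,\|\,Q^*)\bigr]$.

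Your restriction to $\Delta(\mathcal X_L)$ is the right reading of the statement. The paper makes the same restriction implicitly, since $Z_L(T)$ sums only over $\mathcal V^L$.
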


\begin{proof}
The result follows from standard convex duality arguments. Introducing a Lagrange multiplier $c$ for the normalization constraint $\sum_m Q(m)=1$, the first-order optimality condition yields
$\log Q^*(m)=U_L(m)/T-c$, which implies the stated form after normalization.
\end{proof}

The functional $\mathcal J(Q)$ balances expected utility and entropy; $T$ controls concentration on high-utility sequences versus dispersion across the message space. For students, the proposition can be read as a maximum-entropy principle: among all laws over messages, the Gibbs law is the unique law that gives high probability to useful messages while remaining as spread out as possible subject to that preference. The partition function $Z_L(T)$ is not merely a normalizing constant; its logarithm is the optimal value of the entropy-regularized generation problem.

\begin{definition}[Normalized Utility]
Let $m = (w_1,\dots,w_L) \in \mathcal X$ be a message of length $L$. The normalized utility is defined as $\bar U_L(m):=U_L(m)/L$.
\end{definition}

The normalized utility $\bar U_L(m)$ represents the average contribution per token. This normalization is essential when comparing sequences of different lengths and plays a central role in asymptotic analysis as $L \to \infty$.

\begin{proposition}[Optimal Value of the Variational Problem]
Let $Q^*$ denote the optimizer in Proposition~\ref{prop:variational}. Then the optimal value is $\sup_{Q \in \Delta(\mathcal X)} \mathcal J_L(Q)=\log Z_L(T)$. Consequently, the normalized optimal value per token is $\frac{1}{L}\log Z_L(T)$.
\end{proposition}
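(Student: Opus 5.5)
The plan is to evaluate $\mathcal J_L$ directly at the optimizer $Q^*$ from Proposition~\ref{prop:variational}, and then to confirm that no other law attains a larger value using a Gibbs-variational identity. Throughout, $Q$ is taken to be supported on the finite set $\mathcal X_L=\mathcal V^L$, where $U_L$ is defined. This keeps all sums finite, so no integrability issues arise.

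\textbf{Step 1: value at $Q^*$.} By Proposition~\ref{prop:variational}, $\log Q^*(m)=U_L(m)/T-\log Z_L(T)$. Substituting this into the entropy term gives
\begin{equation*}
H(Q^*)=-\sum_m Q^*(m)\Bigl(\tfrac{1}{T}U_L(m)-\log Z_L(T)\Bigr)=-\tfrac{1}{T}\mathbb E_{Q^*}[U_L]+\log Z_L(T).
\end{equation*}
Adding $\frac1T\mathbb E_{Q^*}[U_L]$, the utility terms cancel and $\mathcal J_L(Q^*)=\log Z_L(T)$.

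\textbf{Step 2: upper bound for arbitrary $Q$.} Rather than relying only on the existence argument in Proposition~\ref{prop:variational}, I will prove the exact identity
\begin{equation*}
\mathcal J_L(Q)=\log Z_L(T)-\mathrm{KL}(Q\|Q^*)
\end{equation*}
for every $Q\in\Delta(\mathcal X_L)$. This follows by expanding
\begin{equation*}
\mathrm{KL}(Q\|Q^*)=\sum_m Q(m)\log Q(m)-\sum_m Q(m)\bigl(U_L(m)/T-\log Z_L(T)\bigr),
\end{equation*}
which equals $-H(Q)-\frac1T\mathbb E_Q[U_L]+\log Z_L(T)$.

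Gibbs' inequality gives $\mathrm{KL}\ge0$, with equality iff $Q=Q^*$. Hence $\sup_Q\mathcal J_L(Q)=\log Z_L(T)$, and the supremum is attained. The main subtlety is zero-probability messages. Since $Q^*(m)>0$ for every $m$, $Q\ll Q^*$ always holds, and the convention $0\log 0=0$ handles zero-mass terms in $Q$. So this obstacle is mild.

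\textbf{Step 3: per-token value.} Dividing the optimal value by the fixed $L>0$ gives the normalized value $\frac1L\log Z_L(T)$ immediately. No asymptotic claim is made here, so nothing further is needed.
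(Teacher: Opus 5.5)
Your proof is correct. Step 1 is exactly the paper's computation: substitute $\log Q^*(m)=U_L(m)/T-\log Z_L(T)$ into $\mathcal J_L$ and the utility terms cancel.

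The difference lies in how you show that this value is the supremum. The paper simply cites the uniqueness claim of Proposition~\ref{prop:variational}. That proposition's own proof is only a Lagrange first-order condition, which by itself identifies a stationary point; global optimality and uniqueness are left to ``standard convex duality arguments.'' You instead prove the exact identity $\mathcal J_L(Q)=\log Z_L(T)-D_{\mathrm{KL}}(Q\,\|\,Q^*)$ and apply Gibbs' inequality. This buys you three things:
\begin{enumerate}
\item The argument is self-contained.
\item It re-proves the existence and uniqueness in Proposition~\ref{prop:variational}. Boundary points of the simplex cause no trouble, since $Q^*$ has full support on $\mathcal X_L$.
\item It gives the suboptimality gap of any $Q$ as an explicit KL divergence. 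This is the same identity the paper later invokes informally as ``the standard variational characterization of the log-partition function'' in Theorem~\ref{thm:saddle-generator}(iii).
\end{enumerate}
The paper's route is shorter only because it delegates this work to the earlier proposition.

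Your restriction to $\Delta(\mathcal X_L)$ is also the right reading of the statement. $Z_L(T)$ sums over $\mathcal V^L$ only, $U_L$ is defined only there, and on the countable space $\mathcal X$ the entropy term would not be controlled.
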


\begin{proof}
For the Gibbs distribution $Q^*(m)=\exp(U_L(m)/T)/Z_L(T)$, we have $\log Q^*(m)=U_L(m)/T-\log Z_L(T)$. Hence
\[
\mathcal J_L(Q^*)
=
\mathbb E_{Q^*}\!\left[\frac{1}{T}U_L(m)-\log Q^*(m)\right]
=
\log Z_L(T).
\]
Proposition~\ref{prop:variational} shows that $Q^*$ is the unique maximizer, so this value is the supremum. Dividing by $L$ gives the normalized value.
\end{proof}

\begin{definition}[Generative Free Energy Density]
Assuming the limit exists, define the generative free energy density at temperature $T>0$ as
\begin{equation}
\mathcal F(T)
:=
\lim_{L \to \infty}
\frac{1}{L} \log Z_L(T),
\end{equation}
where the partition function is given by
$Z_L(T)=\sum_{m \in \mathcal V^L}\exp(U_L(m)/T)$.
\end{definition}

The quantity $\frac{1}{L}\log Z_L(T)$ is the optimal entropy-regularized utility per token. When the limit exists, $\mathcal F(T)$ gives the exponential growth rate $Z_L(T)\asymp \exp(L\mathcal F(T))$ of the utility-weighted message count and measures the effective richness of high-utility sequences. Larger values indicate a richer set of useful messages; smaller values indicate scarcity or a sharper utility--diversity tradeoff.

\section{Regulator Model}

We introduce a regulator as a distributional object. It may represent a moderator, censor, safety classifier, deception detector, phishing filter, compliance auditor, or policy-enforcing mechanism, but it constrains the law of generated messages rather than only a realized sample. Thus a generator can respond to regulation by changing the probability of whole message classes, such as forbidden topics, misleading or evasive messages, or credential-seeking risk features. Let $\mathcal X$ be the message space and $\Delta(\mathcal X)$ the set of distributions on it. A regulator is a functional $\mathcal R:\Delta(\mathcal X)\to\mathbb R$, where $\mathcal R(Q)$ measures how far the generator law $Q$ deviates from prescribed regulated behavior.

\subsection{General Variational Form}

We model regulation as optimal discrimination, following variational views of adversarial discriminators and $f$-divergence estimation \cite{goodfellow2014generative,nowozin2016fgan,csiszar1967information}. Let $\Phi$ be a class of measurable scores $\phi:\mathcal X\to\mathbb R$. In applications, $\phi$ may score prohibited topics, inconsistency, hidden intent, credential requests, urgency cues, or link-risk features. Let $\psi:\mathbb R\to\mathbb R$ be proper, convex, and lower semicontinuous, so that it regularizes large or irregular scores. Given a regulated reference distribution $Q_{\mathrm{reg}}\in\Delta(\mathcal X)$, define
\begin{equation}
\mathcal R(Q)
=
\sup_{\phi \in \Phi}
\left\{
\mathbb E_Q[\phi(m)]
-
\mathbb E_{Q_{\mathrm{reg}}}[\psi(\phi(m))]
\right\}.
\label{eq:regulator-general}
\end{equation}
For fixed $Q$, the regulator selects the score that best separates samples from $Q$ and samples from $Q_{\mathrm{reg}}$. If the two laws are close, no score achieves much separation and $\mathcal R(Q)$ is small; if they differ, the optimal discriminator has large value. Convex duality turns this testing problem into a divergence penalty, linking operational detection to analytic optimization.

\subsubsection{Convex Duality and Divergence Representation}

Convex duality gives the link between the variational regulator and divergence measures. The best test for separating two probability laws is governed by their likelihood ratio, and convex losses convert that separation problem into a divergence. Thus, with a sufficiently rich score class, the regulator induces an $f$-divergence penalty between the generator distribution and the regulated reference law \cite{csiszar1967information,rockafellar1970convex,nowozin2016fgan}.

\begin{definition}[Convex Conjugate]
Let $\psi : \mathbb R \to \mathbb R \cup \{+\infty\}$ be a proper, convex, and lower semicontinuous function. The convex conjugate $\psi^* : \mathbb R \to (-\infty,+\infty]$ is defined by $\psi^*(u)=\sup_{t\in\mathbb R}\{ut-\psi(t)\}$.
\end{definition}

The convex conjugate $\psi^*$ provides the dual representation used below. The main representation result is the following.

\begin{theorem}[Divergence Representation of the Regulator]
\label{thm:regulator-divergence}
Let $\Phi$ be a class of bounded measurable functions that is rich enough to approximate any bounded measurable function on $\mathcal X$. Let $\psi$ be proper, convex, and lower semicontinuous. 

Then, for any $Q \in \Delta(\mathcal X)$ such that $Q \ll Q_{\mathrm{reg}}$, the regulator admits the representation
\begin{equation}
\mathcal R(Q)
=
\mathbb E_{Q_{\mathrm{reg}}}
\left[
\psi^*\!\left(\frac{dQ}{dQ_{\mathrm{reg}}}(m)\right)
\right].
\label{eq:f-divergence}
\end{equation}
\end{theorem}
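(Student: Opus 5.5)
The plan is to rewrite the objective in \eqref{eq:regulator-general} as a single expectation under $Q_{\mathrm{reg}}$ and then optimize pointwise. Since $Q\ll Q_{\mathrm{reg}}$, write $r(m)=\frac{dQ}{dQ_{\mathrm{reg}}}(m)$, so that $\mathbb E_Q[\phi]=\mathbb E_{Q_{\mathrm{reg}}}[r\phi]$ for every bounded measurable $\phi$. The functional being maximized then becomes
\[
\mathbb E_{Q_{\mathrm{reg}}}\big[\, r(m)\phi(m)-\psi(\phi(m))\,\big].
\]
By the definition of the convex conjugate, the integrand is bounded above by $\psi^*(r(m))$ for every $m$. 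Taking expectations and then the supremum over $\phi\in\Phi$ gives the upper bound $\mathcal R(Q)\le \mathbb E_{Q_{\mathrm{reg}}}[\psi^*(r)]$. This holds with the convention that the right side may equal $+\infty$. Measurability of $\psi^*\circ r$ is automatic, because $\psi^*$ is convex and lower semicontinuous, hence Borel.

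For the lower bound, I would build near-optimal scores pointwise and then approximate them by elements of $\Phi$. In the finite setting $\mathcal X_L$ used throughout, this is elementary. For each $m$ with $Q_{\mathrm{reg}}(m)>0$ and each $\varepsilon>0$, choose $t_m$ with
\[
r(m)t_m-\psi(t_m)\ge \min\{\psi^*(r(m)),1/\varepsilon\}-\varepsilon .
\]
This is possible because $\psi^*(r(m))$ is defined as a supremum, and the cap $1/\varepsilon$ handles the case where it is infinite. The $t_m$ can be taken in $\operatorname{dom}\psi$, so that $\psi(t_m)<\infty$. Set $\phi_\varepsilon(m)=t_m$ on the support of $Q_{\mathrm{reg}}$ and $\phi_\varepsilon(m)=t_0$ off it, for some fixed $t_0\in\operatorname{dom}\psi$; off the support both terms vanish since $Q\ll Q_{\mathrm{reg}}$. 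This $\phi_\varepsilon$ is bounded, so by the richness assumption on $\Phi$ we may take it, or a sufficiently close approximant, in $\Phi$. Summing over $m$ and letting $\varepsilon\to0$ gives $\mathcal R(Q)\ge\mathbb E_{Q_{\mathrm{reg}}}[\psi^*(r)]$.

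In a general measurable space, I would use a measurable selection argument instead. One route is to restrict the supremum in $\psi^*$ to a countable dense set of $t$ in $\operatorname{dom}\psi$, which is valid by lower semicontinuity. This gives measurable $\varepsilon$-optimal scores $\phi_n$, which I would truncate to be bounded. Monotone convergence applied to the truncated integrands then passes to the limit.

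The main obstacle is the approximation step: replacing the pointwise optimizer by an element of $\Phi$ without losing value. This requires convergence of $\mathbb E_{Q_{\mathrm{reg}}}[\psi(\phi_k)]$ along the approximating sequence, and mere lower semicontinuity of $\psi$ only gives the wrong-direction inequality. I would handle it in two ways.
\begin{itemize}[leftmargin=*]
\item I would choose the $t_m$ in the interior of $\operatorname{dom}\psi$, where $\psi$ is continuous. This is possible because the supremum defining $\psi^*$ is unchanged if it is taken over the relative interior of the domain of a closed convex function.
\item I would interpret "approximate" as uniform approximation. Uniform convergence of bounded functions with values in a compact subset of the interior, combined with continuity of $\psi$ there, gives convergence of both expectations by bounded convergence.
\end{itemize}
On the finite space $\mathcal X_L$ this subtlety disappears, since $\Phi$ can be taken to contain all functions.
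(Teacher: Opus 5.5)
Your proposal is correct and is essentially the paper's argument: change measure to $r=dQ/dQ_{\mathrm{reg}}$, then maximize $r(m)t-\psi(t)$ pointwise to get $\psi^*(r(m))$. The difference is that you actually justify the interchange of supremum and expectation (Fenchel--Young for $\le$, $\varepsilon$-optimal selection plus approximation for $\ge$), which the paper simply asserts from the richness of $\Phi$. One small correction: because the paper takes $\psi$ real-valued, it is continuous on $\mathbb R$, so your interior-of-domain precautions hold automatically. It is that continuity, not lower semicontinuity (which points the wrong way, as you note yourself), that justifies restricting the supremum to a countable dense set of $t$.
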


\begin{proof}
Let $Q \ll Q_{\mathrm{reg}}$ and define the Radon--Nikodym derivative $r(m):=dQ/dQ_{\mathrm{reg}}(m)$.
For any $\phi \in \Phi$, we rewrite the first term in \eqref{eq:regulator-general} as
\begin{equation}
\mathbb E_Q[\phi]
=
\mathbb E_{Q_{\mathrm{reg}}}
\big[ \phi(m) r(m) \big].
\end{equation}
Substituting into \eqref{eq:regulator-general}, we obtain
\begin{equation}
\mathcal R(Q)
=
\sup_{\phi \in \Phi}
\mathbb E_{Q_{\mathrm{reg}}}
\Big[
\phi(m) r(m) - \psi(\phi(m))
\Big].
\end{equation}

Since $\Phi$ is sufficiently rich, the supremum over $\phi$ can be taken pointwise. That is, for each $m \in \mathcal X$, we optimize independently:
\begin{equation}
\sup_{t \in \mathbb R}
\big\{ t\, r(m) - \psi(t) \big\}
=
\psi^*\big(r(m)\big).
\end{equation}

Therefore,
\begin{equation}
\mathcal R(Q)
=
\mathbb E_{Q_{\mathrm{reg}}}
\left[
\psi^*\big(r(m)\big)
\right]
=
\mathbb E_{Q_{\mathrm{reg}}}
\left[
\psi^*\!\left(\frac{dQ}{dQ_{\mathrm{reg}}}(m)\right)
\right],
\end{equation}
which establishes \eqref{eq:f-divergence}.
\end{proof}

The theorem shows that the regulator is an $f$-divergence with generator function $f = \psi^*$. Intuitively, the likelihood ratio $dQ/dQ_{\mathrm{reg}}$ identifies where the generator puts more mass than the regulated reference. The convex function $\psi^*$ then decides how severely to penalize these relative-density deviations. Different choices of $\psi$ therefore correspond to different regulatory attitudes toward mismatch: some penalize moderate deviations gently, while others punish concentrated deviations sharply.

\begin{corollary}[Optimal Discriminator]
\label{cor:optimal-discriminator}
Under the assumptions of Theorem~\ref{thm:regulator-divergence}, suppose further that $\psi$ is differentiable and strictly convex. Then the supremum in \eqref{eq:regulator-general} is attained by a function $\phi^* \in \Phi$ satisfying
\begin{equation}
\phi^*(m)
=
\left(\psi'\right)^{-1}
\!\left(
\frac{dQ}{dQ_{\mathrm{reg}}}(m)
\right)
\quad
\text{for $Q_{\mathrm{reg}}$-almost every } m.
\end{equation}
\end{corollary}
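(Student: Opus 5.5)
The plan is to reuse the pointwise reduction from the proof of Theorem~\ref{thm:regulator-divergence} and then solve each scalar problem explicitly. Write $r(m)=dQ/dQ_{\mathrm{reg}}(m)$. As in that proof, the objective in \eqref{eq:regulator-general} equals
\[
\mathbb E_{Q_{\mathrm{reg}}}\big[\phi(m)r(m)-\psi(\phi(m))\big],
\]
and this is bounded above by $\mathbb E_{Q_{\mathrm{reg}}}[\psi^*(r(m))]$, since $t\,r-\psi(t)\le\psi^*(r)$ for every $t$. Therefore any $\phi\in\Phi$ that attains the pointwise supremum $\sup_t\{t\,r(m)-\psi(t)\}$ for $Q_{\mathrm{reg}}$-a.e.\ $m$ attains $\mathcal R(Q)$. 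The problem thus reduces to characterizing the scalar maximizer and checking that the resulting function lies in $\Phi$.

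For fixed $m$, the map $g_m(t)=t\,r(m)-\psi(t)$ is concave, because $\psi$ is convex. It is strictly concave because $\psi$ is strictly convex, and it is differentiable. A point $t^*$ maximizes $g_m$ if and only if $g_m'(t^*)=0$, that is, $\psi'(t^*)=r(m)$. Sufficiency holds because a stationary point of a concave differentiable function is a global maximizer; strict concavity gives uniqueness. Strict convexity makes $\psi'$ strictly increasing, so $\psi'$ is injective and $(\psi')^{-1}$ is well defined on its range. Hence $t^*=(\psi')^{-1}(r(m))$, and setting $\phi^*(m)=(\psi')^{-1}(r(m))$ gives the claimed formula. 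Its value $g_m(t^*)=\psi^*(r(m))$ recovers \eqref{eq:f-divergence}, which confirms that the supremum is attained.

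\textbf{Main obstacle.} The main difficulty is attainment: we need $r(m)$ to lie in the range of $\psi'$, and we need $\phi^*\in\Phi$. The range of $\psi'$ is an interval, because a derivative has the Darboux property. I would make the implicit standing assumption explicit: $r(m)\in\operatorname{range}(\psi')$ for $Q_{\mathrm{reg}}$-a.e.\ $m$. This holds automatically if $\psi$ is essentially smooth with $\psi'$ onto $(0,\infty)$, or if $\psi'$ maps onto all of $\mathbb R$, as for the KL choice $\psi(t)=e^{t-1}$. For the KL choice, $\phi^*=1+\log r$.

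For membership in $\Phi$, I would argue on the finite space $\mathcal X_L$, which is the setting of all finite-length results. There $\phi^*$ takes finitely many finite values on the support of $Q_{\mathrm{reg}}$, so it is bounded. Richness of $\Phi$ (approximating any bounded function) then places $\phi^*$ in $\Phi$, or in its closure. In the closure case, the attained supremum is understood after passing to that closure. Off the support of $Q_{\mathrm{reg}}$ the values of $\phi^*$ are irrelevant, which is why the formula is stated only $Q_{\mathrm{reg}}$-almost everywhere.
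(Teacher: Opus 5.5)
Your proposal is correct and follows the paper's proof: both reduce the supremum in \eqref{eq:regulator-general} pointwise to $\sup_{t}\{t\,r(m)-\psi(t)\}$ and solve the first-order condition $\psi'(t^*)=r(m)$, while your Fenchel--Young upper bound, the standing hypothesis $r(m)\in\operatorname{range}(\psi')$ for $Q_{\mathrm{reg}}$-a.e.\ $m$, and the closure caveat for $\phi^*\in\Phi$ supply attainment details that the paper leaves implicit. One small correction to your side remark: $\psi(t)=e^{t-1}$ has $\psi'$ onto $(0,\infty)$, not $\mathbb R$, and even in that case the hypothesis is not automatic, because it fails wherever $Q(m)=0<Q_{\mathrm{reg}}(m)$ (there $r(m)=0$, the supremum is not attained, and the formula gives $\phi^*=-\infty$), so your added assumption is genuinely needed.
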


\begin{proof}
From the proof of Theorem~\ref{thm:regulator-divergence}, the optimization over $\phi$ reduces pointwise to
\begin{equation}
\sup_{t \in \mathbb R}
\{ t\, r(m) - \psi(t) \},
\end{equation}
where $r(m) = \frac{dQ}{dQ_{\mathrm{reg}}}(m)$.

Since $\psi$ is differentiable and strictly convex, the supremum is achieved at the unique maximizer $t^*(m)$ satisfying the first-order optimality condition
\begin{equation}
r(m) - \psi'(t^*(m)) = 0.
\end{equation}
Solving for $t^*(m)$ yields
\begin{equation}
t^*(m)
=
\left(\psi'\right)^{-1}(r(m)).
\end{equation}
Setting $\phi^*(m) = t^*(m)$ gives the desired result.
\end{proof}

The optimal discriminator $\phi^*$ depends explicitly on the likelihood ratio $\frac{dQ}{dQ_{\mathrm{reg}}}$ and therefore extracts precisely the evidence that distinguishes the generated distribution from the regulated reference. This provides a clear interpretation of the regulator as an optimal statistical test between two message laws, rather than as an ad hoc rule applied after generation.

 \subsubsection{Optimal Regulator}

The previous corollary identifies the optimal regulator explicitly: under differentiability and strict convexity, the best scoring function is a monotone transform of the likelihood ratio $dQ/dQ_{\mathrm{reg}}$. Thus the regulator depends only on the statistical evidence that separates the generator distribution from the regulated reference.

\begin{definition}[$f$-Divergence]
Let $f : \mathbb R_+ \to \mathbb R$ be a convex function with $f(1)=0$. The $f$-divergence between $Q$ and $Q_{\mathrm{reg}}$ is defined as
\begin{equation}
D_f(Q \,\|\, Q_{\mathrm{reg}})
=
\mathbb E_{Q_{\mathrm{reg}}}
\left[
f\!\left(\frac{dQ}{dQ_{\mathrm{reg}}}(m)\right)
\right].
\end{equation}
\end{definition}

\begin{proposition}[Regulator as a Divergence]
Under the assumptions of Theorem~\ref{thm:regulator-divergence}, let $f(u):=\psi^*(u)-\psi^*(1)$. Then
\begin{equation}
\mathcal R(Q)
=
D_f(Q\,\|\,Q_{\mathrm{reg}})
+
\psi^*(1).
\end{equation}
In particular, when $\psi^*(1)=0$, the regulator is exactly the $f$-divergence generated by $f=\psi^*$.
\end{proposition}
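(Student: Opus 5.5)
The plan is to start from Theorem~\ref{thm:regulator-divergence} and add and subtract the constant $\psi^*(1)$ inside the expectation. For $Q\ll Q_{\mathrm{reg}}$ with $r=dQ/dQ_{\mathrm{reg}}$, the theorem gives $\mathcal R(Q)=\mathbb E_{Q_{\mathrm{reg}}}[\psi^*(r(m))]$. Writing $\psi^*(r)=f(r)+\psi^*(1)$, using that $Q_{\mathrm{reg}}$ is a probability measure, and applying linearity of expectation gives
\[
\mathcal R(Q)=\mathbb E_{Q_{\mathrm{reg}}}[f(r(m))]+\psi^*(1)=D_f(Q\,\|\,Q_{\mathrm{reg}})+\psi^*(1).
\]

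Before this identity can be read through the definition of $D_f$, I will check that $f$ is admissible. Since $\psi^*$ is a supremum of affine functions of $u$, it is convex, so $f=\psi^*-\psi^*(1)$ is convex as well. By construction $f(1)=0$.

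The one genuine point to verify is that $\psi^*(1)$ is finite. If it is, the subtraction is legitimate and $f$ is real-valued at $1$. My first attempt is to note that the statement implicitly assumes this. Alternatively I would argue that $\psi^*(1)=\sup_t\{t-\psi(t)\}$ is bounded below, since $\psi$ is proper, and finite whenever $\mathcal R(Q_{\mathrm{reg}})$ is finite. Indeed, applying Theorem~\ref{thm:regulator-divergence} with $Q=Q_{\mathrm{reg}}$, so that $r\equiv1$, gives $\mathcal R(Q_{\mathrm{reg}})=\psi^*(1)$. This also gives the natural interpretation that $\psi^*(1)$ is the baseline regulator value at the reference law.

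I also need the split of the expectation to be valid, meaning we never get $\infty-\infty$. This holds because $\psi^*(1)$ is a finite constant, so the expectation of $f(r)$ exists, possibly as $+\infty$, exactly when that of $\psi^*(r)$ does.

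The "in particular" clause is then immediate. If $\psi^*(1)=0$, then $f=\psi^*$, and the displayed identity reduces to $\mathcal R(Q)=D_{\psi^*}(Q\,\|\,Q_{\mathrm{reg}})$. On the finite message spaces $\mathcal X_L$ used in this paper, all expectations are finite sums, so integrability is not an issue there. The only substantive point in the whole argument is the finiteness of $\psi^*(1)$.
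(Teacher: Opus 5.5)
Your proposal is correct and follows essentially the same route as the paper: invoke Theorem~\ref{thm:regulator-divergence}, write $\psi^*(u)=f(u)+\psi^*(1)$, and pull the constant out of the $Q_{\mathrm{reg}}$-expectation. Your extra checks are valid refinements rather than a different argument: the convexity of $f$, and above all the finiteness of $\psi^*(1)=\mathcal R(Q_{\mathrm{reg}})$, which the paper leaves implicit even though it fails for, e.g., $\psi(t)=2t$, where $\psi^*(1)=+\infty$.
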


\begin{proof}
By Theorem~\ref{thm:regulator-divergence},
\[
\mathcal R(Q)
=
\mathbb E_{Q_{\mathrm{reg}}}
\left[
\psi^*\!\left(\frac{dQ}{dQ_{\mathrm{reg}}}(m)\right)
\right].
\]
Writing $\psi^*(u)=f(u)+\psi^*(1)$ gives
\[
\mathcal R(Q)
=
\mathbb E_{Q_{\mathrm{reg}}}
\left[
f\!\left(\frac{dQ}{dQ_{\mathrm{reg}}}(m)\right)
\right]
+
\psi^*(1)
=
D_f(Q\,\|\,Q_{\mathrm{reg}})+\psi^*(1).
\]
Since $f(1)=0$, $f$ is a normalized $f$-divergence generator.
\end{proof}

The representation \eqref{eq:f-divergence} shows that the regulator measures the discrepancy between $Q$ and $Q_{\mathrm{reg}}$ through a convex function of the likelihood ratio $\frac{dQ}{dQ_{\mathrm{reg}}}$. It therefore penalizes deviations from regulated behavior through the relative density of the two distributions. Operationally, this means that a message region is risky for the generator when it is common under $Q$ but uncommon under $Q_{\mathrm{reg}}$; such a region is precisely where an optimal discriminator has evidence that the generator has departed from the regulated reference.

\subsubsection{Examples of Divergence Regulators}

The regulator functional $\mathcal R(Q)$ admits explicit and tractable forms for several important choices of the convex function $\psi$. These cases are of particular interest because they lead to well-known divergence measures and admit closed-form expressions for both the regulator and the optimal discriminator.

\emph{Kullback--Leibler divergence.}
Consider the choice $\psi(t)=e^t$. Its convex conjugate is $\psi^*(u)=u\log u-u+1$, which is the generator of the Kullback--Leibler (KL) divergence, a central discrepancy measure in information theory and statistical inference \cite{cover2006elements}. Substituting into \eqref{eq:f-divergence}, we obtain
\begin{equation}
\mathcal R(Q)
=
\mathbb E_{Q_{\mathrm{reg}}}
\left[
\frac{dQ}{dQ_{\mathrm{reg}}}(m)
\log \frac{dQ}{dQ_{\mathrm{reg}}}(m)
\right]
=
D_{\mathrm{KL}}(Q \,\|\, Q_{\mathrm{reg}}).
\end{equation}

In this case, the optimal discriminator from Corollary~\ref{cor:optimal-discriminator} is $\phi^*(m)=\log(dQ/dQ_{\mathrm{reg}})(m)$. This has a clear statistical interpretation: the optimal discriminator assigns to each message the log-likelihood ratio between $Q$ and $Q_{\mathrm{reg}}$, the canonical statistic for distinguishing the two distributions. From a computational perspective, the KL case is particularly attractive because it is convex, admits efficient estimation in many settings, and leads to a modified Gibbs distribution that incorporates a log-likelihood penalty.

\emph{Pearson $\chi^2$ divergence.}
Consider the quadratic choice $\psi(t)=\frac{1}{2}t^2$. Its convex conjugate is $\psi^*(u)=\frac{1}{2}u^2$.
Substituting into \eqref{eq:f-divergence}, we obtain
\begin{equation}
\mathcal R(Q)
=
\frac{1}{2}
\mathbb E_{Q_{\mathrm{reg}}}
\left[
\left(\frac{dQ}{dQ_{\mathrm{reg}}}(m)\right)^2
\right].
\end{equation}

Up to normalization, this corresponds to the Pearson $\chi^2$ divergence:
\begin{equation}
\chi^2(Q \,\|\, Q_{\mathrm{reg}})
=
\mathbb E_{Q_{\mathrm{reg}}}
\left[
\left(\frac{dQ}{dQ_{\mathrm{reg}}}(m) - 1\right)^2
\right].
\end{equation}

The optimal discriminator in this case is the likelihood ratio itself. This regulator penalizes large deviations of $Q$ from $Q_{\mathrm{reg}}$ more aggressively than the KL divergence, as the penalty grows quadratically in the likelihood ratio. As a result, it is particularly sensitive to outliers or rare events where $Q$ places excessive mass relative to $Q_{\mathrm{reg}}$.

 \emph{Jensen--Shannon divergence and the GAN regulator.}
Consider the choice $\psi(t)=\log(1+e^t)$.
This function is convex and corresponds to the logistic loss commonly used in binary classification. Its convex conjugate is given by
\begin{equation}
\psi^*(u)
=
u \log u + (1-u)\log(1-u),
\qquad u \in (0,1),
\end{equation}
and $+\infty$ otherwise.

With the standard GAN reparameterization and equal class priors, this choice leads to the Jensen--Shannon (JS) divergence, up to additive constants:
\begin{equation}
\mathcal R(Q)
\;=\;
\mathrm{JS}(Q \,\|\, Q_{\mathrm{reg}}),
\end{equation}
which is a symmetrized and smoothed version of the Kullback--Leibler divergence.

In this case, it is convenient to parametrize the discriminator through the probability function $D(m)=e^{\phi(m)}/(1+e^{\phi(m)})$.
The optimal discriminator takes the well-known form
\begin{equation}
D^*(m)
=
\frac{Q(m)}{Q(m) + Q_{\mathrm{reg}}(m)},
\end{equation}
and equivalently $\phi^*(m)=\log(Q(m)/Q_{\mathrm{reg}}(m))$.

This case has an important interpretation. The regulator corresponds to the objective used in generative adversarial networks (GANs): the discriminator attempts to distinguish samples from $Q$ and $Q_{\mathrm{reg}}$, and the Jensen--Shannon divergence measures how well the two distributions can be separated by an optimal classifier \cite{goodfellow2014generative,nowozin2016fgan}. Compared to the KL divergence, the JS divergence is symmetric and bounded:
$0 \le \mathrm{JS}(Q \,\|\, Q_{\mathrm{reg}}) \le \log 2$. This boundedness makes it numerically stable in many applications, although it may become insensitive when the supports of $Q$ and $Q_{\mathrm{reg}}$ are nearly disjoint, which can lead to optimization difficulties.

\subsubsection{Feature-Based Scoring and Detection}

In practical systems, regulation often proceeds by first extracting features from a message and then applying a decision rule based on these features. This motivates a two-stage architecture consisting of feature extraction and scoring followed by detection. The abstraction is deliberately broad: features may be lexical, syntactic, semantic, retrieval-based, classifier-based, or policy-specific, and the theory only requires that they define measurable summaries of complete messages. Let $\mathcal F : \mathcal X \to \mathbb R^d$ denote a feature mapping that associates each message $m \in \mathcal X$ with a feature vector $z=\mathcal F(m)$. The feature map $\mathcal F$ may depend on external information sources, such as knowledge bases or retrieval systems. Two important instances are as follows.

\emph{(i) Keyword-based features.}  
Let $\mathcal K \subseteq \mathcal V$ be a set of regulated tokens. Define
\begin{equation}
\mathcal F_{\mathrm{key}}(m)
=
\frac{1}{L} \sum_{k=1}^L \mathbf{1}\{w_k \in \mathcal K\}.
\end{equation}
This yields a scalar feature representing the fraction of regulated tokens.

\emph{(ii) Knowledge-based (RAG) features.}   
Let $\mathcal D$ denote an external knowledge base, such as a document corpus, database, or knowledge graph. Let $\mathrm{Retrieve}:\mathcal X\times\mathcal D\to 2^{\mathcal D}$ be a retrieval operator that, given a message $m \in \mathcal X$, returns a finite subset $\mathcal D_m:=\mathrm{Retrieve}(m,\mathcal D)\subseteq\mathcal D$ of documents or knowledge items deemed relevant to $m$.

We define the knowledge-based feature map $\mathcal F_{\mathrm{rag}} : \mathcal X \to \mathbb R^d$ by
\begin{equation}
\mathcal F_{\mathrm{rag}}(m)
=
g(m, \mathcal D_m),
\label{eq:rag-feature}
\end{equation}
where $g : \mathcal X \times 2^{\mathcal D} \to \mathbb R^d$ is a measurable function that extracts features by comparing the message $m$ with the retrieved knowledge $\mathcal D_m$.

The function $g$ may encode various forms of semantic or policy-relevant evaluation. For instance, it may measure factual consistency by comparing statements in $m$ with retrieved evidence, policy compliance by detecting alignment or violation relative to known rules, or semantic similarity by computing embeddings and distances between $m$ and elements of $\mathcal D_m$. In this formulation, the feature map $\mathcal F_{\mathrm{rag}}(m)$ depends not only on the message itself but also on external information retrieved from $\mathcal D$, distinguishing knowledge-based regulation from purely content-based methods. In a common special case, the feature extractor aggregates pairwise scores between the message and retrieved documents. For example, if $\mathcal D_m=\{d_1,\dots,d_K\}$, one may define
\begin{equation}
g(m,\mathcal D_m)
=
\frac{1}{K} \sum_{i=1}^K s(m,d_i),
\end{equation}
where $s : \mathcal X \times \mathcal D \to \mathbb R^d$ is a similarity or consistency score. This makes explicit how retrieval augments the feature representation through external knowledge.

Given the feature representation $z = \mathcal F(m) \in \mathbb R^d$, the regulator assigns a scalar score to the message through a function $s : \mathbb R^d \to \mathbb R$, yielding $S(m)=s(\mathcal F(m))$. The scoring function aggregates multiple features into a single quantity that measures the degree to which the message exhibits regulated characteristics. It may be linear, nonlinear, or learned from data; for example, a linear rule takes the form $S(m)=\langle \theta,\mathcal F(m)\rangle$ for some parameter vector $\theta\in\mathbb R^d$, while nonlinear choices can capture feature interactions. Statistically, $S(m)$ is a low-dimensional summary statistic for the message, reducing the complexity of $\mathcal X$ to a one-dimensional quantity used for subsequent decision-making. Given the score $S(m)$, the regulator applies a decision rule to determine whether the message violates regulatory constraints. A canonical choice is a threshold-based rule
\begin{equation}
\delta(m)
=
\mathbf{1}\{ S(m) \ge \eta \},
\end{equation}
where $\eta \in \mathbb R$ is a threshold parameter.
This rule partitions the message space into accepted and flagged regions, and the threshold $\eta$ controls the detector's operating point by trading off false positives and false negatives.

We next present a formal equivalence between the feature-based detection problem and the variational regulator in \eqref{eq:regulator-general}. The result shows that the regulator arises as the dual representation of a convex risk minimization problem, thereby connecting the decision-theoretic interpretation of regulation with the divergence-based formulation above.

\begin{definition}[Binary Decision Problem]
Let $Q_{\mathrm{reg}}, Q \in \Delta(\mathcal X)$ be two probability measures. A decision rule is a measurable function $\delta : \mathcal X \to [0,1]$. The associated risk is defined as
\begin{equation}
\mathcal L(\delta)
=
\mathbb E_{Q_{\mathrm{reg}}}[\ell_0(\delta(m))]
+
\mathbb E_Q[\ell_1(\delta(m))],
\label{eq:risk-formal}
\end{equation}
where $\ell_0,\ell_1 : [0,1] \to \mathbb R$ are convex loss functions.
\end{definition}

\begin{assumption}
The losses $\ell_0$ and $\ell_1$ are convex, proper, and differentiable, and induce a strictly proper scoring rule.
\end{assumption}

\begin{proposition}[Score-Based Reformulation]
\label{prop:score-reformulation-clean}
Let $Q_{\mathrm{reg}}, Q \in \Delta(\mathcal X)$ and consider the risk \eqref{eq:risk-formal}
where $\ell_0,\ell_1 : [0,1] \to \mathbb R$ are convex functions. 
Assume that $\ell_0$ is strictly convex and differentiable. Then there exists a convex function $\psi : \mathbb R \to \mathbb R$ such that
\begin{equation}
\inf_{\delta} \mathcal L(\delta)
=
C
-
\sup_{\phi}
\left\{
\mathbb E_Q[\phi(m)]
-
\mathbb E_{Q_{\mathrm{reg}}}[\psi(\phi(m))]
\right\},
\end{equation}
for some constant $C$ independent of $\phi$.
\end{proposition}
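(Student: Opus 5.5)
Throughout, $\phi$ plays the role of the regulator's score from \eqref{eq:regulator-general}; we let $\phi$ range over all measurable real-valued functions. The plan is to reparametrize the decision rule $\delta$ by a real-valued score through the derivative of $\ell_0$. Then we read off $\psi$ as a shifted copy of $\ell_0$, composed with the inverse of that reparametrization.

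Since $\ell_0$ is strictly convex and differentiable on $[0,1]$, its derivative $\ell_0'$ is strictly increasing and continuous. It is therefore a bijection from $[0,1]$ onto an interval $I=[\ell_0'(0),\ell_0'(1)]$. The proof rests on one change of variables, which is the key step. Fix a constant $C$ to be chosen. The goal is to write the pointwise risk integrand $\ell_0(\delta(m))\,Q_{\mathrm{reg}}(m)+\ell_1(\delta(m))\,Q(m)$ as
\[
\bigl(C' - \phi(m)\,Q(m)\bigr) + \psi(\phi(m))\,Q_{\mathrm{reg}}(m),
\]
with $\phi$ a function of $\delta$ and $\psi$ convex.

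The cleanest way to get a term \emph{linear} in $\phi$ under $Q$ is to absorb $\ell_1$ into the score. Set $\phi(m):=-\ell_1(\delta(m))$. When $\ell_1$ is strictly monotone (for instance strictly decreasing, the usual case for a loss on the $Q$ class), this map $\delta\mapsto\phi$ is a bijection from $[0,1]$ onto an interval $J$. Then $\ell_0(\delta)=\ell_0(\ell_1^{-1}(-\phi))=:\psi(\phi)$. This gives exactly
\[
\mathcal L(\delta)=-\Bigl(\mathbb E_Q[\phi]-\mathbb E_{Q_{\mathrm{reg}}}[\psi(\phi)]\Bigr).
\]
Hence $\inf_\delta \mathcal L = -\sup_\phi\{\cdots\}$ with $C=0$. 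One extends $\psi$ by $+\infty$ outside $J$, so the supremum over all measurable $\phi$ coincides with the supremum over the admissible scores.

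The main obstacle is convexity of $\psi=\ell_0\circ\ell_1^{-1}\circ(-\,\cdot)$, which is not automatic from convexity of $\ell_0$ and $\ell_1$ alone. Here I would use the strict properness assumption. For a strictly proper composite loss, the pair $(\ell_0,\ell_1)$ satisfies the Savage relation $\ell_0'(\delta)/(-\ell_1'(\delta)) = \delta/(1-\delta)$, up to a positive weight function. Differentiating twice along the reparametrization gives
\[
\psi'(\phi)=\frac{\ell_0'(\delta)}{-\ell_1'(\delta)}=\frac{\delta}{1-\delta},
\]
which is increasing in $\delta$ and hence in $\phi$. Thus $\psi$ is convex, and strictly so because $\ell_0$ is strictly convex.

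If this route fails for general $\ell_1$, the fallback is the alternative parametrization $\phi=\ell_0'(\delta)$. In that case $\psi$ is the Legendre conjugate of $\ell_0$, which is convex by construction. The residual $\ell_1$ term is then handled by adding and subtracting a constant $C$ that comes from evaluating the risk at a reference rule. This is where the constant $C$ independent of $\phi$ in the statement would enter.

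Finally, I would check that the infimum over $\delta$ and the supremum over $\phi$ correspond. The change of variables is a bijection between measurable rules $\delta:\mathcal X\to[0,1]$ and measurable scores with values in the effective domain of $\psi$, so the extremal values agree. This mirrors the pointwise reduction used in Theorem~\ref{thm:regulator-divergence}.
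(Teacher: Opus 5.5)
\paragraph{Assessment.} Your main route is correct where it applies, but it does not prove the proposition as stated.

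\paragraph{What works.} With $\phi=-\ell_1(\delta)$ and $\psi=\ell_0\circ\ell_1^{-1}\circ(-\,\cdot)$ on $J$, you get the exact identity $\mathcal L(\delta)=\mathbb E_{Q_{\mathrm{reg}}}[\psi(\phi(m))]-\mathbb E_Q[\phi(m)]$ for every rule. Under strict properness, the Savage relation gives $\psi'=\delta/(1-\delta)$, hence convexity. Properness together with strict convexity of $\ell_0$ even forces $\ell_1$ to be strictly decreasing, so your monotonicity proviso is then automatic.

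\paragraph{The gap.} The proposition does not assume properness or monotonicity of $\ell_1$. Its hypotheses are only convexity of $\ell_0,\ell_1$ and strict convexity and differentiability of $\ell_0$, and the paper's proof uses no further property of the losses. In that generality $\ell_1$ need not be injective. Even when $\ell_1$ is strictly decreasing, your composite can fail to be convex: $\ell_0(d)=(d-2)^2$ and $\ell_1(d)=(1-d)^2$ give $\psi(t)=(1+\sqrt{-t})^2$ on $J=[-1,0]$, which is concave. Your fallback does not close this. With $\phi=\ell_0'(\delta)$, the $Q$-term becomes $\mathbb E_Q[\ell_1((\ell_0')^{-1}(\phi(m)))]$. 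That term is neither linear in $\phi$ nor independent of it, so it cannot be absorbed into a constant $C$.

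\paragraph{The missing idea.} This is the paper's argument: stop looking for an identity valid for each rule and work with the optimal pointwise risk. With $Q\ll Q_{\mathrm{reg}}$ and $r=dQ/dQ_{\mathrm{reg}}$, the minimization separates: $\inf_\delta\mathcal L(\delta)=\mathbb E_{Q_{\mathrm{reg}}}[f(r(m))]$, where $f(u)=\inf_{d\in[0,1]}\{\ell_0(d)+u\,\ell_1(d)\}$. This $f$ is concave as an infimum of affine functions of $u$. Concave duality yields a convex $\psi$, namely $\psi=(-f)^*$, with $f(u)=\inf_t\{\psi(t)-ut\}$. Interchanging infimum and expectation then gives the claim with $C=0$.

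This $\psi$ is exactly the closed convex envelope of $t\mapsto\inf\{\ell_0(d):-\ell_1(d)=t\}$, which is your $\psi$ when $\ell_1$ is strictly monotone. Passing to the envelope does not change the conjugate, so it does not change any pointwise infimum $\inf_t\{\psi(t)-ut\}$. That is why convexity comes for free, and neither properness nor strict convexity of $\ell_0$ is needed. Under strict properness your composite is already convex, so the two constructions coincide and your argument is a valid, more explicit special case.

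\paragraph{A further point.} Extending $\psi$ by $+\infty$ off $J$ only forces $\phi\in J$ $Q_{\mathrm{reg}}$-almost surely. If $Q\not\ll Q_{\mathrm{reg}}$, send $\phi\to+\infty$ on a $Q_{\mathrm{reg}}$-null set charged by $Q$; the supremum over all measurable $\phi$ is then $+\infty$. So you need $Q\ll Q_{\mathrm{reg}}$, as the paper assumes, or you must restrict to $J$-valued scores.
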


\begin{proof}
Assume $Q \ll Q_{\mathrm{reg}}$ and define the likelihood ratio $r(m):=dQ/dQ_{\mathrm{reg}}(m)$.

Then the risk can be written as
\[
\mathcal L(\delta)
=
\mathbb E_{Q_{\mathrm{reg}}}
\left[
\ell_0(\delta(m)) + r(m)\,\ell_1(\delta(m))
\right].
\]

Since $\delta(m)$ is chosen pointwise, the minimization separates over $m$, yielding
\[
\inf_{\delta} \mathcal L(\delta)
=
\mathbb E_{Q_{\mathrm{reg}}}
\left[
\inf_{d \in [0,1]}
\left\{
\ell_0(d) + r(m)\,\ell_1(d)
\right\}
\right].
\]

Define $f(u):=\inf_{d \in [0,1]}\{\ell_0(d)+u\,\ell_1(d)\}$. Then $\inf_{\delta} \mathcal L(\delta)=\mathbb E_{Q_{\mathrm{reg}}}[f(r(m))]$.

The function $f$ is concave in $u$ as an infimum of affine functions. By concave duality, there exists a convex function $\psi$ such that
\[
f(u)
=
\inf_{t \in \mathbb R}
\left\{
\psi(t) - u t
\right\}.
\]

Substituting,
\[
\inf_{\delta} \mathcal L(\delta)
=
\mathbb E_{Q_{\mathrm{reg}}}
\left[
\inf_{t}
\left\{
\psi(t) - r(m)\,t
\right\}
\right].
\]

Under standard measurability conditions, we can interchange infimum and expectation to obtain
\[
\inf_{\delta} \mathcal L(\delta)
=
\inf_{\phi}
\mathbb E_{Q_{\mathrm{reg}}}
\left[
\psi(\phi(m)) - r(m)\,\phi(m)
\right].
\]

Finally, rewriting the second term as $\mathbb E_{Q_{\mathrm{reg}}}[r(m)\,\phi(m)]=\mathbb E_Q[\phi(m)]$, we obtain
\[
\inf_{\delta} \mathcal L(\delta)
=
\inf_{\phi}
\left\{
\mathbb E_{Q_{\mathrm{reg}}}[\psi(\phi(m))]
-
\mathbb E_Q[\phi(m)]
\right\}.
\]

Rearranging gives the desired result.
\end{proof}

\begin{theorem}[Dual Representation of Optimal Risk]
\label{thm:dual-risk}
Under the above assumptions, the optimal risk satisfies
\begin{equation}
\inf_{\delta} \mathcal L(\delta)
=
\inf_{\phi} \mathcal L(\phi)
=
C
-
\sup_{\phi}
\left\{
\mathbb E_Q[\phi(m)]
-
\mathbb E_{Q_{\mathrm{reg}}}[\psi(\phi(m))]
\right\}.
\end{equation}
\end{theorem}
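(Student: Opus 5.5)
The proof will follow the template of Proposition~\ref{prop:score-reformulation-clean}, with one extra ingredient: the middle expression $\inf_\phi \mathcal L(\phi)$ has to be made precise. I interpret $\mathcal L(\phi)$ as the risk of the decision rule $\delta_\phi$ obtained from a real-valued score through a link map $d:\mathbb R\to[0,1]$. For the logistic case this is $d(t)=e^t/(1+e^t)$. In general I take $d$ to be the inverse of the link associated with the strictly proper scoring rule in the Assumption, so that $\mathcal L(\phi):=\mathcal L(d\circ\phi)$. The plan is to prove the two equalities separately.

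\textbf{First equality: $\inf_\delta\mathcal L(\delta)=\inf_\phi\mathcal L(\phi)$.}
\begin{enumerate}
\item Assume $Q\ll Q_{\mathrm{reg}}$ and set $r=dQ/dQ_{\mathrm{reg}}$, as in the proof of Proposition~\ref{prop:score-reformulation-clean}. Then
\[
\mathcal L(\delta)=\mathbb E_{Q_{\mathrm{reg}}}\big[\ell_0(\delta)+r\,\ell_1(\delta)\big],
\]
and minimizing pointwise gives $\inf_\delta\mathcal L(\delta)=\mathbb E_{Q_{\mathrm{reg}}}[f(r)]$ with $f(u)=\inf_{d\in[0,1]}\{\ell_0(d)+u\ell_1(d)\}$.
\item The inequality $\inf_\phi\mathcal L(\phi)\ge\inf_\delta\mathcal L(\delta)$ is immediate, since every $d\circ\phi$ is an admissible decision rule.
\item For the reverse inequality, strict properness together with differentiability means the pointwise minimizer $d^*(u)$ is unique and monotone in $u$. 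Hence it equals $d(t^*(u))$ for some $t^*(u)\in\mathbb R$, at least whenever $d^*(u)$ lies in the open interval $(0,1)$.
\item The boundary values $d^*(u)\in\{0,1\}$, which occur at $r=0$ or $r=\infty$, I handle by approximation. I take scores $\phi_n\to\pm\infty$ and pass to the limit by monotone or dominated convergence, using that $\ell_0,\ell_1$ are continuous on $[0,1]$ and therefore bounded.
\end{enumerate}

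\textbf{Second equality: the dual form.} This is a direct citation of Proposition~\ref{prop:score-reformulation-clean}. To make the constant and $\psi$ concrete, I take $C:=\sup_u f(u)$ if it is finite and otherwise absorb it as in that proposition. I define $\psi$ as the concave conjugate
\[
\psi(t)=\sup_{u\ge 0}\{ut+f(u)\}.
\]
Then $f(u)=\inf_t\{\psi(t)-ut\}$ follows by Fenchel--Moreau, because $f$ is concave and upper semicontinuous: it is an infimum of affine functions over a compact set with continuous losses. This gives
\[
\inf_\phi\mathbb E_{Q_{\mathrm{reg}}}\big[\psi(\phi)-r\phi\big]=-\sup_\phi\big\{\mathbb E_Q[\phi]-\mathbb E_{Q_{\mathrm{reg}}}[\psi(\phi)]\big\},
\]
which yields the claimed form up to the additive constant. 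Swapping the infimum with the expectation uses the same pointwise-selection argument as Theorem~\ref{thm:regulator-divergence}: either $\mathcal X_L$ is finite, or I take a measurable selection of $\varepsilon$-minimizers $t^*(r(m))$.

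\textbf{Main obstacle.} I expect the hard step to be the first equality. The difficulty is making sure the score parametrization reaches every pointwise-optimal decision, including the boundary cases $d^*\in\{0,1\}$. This requires specifying the link and checking its surjectivity onto $(0,1)$ together with the limiting argument above. The duality step is routine given the earlier results.
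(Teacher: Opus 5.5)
Your proof is correct, but it puts the work in a different place than the paper does. The paper's proof is a two-line citation. It implicitly reads $\mathcal L(\phi)$ as the score objective $\mathbb E_{Q_{\mathrm{reg}}}[\psi(\phi(m))]-\mathbb E_Q[\phi(m)]$ from the last lines of the proof of Proposition~\ref{prop:score-reformulation-clean}. On that reading, the first equality \emph{is} that proposition, and the second is just $\inf(-\,\cdot\,)=-\sup(\cdot)$; the derivation in fact yields $C=0$. All the substance therefore sits in the proposition's pointwise minimization and its bare appeals to ``concave duality'' and ``standard measurability conditions.''

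You instead give $\mathcal L(\phi)$ an operational meaning: the risk of the decision rule $d\circ\phi$ obtained through a link. This turns the first equality into a genuine statement, namely that link-composed score rules attain the Bayes risk. It also moves the duality into the second equality, where you supply what the paper omits: the explicit conjugate $\psi(t)=\sup_{u\ge 0}\{ut+f(u)\}$, the upper semicontinuity of $f$ needed for Fenchel--Moreau, and a measurable $\varepsilon$-selection to swap infimum and expectation. Your reading buys a meaningful middle term and an actual proof of the dual identity. The paper's reading buys brevity, at the cost of making $\inf_\phi\mathcal L(\phi)$ little more than a relabeling of the proposition's output.

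Two small adjustments.

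First, with your $\psi$ the identity holds exactly with $C=0$. The constant $C$ and the additive normalization of $\psi$ are coupled, so taking $C=\sup_u f(u)$, or any nonzero $C$, requires replacing $\psi$ by $\psi-C$. As written, your choice of $C$ makes the display off by $C$.

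Second, the step you flag as the main obstacle is easier than you expect. For any rule $\delta$, the rules $\delta_n:=(1-\tfrac1n)\delta+\tfrac1{2n}$ take values in $(0,1)$. Since $\ell_0,\ell_1$ are continuous and bounded on $[0,1]$, bounded convergence gives $\mathcal L(\delta_n)\to\mathcal L(\delta)$. Hence any continuous increasing bijection $d:\mathbb R\to(0,1)$ gives $\inf_\phi\mathcal L(d\circ\phi)=\inf_\delta\mathcal L(\delta)$. You need neither strict properness, nor the pointwise minimizers $d^*(u)$, nor separate treatment of the boundary cases. This also avoids relying on ``the link associated with'' the proper loss, which need not be onto $\mathbb R$; for the Brier score the canonical link is affine in $\eta$ and so has bounded range.
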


\begin{proof}
The first equality follows from Proposition~\ref{prop:score-reformulation-clean}. The second follows by rearranging \eqref{eq:risk-formal}.
\end{proof}

\begin{corollary}[Equivalence with Variational Regulator]
\label{cor:regulator-risk}
Let $\mathcal R(Q)$ be defined as in \eqref{eq:regulator-general}. Then
\begin{equation}
\mathcal R(Q)
=
\sup_{\phi}
\left\{
\mathbb E_Q[\phi(m)]
-
\mathbb E_{Q_{\mathrm{reg}}}[\psi(\phi(m))]
\right\}
=
C - \inf_{\delta} \mathcal L(\delta).
\end{equation}
\end{corollary}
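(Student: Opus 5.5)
The corollary is a rearrangement of Theorem~\ref{thm:dual-risk}, so the plan is to assemble the previous results rather than introduce new arguments. The first equality is the definition \eqref{eq:regulator-general}, with one caveat to settle at the outset. The supremum in \eqref{eq:regulator-general} runs over the class $\Phi$, while the supremum in Theorem~\ref{thm:dual-risk} runs over all measurable scores. I will use the richness hypothesis of Theorem~\ref{thm:regulator-divergence} to identify the two. Under that hypothesis both suprema reduce pointwise to $\mathbb E_{Q_{\mathrm{reg}}}[\psi^*(r(m))]$, with $r=dQ/dQ_{\mathrm{reg}}$, so they coincide.

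For the second equality, I take the convex function $\psi$ and the constant $C$ produced by Proposition~\ref{prop:score-reformulation-clean}, where $\psi$ comes from the concave conjugate of $f(u)=\inf_{d\in[0,1]}\{\ell_0(d)+u\,\ell_1(d)\}$. I assume the regulator in \eqref{eq:regulator-general} is built with this same $\psi$. Theorem~\ref{thm:dual-risk} then gives
\[
\inf_\delta \mathcal L(\delta) = C - \sup_\phi\{\mathbb E_Q[\phi]-\mathbb E_{Q_{\mathrm{reg}}}[\psi(\phi)]\} = C-\mathcal R(Q).
\]
Solving for $\mathcal R(Q)$ yields $\mathcal R(Q)=C-\inf_\delta\mathcal L(\delta)$. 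The absolute-continuity assumption $Q\ll Q_{\mathrm{reg}}$ from those proofs is retained throughout.

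The main obstacle is bookkeeping, not analysis: making $C$ and $\psi$ consistent between the risk side and the regulator side. In the proof of Proposition~\ref{prop:score-reformulation-clean} the identity actually obtained is
\[
\inf_\delta\mathcal L=\inf_\phi\{\mathbb E_{Q_{\mathrm{reg}}}[\psi(\phi)]-\mathbb E_Q[\phi]\}=-\sup_\phi\{\cdots\}.
\]
So $C=0$ for that particular $\psi$, and any nonzero $C$ corresponds to shifting $\psi$ by a constant. I would state explicitly that $C$ is the constant of Theorem~\ref{thm:dual-risk} and is absorbed consistently.

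I would also check that the sign conventions agree: the pointwise value is $\inf_t\{\psi(t)-ut\}=-\psi^*(u)$. This gives $\inf_\delta\mathcal L=-\mathbb E_{Q_{\mathrm{reg}}}[\psi^*(r)]=-\mathcal R(Q)$, which agrees with Theorem~\ref{thm:regulator-divergence} and confirms the result.
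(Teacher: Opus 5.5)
Your proposal is correct and follows the paper's own argument: the first equality is the definition \eqref{eq:regulator-general}, and the second comes from rearranging the identity in Theorem~\ref{thm:dual-risk}. Your additional bookkeeping is accurate and makes explicit what the paper leaves implicit: $\mathcal R$ must use the same $\psi$ produced in Proposition~\ref{prop:score-reformulation-clean}, the richness of $\Phi$ identifies the two suprema, and the paper's derivation actually yields $C=0$ because $\inf_t\{\psi(t)-ut\}=-\psi^*(u)$.
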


\begin{proof}
The first equality is the definition of $\mathcal R(Q)$ in \eqref{eq:regulator-general}. Theorem~\ref{thm:dual-risk} gives $\inf_{\delta}\mathcal L(\delta)=C-\sup_{\phi}\{\mathbb E_Q[\phi(m)]-\mathbb E_{Q_{\mathrm{reg}}}[\psi(\phi(m))]\}$. Rearranging this identity yields the second equality.
\end{proof}

The function $\phi$ is a test statistic that scores messages according to their likelihood of being generated from $Q$ rather than $Q_{\mathrm{reg}}$. The function $\psi$ is determined by the loss and penalizes large values of $\phi$ under the reference distribution. The regulator $\mathcal R(Q)$ quantifies the optimal separability between the two distributions.

\subsubsection{Feature-Based Restriction}

The preceding variational regulator allows the discriminator to inspect the entire message $m$. In many moderation, censorship, deception-detection, and phishing-defense systems, however, the receiver observes only a feature representation. Let $\mathcal F:\mathcal X\to\mathcal Z$ be a measurable feature map, with $\mathcal Z\subseteq\mathbb R^d$. The feature map is the mathematical version of a detector pipeline: raw text is mapped to counts, scores, embeddings, classifier logits, URL-risk features, or other summaries. The feature-induced laws are the pushforwards
\begin{equation}
Q_{\mathcal F}(B):=Q(\mathcal F^{-1}(B)),
\qquad
Q_{\mathrm{reg},\mathcal F}(B):=Q_{\mathrm{reg}}(\mathcal F^{-1}(B)),
\end{equation}
for measurable $B\subseteq\mathcal Z$. A feature-restricted discriminator has the form $\phi(m)=h(\mathcal F(m))$, where $h:\mathcal Z\to\mathbb R$ is a score on feature space. The corresponding regulator is
\begin{equation}
\mathcal R_{\mathcal F}(Q)
=
\sup_{h}
\left\{
\mathbb E_Q[h(\mathcal F(m))]
-
\mathbb E_{Q_{\mathrm{reg}}}[\psi(h(\mathcal F(m)))]
\right\}.
\label{eq:feature-restricted-regulator}
\end{equation}

\begin{proposition}[Feature-Induced Divergence]
\label{prop:feature-induced-divergence}
Suppose the score class for $h$ is rich enough to approximate bounded measurable functions on $\mathcal Z$, and let $f(u):=\psi^*(u)-\psi^*(1)$. If $Q_{\mathcal F}\ll Q_{\mathrm{reg},\mathcal F}$, then
\begin{equation}
\mathcal R_{\mathcal F}(Q)
=
D_f(Q_{\mathcal F}\,\|\,Q_{\mathrm{reg},\mathcal F})
+
\psi^*(1)
=
\mathbb E_{Q_{\mathrm{reg},\mathcal F}}
\left[
f\!\left(
\frac{dQ_{\mathcal F}}{dQ_{\mathrm{reg},\mathcal F}}(z)
\right)
\right]
+
\psi^*(1).
\label{eq:feature-divergence}
\end{equation}
\end{proposition}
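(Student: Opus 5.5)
The plan is to reduce the feature-restricted regulator to the unrestricted one of Theorem~\ref{thm:regulator-divergence}, applied on the feature space $\mathcal Z$ rather than on $\mathcal X$. The key observation is the change-of-variables identity for pushforwards: for any measurable $h:\mathcal Z\to\mathbb R$ (bounded, so all expectations exist),
\[
\mathbb E_Q[h(\mathcal F(m))]=\mathbb E_{Q_{\mathcal F}}[h(z)],
\qquad
\mathbb E_{Q_{\mathrm{reg}}}[\psi(h(\mathcal F(m)))]=\mathbb E_{Q_{\mathrm{reg},\mathcal F}}[\psi(h(z))].
\]
Substituting these into \eqref{eq:feature-restricted-regulator} rewrites $\mathcal R_{\mathcal F}(Q)$ as
\[
\sup_h\bigl\{\mathbb E_{Q_{\mathcal F}}[h]-\mathbb E_{Q_{\mathrm{reg},\mathcal F}}[\psi(h)]\bigr\}.
\]
This is exactly the variational regulator \eqref{eq:regulator-general} with message space $\mathcal Z$, generator law $Q_{\mathcal F}$, reference law $Q_{\mathrm{reg},\mathcal F}$, and score class the admissible $h$'s.

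The hypotheses of Theorem~\ref{thm:regulator-divergence} then transfer directly. The score class is assumed rich enough to approximate bounded measurable functions on $\mathcal Z$, $\psi$ is proper, convex and lower semicontinuous as before, and $Q_{\mathcal F}\ll Q_{\mathrm{reg},\mathcal F}$ by assumption. Invoking the theorem on $\mathcal Z$ gives
\[
\mathcal R_{\mathcal F}(Q)=\mathbb E_{Q_{\mathrm{reg},\mathcal F}}\bigl[\psi^*(r_{\mathcal F}(z))\bigr],
\qquad
r_{\mathcal F}:=\frac{dQ_{\mathcal F}}{dQ_{\mathrm{reg},\mathcal F}}.
\]
Writing $\psi^*(u)=f(u)+\psi^*(1)$, exactly as in the proof of the ``Regulator as a Divergence'' proposition, splits off the constant and yields both equalities in \eqref{eq:feature-divergence}.

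The main obstacle is the pointwise optimization step inherited from Theorem~\ref{thm:regulator-divergence}, now carried out on $\mathcal Z$. In the finite-vocabulary setting $\mathcal Z=\mathcal F(\mathcal X_L)$ is finite, so this is immediate: choose $h(z)$ to be a maximizer, or a near-maximizer, of $t\,r_{\mathcal F}(z)-\psi(t)$ independently for each atom. In general I would argue by approximation, which gives two inequalities:
\begin{itemize}[leftmargin=*]
\item The bound ``$\le$'' follows from Fenchel--Young, $t\,r-\psi(t)\le\psi^*(r)$, applied pointwise and integrated against $Q_{\mathrm{reg},\mathcal F}$.
\item The bound ``$\ge$'' follows by taking a measurable selection of $\varepsilon$-maximizers, truncating to bounded functions, and passing to the limit with monotone convergence.
\end{itemize}
Measurability of $r_{\mathcal F}$ on $\mathcal Z$ is guaranteed by Radon--Nikodym on the pushforward space, so the full-message theorem applies verbatim once this is in place.

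Finally, I would note the consistency check $\mathcal R_{\mathcal F}(Q)\le\mathcal R(Q)$. It holds because feature-restricted scores $h\circ\mathcal F$ form a subclass of $\Phi$. Equivalently, it is the data-processing inequality for $D_f$.
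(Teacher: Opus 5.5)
Your proposal is correct and follows the paper's own proof: the pushforward change-of-variables identities turn $\mathcal R_{\mathcal F}(Q)$ into the variational regulator on $\mathcal Z$ with laws $Q_{\mathcal F}$ and $Q_{\mathrm{reg},\mathcal F}$. Theorem~\ref{thm:regulator-divergence} and the split $\psi^*(u)=f(u)+\psi^*(1)$ then give both equalities; your extra treatment of the pointwise step (Fenchel--Young for one inequality, near-maximizers for the other) adds rigor the paper omits but does not change the route.
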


\begin{proof}
For any measurable $h$, the change-of-variables identity for pushforward measures gives $\mathbb E_Q[h(\mathcal F(m))]=\mathbb E_{Q_{\mathcal F}}[h(z)]$ and $\mathbb E_{Q_{\mathrm{reg}}}[\psi(h(\mathcal F(m)))]=\mathbb E_{Q_{\mathrm{reg},\mathcal F}}[\psi(h(z))]$. Therefore \eqref{eq:feature-restricted-regulator} is the same variational discriminator problem as \eqref{eq:regulator-general}, but on feature space. Applying Theorem~\ref{thm:regulator-divergence} on $\mathcal Z$ gives \eqref{eq:feature-divergence}.
\end{proof}

\begin{corollary}[Information Loss under Feature Restriction]
\label{cor:feature-data-processing}
Under the assumptions of Proposition~\ref{prop:feature-induced-divergence}, suppose also that $Q\ll Q_{\mathrm{reg}}$. Then the restricted regulator cannot exceed the full message-level regulator:
\begin{equation}
\mathcal R_{\mathcal F}(Q)
\le
\mathcal R(Q).
\end{equation}
If $f$ is strictly convex, equality holds precisely when the likelihood ratio $dQ/dQ_{\mathrm{reg}}$ is determined by the feature $\mathcal F(m)$, up to $Q_{\mathrm{reg}}$-null sets.
\end{corollary}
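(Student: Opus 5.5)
The inequality is the data-processing inequality for $f$-divergences. I would obtain it directly from the variational definitions rather than from any divergence machinery. Every feature-restricted discriminator $\phi=h\circ\mathcal F$ is itself a measurable score on $\mathcal X$, and it is bounded whenever $h$ is. So the feasible set in \eqref{eq:feature-restricted-regulator} embeds into the feasible set in \eqref{eq:regulator-general}, with the same objective value. Taking suprema gives $\mathcal R_{\mathcal F}(Q)\le\mathcal R(Q)$ immediately. If the paper's $\Phi$ is only assumed to approximate bounded measurable functions, I would replace exact membership by approximation of $h\circ\mathcal F$. This is harmless because the objective is continuous along bounded pointwise approximations, by dominated convergence on the finite set $\mathcal X_L$.

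For the equality characterization, I would move to the divergence forms. By Theorem~\ref{thm:regulator-divergence} and Proposition~\ref{prop:feature-induced-divergence}, both sides equal $\psi^*(1)$ plus the respective $f$-divergences. So it suffices to compare $D_f(Q_{\mathcal F}\|Q_{\mathrm{reg},\mathcal F})$ with $D_f(Q\|Q_{\mathrm{reg}})$. Write $r=dQ/dQ_{\mathrm{reg}}$. The key identity is that the feature-level likelihood ratio is the conditional expectation of the message-level one:
\[
\frac{dQ_{\mathcal F}}{dQ_{\mathrm{reg},\mathcal F}}(\mathcal F(m))=\mathbb E_{Q_{\mathrm{reg}}}[r\mid\mathcal F](m).
\]
I would check this by testing against indicators $\mathbf 1_B(\mathcal F(m))$, which is a one-line change of variables. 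Conditional Jensen then gives
\[
f\big(\mathbb E_{Q_{\mathrm{reg}}}[r\mid\mathcal F]\big)\le\mathbb E_{Q_{\mathrm{reg}}}[f(r)\mid\mathcal F].
\]
Taking $Q_{\mathrm{reg}}$-expectations recovers the inequality a second time.

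For the equality case, one direction is easy. If $r=g(\mathcal F)$ $Q_{\mathrm{reg}}$-a.s., then the conditional expectation equals $r$, both divergences coincide, and equality holds. For the converse, assume equality. The nonnegative gap $\mathbb E_{Q_{\mathrm{reg}}}[f(r)\mid\mathcal F]-f(\mathbb E[r\mid\mathcal F])$ then has zero expectation, so it vanishes a.s. Strict convexity of $f$ forces the conditional law of $r$ given $\mathcal F$ to be degenerate. Hence $r$ equals its conditional expectation a.s., which is a function of $\mathcal F(m)$.

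The main obstacle I expect is justifying the strict-Jensen step rigorously. This step needs the divergences to be finite, so that subtracting them is legitimate, and it needs regular conditional distributions. On the finite message space $\mathcal X_L$ both issues disappear, because conditioning reduces to finite weighted averages over fibers $\mathcal F^{-1}(z)$. I would therefore argue fiberwise there. Strict convexity of a finite weighted average forces all values of $r$ on each fiber of positive $Q_{\mathrm{reg}}$-mass to be equal. I would only remark that the general case follows by the same argument with regular conditional probabilities.
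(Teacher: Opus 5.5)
Your proposal is correct. For the equality characterization it follows the paper's argument: identify the feature-level likelihood ratio with $\mathbb E_{Q_{\mathrm{reg}}}[r\mid\mathcal F]$, apply conditional Jensen, and use strict convexity to force $r$ to be constant on the fibers of $\mathcal F$. The genuine difference is your first derivation of the inequality. The paper obtains $\mathcal R_{\mathcal F}(Q)\le\mathcal R(Q)$ only through the two divergence representations. You instead note that each feature-restricted score $h\circ\mathcal F$ is an admissible message-level score with the same objective value, so a supremum over the smaller class cannot exceed one over the larger class. This route is more general: it does not need $Q\ll Q_{\mathrm{reg}}$, the pointwise sup--expectation interchange behind Theorem~\ref{thm:regulator-divergence}, or any Radon--Nikodym calculus. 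It says nothing about when equality holds, however, which is why you, like the paper, still need the divergence form for the second half. Your fiberwise treatment on $\mathcal X_L$ also makes rigorous the strict-Jensen step that the paper asserts in one line.

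One refinement concerns finiteness of the divergences. On a finite space this is automatic only because the paper's $D_f$ uses a real-valued $f$ on $[0,\infty)$. If $\psi^*$ is extended-valued, as in the logistic example where $\psi^*=+\infty$ for $u>1$, both regulators can equal $+\infty$ while $r$ is not a function of $\mathcal F$. In that setting the equality characterization genuinely requires $\mathcal R(Q)<\infty$.
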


\begin{proof}
Let $r(m):=dQ/dQ_{\mathrm{reg}}(m)$ and $Z:=\mathcal F(M)$ for $M\sim Q_{\mathrm{reg}}$. The feature-level likelihood ratio satisfies
$dQ_{\mathcal F}/dQ_{\mathrm{reg},\mathcal F}(Z)=\mathbb E_{Q_{\mathrm{reg}}}[r(M)\mid Z]$. Hence Jensen's inequality gives
\begin{align}
\mathcal R_{\mathcal F}(Q)
&=
\psi^*(1)
+
\mathbb E_{Q_{\mathrm{reg}}}
\left[
f\!\left(\mathbb E[r(M)\mid Z]\right)
\right]
\notag\\
&\le
\psi^*(1)
+
\mathbb E_{Q_{\mathrm{reg}}}[f(r(M))].
\end{align}
The right-hand side is $\mathcal R(Q)$ by Theorem~\ref{thm:regulator-divergence}. For strictly convex $f$, equality in Jensen's inequality occurs exactly when $r(M)$ is almost surely constant conditional on $Z$, which means that the likelihood ratio is a function of $\mathcal F(m)$.
\end{proof}

This result makes the role of feature design explicit. A coarse feature map produces a regulator that sees only the discrepancy between $Q_{\mathcal F}$ and $Q_{\mathrm{reg},\mathcal F}$, while a sufficient feature map preserves all likelihood-ratio information relevant to discrimination. In plain terms, a detector cannot penalize what its features cannot see. The case studies below use exactly this reduction: complete messages are mapped into receiver-side semantic classes, and the equilibrium is evaluated through the induced class distribution and score statistic rather than through every token sequence.

\section{Zero-Sum Game and Its Saddle-Point}

\subsection{Generator--Regulator Game}

We model the interaction between the generator and the regulator as a zero-sum game between a distribution $Q \in \Delta(\mathcal X)$ and a regulator score $\phi \in \Phi$. This formulation uses the variational regulator $\mathcal R(Q)$ from \eqref{eq:regulator-general}: the generator seeks useful and diverse messages, while the regulator seeks evidence that $Q$ deviates from the reference behavior encoded by $Q_{\mathrm{reg}}$.

The generator seeks to maximize semantic utility and entropy while paying a regulatory penalty. Using \eqref{eq:regulator-general}, the regulated objective is
\begin{equation}
\frac{1}{T} \mathbb E_Q[U_L(m)]
+
H(Q)
-
\lambda \mathcal R(Q),
\end{equation}
where $\lambda \ge 0$ controls the strength of regulation.

The parameter $\lambda$ should be interpreted as a policy or enforcement weight. When $\lambda=0$, the generator behaves as if the regulator did not exist and chooses the entropy-regularized utility maximizer. As $\lambda$ increases, the generator must give up utility or diversity whenever those gains require moving too far from the regulated reference distribution.

Substituting \eqref{eq:regulator-general} into the regulated objective and using the identity $-\sup = \inf(-)$, we obtain the min--max formulation
\begin{equation}
\sup_{Q \in \Delta(\mathcal X)}
\inf_{\phi \in \Phi}
\left\{
\frac{1}{T} \mathbb E_Q[U_L(m)]
+
H(Q)
-
\lambda \mathbb E_Q[\phi(m)]
+
\lambda \mathbb E_{Q_{\mathrm{reg}}}[\psi(\phi(m))]
\right\}.
\label{eq:game-revised}
\end{equation}

This is a zero-sum game: the generator selects $Q$ to maximize utility and entropy, while the regulator selects $\phi$ to penalize deviations from the reference behavior encoded in $Q_{\mathrm{reg}}$. The game-theoretic intuition is that the generator and regulator have opposing objectives, but their choices are coupled. If the generator shifts probability mass toward high-utility but easily detectable messages, the regulator can increase the penalty by choosing a discriminator that recognizes those messages. If the regulator is strong, the generator must move closer to $Q_{\mathrm{reg}}$, even when doing so sacrifices utility. The saddle point describes a stable compromise, in the same minimax sense that underlies zero-sum game theory \cite{vonneumann1928theorie,sion1958general} and strategic security models \cite{zhu2018networksecuritytutorial,zhang2020securedsvm}: neither side can improve its objective by changing strategy alone.

\subsection{Saddle-Point Structure and Optimal Generator}

We formalize the saddle-point structure of the generator--regulator game in \eqref{eq:game-revised} and characterize the optimal generator distribution. The theorem below has three roles. First, it states conditions under which a stable generator--regulator compromise exists. Second, it shows that once the regulator score $\phi$ is fixed, the generator is again a Gibbs law, but with a modified utility $U_L/T-\lambda\phi$. Third, it reduces the search for equilibrium to an optimization over the regulator score.

\begin{theorem}[Saddle-Point and Optimal Generator]
\label{thm:saddle-generator}
Consider the min--max problem in \eqref{eq:game-revised}. Assume that $\Delta(\mathcal X)$ is compact, $\Phi$ is convex, and $\psi$ is convex and lower semicontinuous. Then:

\begin{enumerate}
\item[(i)] The problem admits a saddle point $(Q^*, \phi^*)$, and the order of optimization can be exchanged, so that $\sup_{Q \in \Delta(\mathcal X)} \inf_{\phi \in \Phi}(\cdot)=\inf_{\phi \in \Phi} \sup_{Q \in \Delta(\mathcal X)}(\cdot)$.

\item[(ii)] For any fixed $\phi \in \Phi$, the inner maximization over $Q$ admits a unique solution given by
\begin{equation}
Q^*(m \mid \phi)
=
\frac{
\exp\!\left(
\frac{1}{T} U_L(m) - \lambda \phi(m)
\right)
}{
Z(\phi)
},
\label{eq:gibbs-phi}
\end{equation}
where
$Z(\phi)=\sum_{m \in \mathcal X}\exp(U_L(m)/T-\lambda\phi(m))$.

\item[(iii)] Substituting \eqref{eq:gibbs-phi} into \eqref{eq:game-revised}, the problem reduces to the convex optimization
\begin{equation}
\inf_{\phi \in \Phi}
\left\{
\log Z(\phi)
+
\lambda \mathbb E_{Q_{\mathrm{reg}}}[\psi(\phi(m))]
\right\}.
\label{eq:dual-phi}
\end{equation}
\end{enumerate}
\end{theorem}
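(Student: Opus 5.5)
I would prove the three parts in the order (ii), (iii), (i), since the explicit inner solution and the reduced convex problem make the minimax exchange essentially a restatement. Throughout, $\mathcal X$ is the finite set $\mathcal X_L=\mathcal V^L$, so $\Delta(\mathcal X)$ is a compact convex simplex and all sums are finite.

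\textbf{Part (ii).} For fixed $\phi$, the payoff in \eqref{eq:game-revised} is
\[
\mathbb E_Q\!\left[\tfrac1T U_L-\lambda\phi\right]+H(Q)+\lambda\,\mathbb E_{Q_{\mathrm{reg}}}[\psi(\phi)].
\]
The last term does not depend on $Q$. What remains is exactly the functional of Proposition~\ref{prop:variational} with $U_L/T$ replaced by the modified utility $U_L/T-\lambda\phi$.

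Rather than repeat the Lagrange argument, I would use the Gibbs inequality
\[
\mathbb E_Q[V]+H(Q)=\log Z_V-D_{\mathrm{KL}}(Q\,\|\,Q_V),
\qquad
Q_V\propto e^{V},
\]
which holds for any $V:\mathcal X\to\mathbb R$. Taking $V=U_L/T-\lambda\phi$, the payoff is maximized uniquely at $Q_V=Q^*(\cdot\mid\phi)$, because $D_{\mathrm{KL}}\ge0$ with equality only when $Q=Q_V$. The optimal value is $\log Z(\phi)+\lambda\,\mathbb E_{Q_{\mathrm{reg}}}[\psi(\phi)]$. Here $\phi$ must be finite-valued so that $Z(\phi)\in(0,\infty)$; I take this from $\Phi$ consisting of real-valued scores.

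\textbf{Part (iii).} Substituting the optimal value from (ii) gives
\[
\inf_\phi\sup_Q(\cdot)=\inf_\phi\left\{\log Z(\phi)+\lambda\,\mathbb E_{Q_{\mathrm{reg}}}[\psi(\phi)]\right\}.
\]
The objective is convex in $\phi$ for two reasons:
\begin{itemize}
\item $\phi\mapsto\log\sum_m \exp\!\left(U_L(m)/T-\lambda\phi(m)\right)$ is a log-sum-exp of affine functions of $\phi$, hence convex.
\item $\phi\mapsto\mathbb E_{Q_{\mathrm{reg}}}[\psi(\phi)]$ is a nonnegative combination of the convex functions $\psi(\phi(m))$, using $\lambda\ge0$.
\end{itemize}
So \eqref{eq:dual-phi} is a convex program over the convex set $\Phi$.

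\textbf{Part (i).} The payoff $\mathcal G(Q,\phi)$ has the following properties.
\begin{itemize}
\item It is concave in $Q$, since it is linear in $Q$ plus the strictly concave $H$.
\item It is continuous in $Q$ on the compact simplex.
\item It is convex in $\phi$: the term $-\lambda\,\mathbb E_Q[\phi]$ is linear, and the term $\lambda\,\mathbb E_{Q_{\mathrm{reg}}}[\psi(\phi)]$ is convex.
\item It is lower semicontinuous in $\phi$, because $\psi$ is lower semicontinuous, on the finite-dimensional space $\mathbb R^{\mathcal X}$ that contains $\Phi$.
\end{itemize}
Sion's minimax theorem \cite{sion1958general} requires only one compact set, here $\Delta(\mathcal X)$, with upper semicontinuity and quasi-concavity in $Q$ and lower semicontinuity and quasi-convexity in $\phi$. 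It therefore gives $\sup_Q\inf_\phi=\inf_\phi\sup_Q$.

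\textbf{Main obstacle.} Sion gives equality of the values but not attainment of the infimum over $\phi$, and attainment is needed for a genuine saddle point $(Q^*,\phi^*)$.

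The outer sup is attained: $Q\mapsto\inf_\phi\mathcal G(Q,\phi)$ is an infimum of continuous functions, hence upper semicontinuous, on a compact set.

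For the $\phi$ side, I would first try to show that \eqref{eq:dual-phi} is coercive. The two cases are:
\begin{itemize}
\item If $\phi(m)\to+\infty$ on some coordinate, the term $\psi(\phi(m))$ should blow up, e.g.\ when $\psi$ is superlinear, as with $e^t$ or $t^2/2$.
\item If $\phi(m)\to-\infty$ on some coordinate, then $\log Z(\phi)\to+\infty$.
\end{itemize}
Coercivity together with lower semicontinuity and closedness of $\Phi$ would yield a minimizer $\phi^*$.

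Alternatively, I would take $Q^*$ as the sup-inf maximizer and define $\phi^*=(\psi')^{-1}(dQ^*/dQ_{\mathrm{reg}})$ via Corollary~\ref{cor:optimal-discriminator}, assuming $\Phi$ is rich and $\psi$ is strictly convex and differentiable. I would then check that $Q^*=Q^*(\cdot\mid\phi^*)$ by the first-order conditions, which closes the saddle-point inequalities. If neither route works in full generality, I would state the attainment hypothesis explicitly, e.g.\ that $\Phi$ is closed and $\psi$ is coercive, as part of the standing assumptions.
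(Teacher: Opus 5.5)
Your skeleton matches the paper's: Sion's theorem for (i), the Gibbs variational principle with modified utility $U_L/T-\lambda\phi$ for (ii), and substitution of the optimal value $\log Z(\phi)$ for (iii). You are more careful at each step than the paper is:
\begin{itemize}
\item In (ii), the identity $\mathbb E_Q[V]+H(Q)=\log Z_V-D_{\mathrm{KL}}(Q\,\|\,Q_V)$ gives existence, uniqueness and the optimal value in one stroke. The paper uses a Lagrange first-order condition plus strict concavity.
\item In (iii), you actually verify convexity via log-sum-exp, which the paper only asserts.
\item In (i), you check Sion's semicontinuity hypotheses and correctly place compactness on the $Q$ side only.
\end{itemize}

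Your ``main obstacle'' is a genuine gap in the paper's own Step~1. That step claims Sion yields a saddle point, but Sion only yields equality of values, and under the stated hypotheses part (i) can in fact fail. Take $\mathcal X=\{a,b\}$, $Q_{\mathrm{reg}}=\delta_a$, $\psi(t)=e^t$, $\Phi=\mathbb R^{\mathcal X}$ and $\lambda>0$.
\begin{itemize}
\item Every $Q$ with $Q(b)>0$ has $\inf_\phi(\cdot)=-\infty$: let $\phi(b)\to+\infty$.
\item So the outer maximizer is $\delta_a$, with value $U_L(a)/T+\lambda$.
\item By (ii), the best response to any finite $\phi$ has full support, so no finite $\phi^*$ makes $\delta_a$ a best response.
\item Correspondingly, $\inf_\phi\{\log Z(\phi)+\lambda e^{\phi(a)}\}=U_L(a)/T+\lambda$ is approached only as $\phi(b)\to+\infty$ and is never attained.
\end{itemize}

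This example also shows the hole in your coercivity sketch. Sending $\phi(m)\to+\infty$ is penalized by $\psi$ only where $Q_{\mathrm{reg}}(m)>0$. The fix is to add three hypotheses:
\begin{itemize}
\item full support of $Q_{\mathrm{reg}}$, which the paper itself assumes in Corollary~\ref{cor:kl-limits};
\item closedness of $\Phi$;
\item a coercivity condition on $\psi$, for instance $\psi$ bounded below and superlinear at $+\infty$, as holds for $e^t$ and $t^2/2$.
\end{itemize}
Under these, the reduced objective is coercive, so a minimizer $\phi^*$ exists. Paired with the sup--inf maximizer (which you correctly show exists by upper semicontinuity on the compact simplex), it gives the saddle point. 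Stating such an attainment hypothesis explicitly, as you propose, is the right repair; the paper's proof does not notice it is needed.
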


\begin{proof}
We proceed in three steps.

\emph{Step 1 (Saddle-point existence).}
The objective in \eqref{eq:game-revised} is concave in $Q$ because the entropy $H(Q)$ is strictly concave and the remaining terms are linear in $Q$. It is convex in $\phi$ since $\psi$ is convex and the dependence on $\phi$ is affine in the remaining terms. Under the stated compactness and convexity assumptions, the existence of a saddle point and the interchange of $\sup$ and $\inf$ follow from Sion's minimax theorem.

\emph{Step 2 (Optimization over $Q$).}
Fix $\phi \in \Phi$. The optimization over $Q$ becomes
\begin{equation}
\sup_{Q \in \Delta(\mathcal X)}
\left\{
\mathbb E_Q\!\left[
\frac{1}{T} U_L(m) - \lambda \phi(m)
\right]
+
H(Q)
\right\}.
\end{equation}
This is a strictly concave optimization problem. Introducing a Lagrange multiplier for the normalization constraint $\sum_m Q(m)=1$, the first-order optimality condition yields
$\log Q^*(m)=U_L(m)/T-\lambda\phi(m)-\log Z(\phi)$, which gives \eqref{eq:gibbs-phi}. Uniqueness follows from strict concavity of the entropy.

\emph{Step 3 (Reduced problem).}
Substituting $Q^*(\cdot \mid \phi)$ into the objective yields $\sup_Q(\cdot)=\log Z(\phi)$ by the standard variational characterization of the log-partition function. Adding the regulator's remaining reference term gives \eqref{eq:dual-phi}.
\end{proof}

\subsection{Special Case: KL Regulation}

We specialize the variational regulator to the KL case. This case is especially readable because it produces a closed-form equilibrium in which the regulated generator is a geometric interpolation between a utility-based Gibbs law and the regulated reference law.

\begin{proposition}[KL Regulator]
\label{prop:kl-regulator}
Let $\psi(t) = e^t$. Then the regulator functional reduces to the Kullback--Leibler divergence, $\mathcal R(Q)=D_{\mathrm{KL}}(Q \,\|\, Q_{\mathrm{reg}})$.
\end{proposition}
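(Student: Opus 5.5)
The plan is to apply Theorem~\ref{thm:regulator-divergence} with $\psi(t)=e^t$ and then check that the resulting integrand gives the KL divergence. This $\psi$ is proper, convex and lower semicontinuous (indeed smooth and strictly convex), so the theorem's hypotheses hold. It then only remains to compute $\psi^*$ and identify $\mathbb E_{Q_{\mathrm{reg}}}[\psi^*(dQ/dQ_{\mathrm{reg}})]$ with $D_{\mathrm{KL}}(Q\|Q_{\mathrm{reg}})$. Throughout we assume $Q\ll Q_{\mathrm{reg}}$ and write $r=dQ/dQ_{\mathrm{reg}}$.

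\emph{Step 1: the conjugate.} For $u>0$, the map $t\mapsto ut-e^t$ is strictly concave. Its maximizer is $t^*=\log u$, which gives
\[
\psi^*(u)=u\log u-u.
\]
At $u=0$ the supremum is $\lim_{t\to-\infty}(-e^t)=0$, consistent with the convention $0\log 0=0$. For $u<0$ we have $\psi^*(u)=+\infty$, but this case never arises because $r\ge 0$.

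\emph{Step 2: substitution.} By Theorem~\ref{thm:regulator-divergence},
\[
\mathcal R(Q)=\mathbb E_{Q_{\mathrm{reg}}}[r\log r]-\mathbb E_{Q_{\mathrm{reg}}}[r].
\]
Since $\mathbb E_{Q_{\mathrm{reg}}}[r]=1$, this equals $D_{\mathrm{KL}}(Q\|Q_{\mathrm{reg}})-1$.

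\emph{Step 3: fixing the constant.} The identification holds only up to the additive constant $\psi^*(1)=-1$. By the Regulator-as-Divergence proposition, $f(u)=\psi^*(u)-\psi^*(1)=u\log u-u+1$, which is the normalized KL generator. Hence $\mathcal R(Q)=D_f(Q\|Q_{\mathrm{reg}})+\psi^*(1)$ with $D_f=D_{\mathrm{KL}}$. There are two ways to make this an exact equality, and I will state explicitly which is used. The first is to note that the earlier example normalizes $\psi^*(u)=u\log u-u+1$, which corresponds to the shifted loss $\psi(t)=e^t-1$; this shift removes the constant and does not change the optimal discriminator $\phi^*=\log r$, by Corollary~\ref{cor:optimal-discriminator}. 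The second is to read the statement modulo additive constants, which do not affect the saddle-point problem \eqref{eq:game-revised}.

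\emph{The main obstacle} is exactly this constant bookkeeping, together with the richness assumption needed for the pointwise supremum. On the finite space $\mathcal X_L$ the richness assumption is harmless, since every function is bounded and we can take $\Phi=\mathbb R^{\mathcal X_L}$. Boundary points where $r=0$ are handled by the computation of $\psi^*(0)$ in Step 1.
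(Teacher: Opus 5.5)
Your route is the paper's: invoke Theorem~\ref{thm:regulator-divergence}, compute $\psi^*$, and substitute.

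The one difference is in your favor. The paper's proof asserts that the conjugate of $\psi(t)=e^t$ is $u\log u-u+1$, but that is the conjugate of $e^t-1$. Your Step~1 value $\psi^*(u)=u\log u-u$ is the correct one. So for $\psi(t)=e^t$ one literally gets $\mathcal R(Q)=D_{\mathrm{KL}}(Q\,\|\,Q_{\mathrm{reg}})-1$. This agrees with the paper's own Regulator-as-Divergence proposition, since $\psi^*(1)=-1$. It can also be checked directly: $\phi=\log r$ gives $\mathbb E_Q[\log r]-\mathbb E_{Q_{\mathrm{reg}}}[r]=D_{\mathrm{KL}}-1$.

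Your first repair, $\psi(t)=e^t-1$, is exactly what the paper's proof implicitly uses. It makes the stated identity exact while keeping $\phi^*=\log r$. Your second reading (equality up to an additive constant) is also harmless downstream. The constant only shifts the objective in Proposition~\ref{prop:kl-generator} by $\lambda$, so the equilibrium in Proposition~\ref{prop:kl-solution} is unchanged.
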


\begin{proof}
The convex conjugate of $\psi(t)=e^t$ is $\psi^*(u)=u\log u-u+1$. Substituting this choice into the divergence representation \eqref{eq:f-divergence} yields
\[
\mathcal R(Q)
=
\mathbb E_{Q_{\mathrm{reg}}}
\left[
\frac{dQ}{dQ_{\mathrm{reg}}}(m)
\log \frac{dQ}{dQ_{\mathrm{reg}}}(m)
\right]
=
D_{\mathrm{KL}}(Q \,\|\, Q_{\mathrm{reg}}),
\]
which proves the claim.
\end{proof}

\begin{proposition}[Generator Problem under KL Regulation]
\label{prop:kl-generator}
Under KL regulation, the generator solves
\begin{equation}
\sup_{Q \in \Delta(\mathcal X)}
\left\{
\frac{1}{T} \mathbb E_Q[U_L(m)]
+
H(Q)
-
\lambda D_{\mathrm{KL}}(Q \,\|\, Q_{\mathrm{reg}})
\right\}.
\end{equation}
\end{proposition}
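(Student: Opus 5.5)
The statement of Proposition~\ref{prop:kl-generator} is essentially a specialization of the regulated objective, so the plan is to start from the general regulated generator problem
\[
\sup_{Q\in\Delta(\mathcal X)}\Big\{\tfrac1T\mathbb E_Q[U_L(m)]+H(Q)-\lambda\mathcal R(Q)\Big\}
\]
introduced before \eqref{eq:game-revised} and substitute Proposition~\ref{prop:kl-regulator}. That proposition identifies $\mathcal R(Q)=D_{\mathrm{KL}}(Q\,\|\,Q_{\mathrm{reg}})$ when $\psi(t)=e^t$. The resulting objective is exactly the displayed one.

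The only point needing care is the domain. Theorem~\ref{thm:regulator-divergence}, and hence Proposition~\ref{prop:kl-regulator}, was stated for $Q\ll Q_{\mathrm{reg}}$. I will handle $Q\not\ll Q_{\mathrm{reg}}$ directly from the variational definition \eqref{eq:regulator-general} with $\psi(t)=e^t$. Suppose $Q(m_0)>0$ but $Q_{\mathrm{reg}}(m_0)=0$ for some message $m_0$ (on the finite space $\mathcal X_L$ this is what failure of absolute continuity means). Take $\phi=t\mathbf 1_{\{m_0\}}$. Then
\[
\mathbb E_Q[\phi]-\mathbb E_{Q_{\mathrm{reg}}}[e^{\phi}]=tQ(m_0)-1\to\infty
\]
as $t\to\infty$, so $\mathcal R(Q)=+\infty$. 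This agrees with the convention $D_{\mathrm{KL}}(Q\,\|\,Q_{\mathrm{reg}})=+\infty$. Such $Q$ therefore contribute $-\infty$ to the objective and never affect the supremum, and the identity $\mathcal R=D_{\mathrm{KL}}$ holds on all of $\Delta(\mathcal X)$.

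A small bookkeeping step is the constant in the conjugate. With $\psi^*(u)=u\log u-u+1$, the representation gives
\[
\mathbb E_{Q_{\mathrm{reg}}}\Big[r\log r-r+1\Big],\qquad r=\frac{dQ}{dQ_{\mathrm{reg}}}.
\]
Since $\mathbb E_{Q_{\mathrm{reg}}}[r]=1$, the $-r+1$ terms cancel, which confirms the KL form. I would optionally remark, as a consistency check, that the problem is well posed. Expanding $H(Q)-\lambda D_{\mathrm{KL}}$ gives the objective
\[
\mathbb E_Q\big[U_L/T+\lambda\log Q_{\mathrm{reg}}\big]+(1+\lambda)H(Q),
\]
which is strictly concave. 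That strict concavity foreshadows the geometric-interpolation equilibrium, but it is not needed for the statement itself.

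I expect the main obstacle to be only the absolute-continuity issue described above; the remainder is direct substitution.
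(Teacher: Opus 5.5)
Your route is the paper's: substitute Proposition~\ref{prop:kl-regulator} into the regulated objective $\frac{1}{T}\mathbb E_Q[U_L(m)]+H(Q)-\lambda\mathcal R(Q)$. Your treatment of $Q\not\ll Q_{\mathrm{reg}}$ via $\phi=t\mathbf 1_{\{m_0\}}$ is correct and adds something the paper lacks. Theorem~\ref{thm:regulator-divergence} only covers $Q\ll Q_{\mathrm{reg}}$, while the supremum runs over all of $\Delta(\mathcal X)$. Your computation shows that such $Q$ have $\mathcal R(Q)=+\infty$, matching $D_{\mathrm{KL}}(Q\,\|\,Q_{\mathrm{reg}})=+\infty$.

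Your bookkeeping step, however, is wrong, and it repeats a slip made in the paper. For $\psi(t)=e^t$ one has $\psi^*(u)=\sup_t\{ut-e^t\}=u\log u-u$, attained at $t=\log u$. The function $u\log u-u+1$ is the normalized generator $f=\psi^*-\psi^*(1)$, not $\psi^*$ itself. Consequently
$\mathcal R(Q)=\mathbb E_{Q_{\mathrm{reg}}}[r\log r-r]=D_{\mathrm{KL}}(Q\,\|\,Q_{\mathrm{reg}})-1$. This agrees with the paper's own \emph{Regulator as a Divergence} proposition, since $\psi^*(1)=-1$.

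Your own test function already exhibits the discrepancy. At $t=0$ it gives the value $-1$, and indeed $\mathcal R(Q_{\mathrm{reg}})=\sup_\phi\mathbb E_{Q_{\mathrm{reg}}}[\phi-e^{\phi}]=-1\neq 0$. So your claimed identity $\mathcal R=D_{\mathrm{KL}}$ on $\Delta(\mathcal X)$ is false. What is true is that the regulated objective equals the displayed objective plus the constant $\lambda$.

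The maximizer is unchanged, so Proposition~\ref{prop:kl-solution} survives. The proposition therefore holds in the sense that the generator faces an equivalent problem, but the optimal values differ by $\lambda$. An exact identity would require $\psi(t)=e^{t-1}$ or $\psi(t)=e^{t}-1$ instead of $e^t$.
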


\begin{proof}
Under the KL choice $\psi(t)=e^t$, Proposition~\ref{prop:kl-regulator} gives $\mathcal R(Q)=D_{\mathrm{KL}}(Q\,\|\,Q_{\mathrm{reg}})$. Substituting this identity into the regulated generator objective $\frac{1}{T}\mathbb E_Q[U_L(m)]+H(Q)-\lambda\mathcal R(Q)$ gives the displayed optimization problem.
\end{proof}

\begin{proposition}[Closed-Form Equilibrium Distribution]
\label{prop:kl-solution}
The optimization problem in Proposition~\ref{prop:kl-generator} admits a unique maximizer given by
\begin{equation}
Q^*(m)
=
\frac{
Q_{\mathrm{reg}}(m)^{\frac{\lambda}{1+\lambda}}
\exp\!\left(
\frac{1}{T(1+\lambda)} U_L(m)
\right)
}{
Z_{\mathrm{reg}}
},
\end{equation}
where the normalization constant is
\begin{equation}
Z_{\mathrm{reg}}
=
\sum_{m \in \mathcal X}
Q_{\mathrm{reg}}(m)^{\frac{\lambda}{1+\lambda}}
\exp\!\left(
\frac{1}{T(1+\lambda)} U_L(m)
\right).
\end{equation}
\end{proposition}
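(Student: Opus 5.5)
The plan is to reduce the KL-regulated objective of Proposition~\ref{prop:kl-generator} to the unregulated entropy-regularized problem of Proposition~\ref{prop:variational}, with a rescaled effective utility. That proposition then gives both existence and uniqueness directly.

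First, restrict attention to $Q \ll Q_{\mathrm{reg}}$. Any $Q$ charging a message with $Q_{\mathrm{reg}}(m)=0$ has $D_{\mathrm{KL}}(Q\,\|\,Q_{\mathrm{reg}})=+\infty$. For $\lambda>0$ its objective is therefore $-\infty$, so such $Q$ cannot be maximizers. The case $\lambda=0$ is exactly Proposition~\ref{prop:variational}, with the convention $Q_{\mathrm{reg}}^0\equiv 1$.

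On the finite set $\mathcal X_L$, expand
\[
D_{\mathrm{KL}}(Q\,\|\,Q_{\mathrm{reg}})=-H(Q)-\mathbb E_Q[\log Q_{\mathrm{reg}}(m)].
\]
Substituting this, the objective becomes
\[
\frac1T\mathbb E_Q[U_L]+(1+\lambda)H(Q)+\lambda\,\mathbb E_Q[\log Q_{\mathrm{reg}}]
=(1+\lambda)\Big\{\mathbb E_Q[V(m)]+H(Q)\Big\},
\]
where
\[
V(m):=\frac{U_L(m)}{T(1+\lambda)}+\frac{\lambda}{1+\lambda}\log Q_{\mathrm{reg}}(m).
\]
Since $1+\lambda>0$, maximizing the objective is equivalent to maximizing $\mathbb E_Q[V]+H(Q)$. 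This is the functional $\mathcal J_L$ of Proposition~\ref{prop:variational}, with $U_L/T$ replaced by $V$. That proposition, applied with effective temperature $1$ and utility $V$, yields the unique maximizer $Q^*(m)\propto e^{V(m)}$. This is exactly
\[
Q_{\mathrm{reg}}(m)^{\lambda/(1+\lambda)}\exp\!\big(U_L(m)/(T(1+\lambda))\big),
\]
normalized by $Z_{\mathrm{reg}}$.

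The main obstacle is the support issue. Proposition~\ref{prop:variational} assumes a real-valued utility, but here $V=-\infty$ off $\operatorname{supp}Q_{\mathrm{reg}}$. I handle this by applying the proposition on the finite set $\operatorname{supp}Q_{\mathrm{reg}}$. There $V$ is finite, strict concavity of $H$ gives uniqueness, and the resulting formula assigns zero mass outside the support. This agrees with $0^{\lambda/(1+\lambda)}=0$, and it is consistent with the first step, which excluded any $Q$ putting mass outside the support. Finally, $Z_{\mathrm{reg}}>0$ because $Q_{\mathrm{reg}}$ has nonempty support, so $Q^*$ is well-defined.

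As a cross-check, I would verify the first-order condition directly:
\[
\frac{U_L(m)}{T}-(1+\lambda)\log Q(m)+\lambda\log Q_{\mathrm{reg}}(m)=c
\]
for a multiplier $c$, which gives the same geometric interpolation.
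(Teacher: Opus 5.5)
Your proof is correct and follows essentially the paper's route. The paper makes the same rewriting of the objective as $\mathbb E_Q[U_L/T-(1+\lambda)\log Q+\lambda\log Q_{\mathrm{reg}}]$ and then solves the Lagrange first-order condition directly, which is exactly your cross-check. Factoring out $1+\lambda$ and invoking Proposition~\ref{prop:variational} with effective utility $V$ amounts to the same computation. Your explicit handling of messages with $Q_{\mathrm{reg}}(m)=0$ and of the case $\lambda=0$ is extra care the paper's proof omits, since it silently assumes $\log Q_{\mathrm{reg}}$ is finite everywhere.
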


\begin{proof}
Using the identity $D_{\mathrm{KL}}(Q \,\|\, Q_{\mathrm{reg}})=\mathbb E_Q[\log(Q(m)/Q_{\mathrm{reg}}(m))]$, the objective can be written as
\[
\mathbb E_Q\!\left[
\frac{1}{T} U_L(m)
-
\log Q(m)
-
\lambda \log \frac{Q(m)}{Q_{\mathrm{reg}}(m)}
\right].
\]

Rearranging terms yields
\[
\mathbb E_Q\!\left[
\frac{1}{T} U_L(m)
-
(1+\lambda)\log Q(m)
+
\lambda \log Q_{\mathrm{reg}}(m)
\right].
\]

This is a strictly concave optimization problem over $Q$. Introducing a Lagrange multiplier for the normalization constraint and taking first-order conditions, we obtain $\log Q^*(m)=U_L(m)/(T(1+\lambda))+(\lambda/(1+\lambda))\log Q_{\mathrm{reg}}(m)-\log Z_{\mathrm{reg}}$, which yields the stated form.
\end{proof}

The formula in Proposition~\ref{prop:kl-solution} is the clearest expression of the paper's main mechanism. The factor $\exp(U_L(m)/(T(1+\lambda)))$ rewards useful messages, while the factor $Q_{\mathrm{reg}}(m)^{\lambda/(1+\lambda)}$ pulls probability mass toward regulated behavior. Thus $\lambda$ does not simply delete undesirable messages; it continuously reweights the whole distribution over messages.

\begin{proposition}[Structure of the Equilibrium]
\label{prop:kl-structure}
The optimal distribution satisfies
\begin{equation}
\log Q^*(m)
=
\frac{1}{T(1+\lambda)} U_L(m)
+
\frac{\lambda}{1+\lambda} \log Q_{\mathrm{reg}}(m)
+ \mathrm{const}.
\end{equation}
\end{proposition}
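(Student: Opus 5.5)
The plan is to take logarithms in the closed form of Proposition~\ref{prop:kl-solution} and identify the constant as $-\log Z_{\mathrm{reg}}$. Before that, we have to make sure the logarithms are well defined. On the support of $Q_{\mathrm{reg}}$, the factor $Q_{\mathrm{reg}}(m)^{\lambda/(1+\lambda)}$ is strictly positive, and the exponential factor is always positive. Since $\mathcal X_L$ is finite, $Z_{\mathrm{reg}}$ is then a finite positive number (at least one message has positive reference mass). Hence $Q^*(m)>0$ on $\operatorname{supp} Q_{\mathrm{reg}}$, and $\log Q^*(m)$ is finite there.

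Applying $\log$ to the expression in Proposition~\ref{prop:kl-solution} then gives
\[
\log Q^*(m)=\frac{1}{T(1+\lambda)}U_L(m)+\frac{\lambda}{1+\lambda}\log Q_{\mathrm{reg}}(m)-\log Z_{\mathrm{reg}}.
\]
The last term does not depend on $m$, which is exactly the claimed identity with $\mathrm{const}=-\log Z_{\mathrm{reg}}$.

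As a cross-check independent of the closed form, the same relation is the first-order condition from the proof of Proposition~\ref{prop:kl-solution}. Writing the objective as $\mathbb E_Q[U_L/T-(1+\lambda)\log Q+\lambda\log Q_{\mathrm{reg}}]$ and adding a multiplier $c$ for $\sum_m Q(m)=1$, stationarity gives $U_L(m)/T-(1+\lambda)(\log Q(m)+1)+\lambda\log Q_{\mathrm{reg}}(m)=c$. Dividing by $1+\lambda$ yields the stated linear relation, with the constant absorbing $c$ and the $1$.

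The only delicate point, and the main obstacle, is messages with $Q_{\mathrm{reg}}(m)=0$. For $\lambda>0$, the KL penalty forces $Q^*(m)=0$ there, so both sides equal $-\infty$. The identity should be read either on $\operatorname{supp}Q_{\mathrm{reg}}$ or in the extended reals with $\log 0=-\infty$; I would state this convention explicitly. For $\lambda=0$, the reference term drops out and the identity reduces to Proposition~\ref{prop:variational}, with $\mathrm{const}=-\log Z_L(T)$.
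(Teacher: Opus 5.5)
Your proposal is correct and matches the paper's proof: take logarithms in the closed form of Proposition~\ref{prop:kl-solution} and identify the $m$-independent term $-\log Z_{\mathrm{reg}}$ as the additive constant. Your extra remarks are sound additions that the paper leaves implicit: reading the identity on $\operatorname{supp} Q_{\mathrm{reg}}$ or with $\log 0=-\infty$, treating $\lambda=0$ separately, and the first-order-condition cross-check.
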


\begin{proof}
Taking logarithms in the closed-form expression of Proposition~\ref{prop:kl-solution} gives $\log Q^*(m)=U_L(m)/(T(1+\lambda))+(\lambda/(1+\lambda))\log Q_{\mathrm{reg}}(m)-\log Z_{\mathrm{reg}}$. Since $Z_{\mathrm{reg}}$ does not depend on $m$, the term $-\log Z_{\mathrm{reg}}$ is the additive constant in the statement.
\end{proof}

\begin{corollary}[Limiting Regimes of the KL-Regulated Equilibrium]
\label{cor:kl-limits}
The equilibrium distribution in Proposition~\ref{prop:kl-solution} is a tilted Gibbs measure. As $\lambda \to 0$, it reduces to the standard Gibbs distribution proportional to $\exp(U_L(m)/T)$; as $\lambda \to \infty$, it converges to $Q_{\mathrm{reg}}$, assuming $Q_{\mathrm{reg}}(m)>0$ on the message space.
\end{corollary}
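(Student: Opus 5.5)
We argue directly from the closed form in Proposition~\ref{prop:kl-solution}, written as in Proposition~\ref{prop:kl-structure}. Set $\alpha:=\lambda/(1+\lambda)\in[0,1)$, so that $1/(T(1+\lambda))=(1-\alpha)/T$. Then
\[
Q^*(m)\propto Q_{\mathrm{reg}}(m)^{\alpha}\,\exp\!\big((1-\alpha)U_L(m)/T\big)
= Q_{\mathrm{reg}}(m)^{\alpha}\,\big(Q_0(m)Z_L(T)\big)^{1-\alpha},
\]
where $Q_0(m)=\exp(U_L(m)/T)/Z_L(T)$ is the unregulated Gibbs law from Proposition~\ref{prop:variational}.

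The tilted-Gibbs claim follows at once. Absorbing the constant $Z_L(T)^{1-\alpha}$ into the normalizer gives $Q^*\propto Q_{\mathrm{reg}}^{\alpha}Q_0^{1-\alpha}$. Equivalently, $Q^*$ is the Gibbs measure with modified energy $(1-\alpha)U_L/T+\alpha\log Q_{\mathrm{reg}}$, that is, an exponential tilt of $Q_{\mathrm{reg}}$ (or of $Q_0$). It is the geometric interpolation between the two laws.

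For the limits, the key point is that $\mathcal X$ is the finite set $\mathcal X_L$, so every sum is a finite sum. It therefore suffices to show that the unnormalized weights $w_\alpha(m)=Q_{\mathrm{reg}}(m)^{\alpha}\exp((1-\alpha)U_L(m)/T)$ converge pointwise to strictly positive limits. Normalization is then continuous, because the normalizer is a finite sum of positive continuous terms and stays bounded away from zero.

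As $\lambda\to0$ we have $\alpha\to0$, and $w_\alpha(m)\to\exp(U_L(m)/T)$. This uses $Q_{\mathrm{reg}}(m)^{\alpha}\to1$, which is valid whenever $Q_{\mathrm{reg}}(m)>0$. If $Q_{\mathrm{reg}}(m)=0$, then $Q^*(m)=0$ for every $\lambda>0$, so the $\lambda\to0$ claim needs positivity of $Q_{\mathrm{reg}}$ as well. I will invoke the stated assumption $Q_{\mathrm{reg}}>0$ for both limits and note this explicitly. Hence $Q^*\to Q_0$.

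As $\lambda\to\infty$ we have $\alpha\to1$, so $w_\alpha(m)\to Q_{\mathrm{reg}}(m)\cdot 1$. The normalizer then tends to $\sum_m Q_{\mathrm{reg}}(m)=1$, and $Q^*\to Q_{\mathrm{reg}}$ pointwise, hence in total variation since the space is finite.

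The only delicate step is justifying the passage to the limit inside the normalization. On finite $\mathcal X_L$ with $Q_{\mathrm{reg}}>0$ this is trivial. If one wished to allow countable $\mathcal X$, the plan would be dominated convergence: for $\alpha\in[1/2,1]$ the bound $w_\alpha\le Q_{\mathrm{reg}}^{1/2}\,\max(1,e^{U_L/T})$ can serve as a dominating function, provided it is summable. I would, however, restrict to the finite fixed-length case, as the paper does for its finite-length results.
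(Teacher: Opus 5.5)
Your proof is correct and follows the paper's route: the exponent $\lambda/(1+\lambda)$ on $Q_{\mathrm{reg}}$ and the utility coefficient $1/(T(1+\lambda))$ tend to $0$ and $1/T$ as $\lambda\to0$, and to $1$ and $0$ as $\lambda\to\infty$, and you then pass to the limit through the finite normalizing sum. Your extra remark sharpens the statement: positivity of $Q_{\mathrm{reg}}$ is exactly what the $\lambda\to0$ limit needs, since $Q^*(m)=0$ for every $\lambda>0$ wherever $Q_{\mathrm{reg}}(m)=0$; on the finite space $\mathcal X_L$, the $\lambda\to\infty$ limit holds even without positivity, because the normalizer still tends to $\sum_m Q_{\mathrm{reg}}(m)=1$.
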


\begin{proof}
In the formula of Proposition~\ref{prop:kl-solution}, the exponent on $Q_{\mathrm{reg}}(m)$ is $\lambda/(1+\lambda)$ and the utility coefficient is $1/(T(1+\lambda))$. The first coefficient tends to $0$ and the second to $1/T$ as $\lambda\to0$; the first tends to $1$ and the second to $0$ as $\lambda\to\infty$. Normalization gives the two stated limits.
\end{proof}

Thus $\lambda$ controls the tradeoff between utility maximization and regulatory alignment: the utility term promotes high-value messages, while the reference-law term incorporates regulated behavior as a prior.

\subsection{Regulated Free Energy Density and Central Limit Theorem}

The KL-regulated equilibrium is still a distribution over an exponentially large message space. To study its large-$L$ behavior, we summarize the normalization scale by a free energy density. This quantity records the per-token growth rate of the regulated partition function and lets us compare long messages without tracking the full combinatorial sum. We define the regulated free energy density as
\begin{equation}
\mathcal F_{\mathrm{reg}}(T)
=
\lim_{L \to \infty}
\frac{1}{L}
\log
\sum_{m \in \mathcal V^L}
Q_{\mathrm{reg}}(m)^{\frac{\lambda}{1+\lambda}}
\exp\!\left(
\frac{1}{T(1+\lambda)} U_L(m)
\right),
\label{eq:free-energy-reg}
\end{equation}
provided that the limit exists.

The equilibrium distribution \(Q^*\) satisfies
\begin{equation}
\frac{1}{L} \log Q^*(m)
=
\frac{1}{T(1+\lambda)} \bar U_L(m)
+
\frac{\lambda}{1+\lambda} \frac{1}{L}\log Q_{\mathrm{reg}}(m)
-
\mathcal F_{\mathrm{reg}}(T)
+ o(1),
\label{eq:log-density-reg}
\end{equation}
as \(L \to \infty\), where \(\bar U_L(m) := U_L(m)/L\).

Thus, \(\mathcal F_{\mathrm{reg}}(T)\) plays the role of a normalization constant (pressure), ensuring that \(Q^*\) is properly normalized at the exponential scale. In statistical-mechanics language, it is the pressure; in information-theoretic language, it is the normalized log-partition value; in the present paper, it is the per-token value of regulated generation.

\medskip

We establish a central limit theorem for the log-density under the optimal generator. The reason a CLT is natural here is that many receiver scores and log-densities are sums of token-level contributions. When no single token dominates and conditional variances stabilize, the average score of a long message fluctuates approximately like a Gaussian random variable around its equilibrium mean.

\begin{theorem}[CLT for General Generator--Regulator Equilibrium]
\label{thm:clt-general}
Consider the equilibrium distribution $Q^*(m)\propto \exp(U_L(m)/T-\lambda\phi(m))$.

Assume:

\begin{enumerate}
\item[(G1)] (Additive structure) $U_L(m)=\sum_{k=1}^L u(w_k,h_k)$ and $\phi(m)=\sum_{k=1}^L \phi_k(w_k,h_k)$.

\item[(G2)] The process $\{(w_k,h_k)\}_{k\ge1}$ is adapted to a filtration $\{\mathcal F_k\}$.

\item[(G3)] (Lindeberg condition) For $Y_k:=u(w_k,h_k)/T-\lambda\phi_k(w_k,h_k)$,
\[
\frac{1}{L}\sum_{k=1}^L
\mathbb E\!\left[
\big(Y_k - \mathbb E[Y_k \mid \mathcal F_{k-1}]\big)^2
\mathbf{1}\{|Y_k|>\varepsilon \sqrt{L}\}
\right]
\to 0
\quad \forall \varepsilon>0.
\]

\item[(G4)] (Conditional variance convergence)
\[
\frac{1}{L}\sum_{k=1}^L
\mathrm{Var}(Y_k \mid \mathcal F_{k-1})
\;\xrightarrow{P}\;
\sigma^2 > 0.
\]
\end{enumerate}

Then, as \(L \to \infty\),
\begin{equation}
\sqrt{L}
\left(
\frac{1}{L}\log Q^*(m) - \mu
\right)
\;\xrightarrow{d}\;
\mathcal N(0,\sigma^2),
\end{equation}
where $\mu=\mathbb E_{Q^*}[Y_1]-\mathcal F_{\mathrm{reg}}(T)$.
\end{theorem}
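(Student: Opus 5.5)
The plan is to write the normalized log-density as an average of the token increments plus a deterministic normalization, and then apply the martingale central limit theorem to that average. By (G1), for every $m\in\mathcal V^L$,
\[
\log Q^*(m)=\sum_{k=1}^L Y_k-\log Z_L,
\]
where $Z_L=\sum_{m}\exp(U_L(m)/T-\lambda\phi(m))$. Dividing by $L$ gives
\[
\frac1L\log Q^*(m)=\frac1L\sum_{k=1}^L Y_k-\frac1L\log Z_L .
\]
Under the KL specialization $\lambda\phi=-\frac{\lambda}{1+\lambda}\log Q_{\mathrm{reg}}$ with rescaled utility, $\frac1L\log Z_L$ is exactly the quantity in \eqref{eq:free-energy-reg}. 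I will therefore treat $\frac1L\log Z_L$ as converging to $\mathcal F_{\mathrm{reg}}(T)$, in the same sense as \eqref{eq:log-density-reg}. The randomness then sits entirely in $S_L:=\sum_k Y_k$ with $m\sim Q^*$.

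Next I will use the Doob decomposition. Set $D_k:=Y_k-\mathbb E[Y_k\mid\mathcal F_{k-1}]$, a martingale difference sequence with respect to the filtration in (G2), and $A_L:=\sum_k \mathbb E[Y_k\mid\mathcal F_{k-1}]$, the predictable compensator. Then
\[
\sqrt L\Bigl(\tfrac1L\log Q^*(m)-\mu\Bigr)=\frac{1}{\sqrt L}\sum_{k=1}^L D_k+\frac{1}{\sqrt L}\bigl(A_L-L\,\mathbb E_{Q^*}[Y_1]\bigr)-\sqrt L\Bigl(\tfrac1L\log Z_L-\mathcal F_{\mathrm{reg}}(T)\Bigr).
\]
For the first term, (G3) is the conditional Lindeberg condition and (G4) is convergence of the normalized conditional variance to the constant $\sigma^2$. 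The martingale CLT of McLeish and Hall--Heyde then gives $L^{-1/2}\sum_k D_k\xrightarrow{d}\mathcal N(0,\sigma^2)$. One technicality: (G3) truncates on $|Y_k|$ rather than on $|D_k|$. I expect to reconcile this by noting that the centered increment is dominated by the uncentered one up to a predictable shift, or else by reading (G3) with $D_k$ in the indicator, which is the intended meaning.

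The main obstacle is the two remaining centering terms, which must be $o_P(1)$. Neither follows from (G1)--(G4) as stated: the statement asserts the centering $\mu=\mathbb E_{Q^*}[Y_1]-\mathcal F_{\mathrm{reg}}(T)$ but gives no rate. I plan to make explicit a stationarity-type hypothesis implicit in the statement. Concretely, I would require
\[
\mathbb E[Y_k\mid\mathcal F_{k-1}]=\mathbb E_{Q^*}[Y_1]+\text{(martingale-negligible remainder)},
\]
so that $L^{-1/2}(A_L-L\,\mathbb E_{Q^*}[Y_1])\to0$ in probability. I would also require $\log Z_L=L\mathcal F_{\mathrm{reg}}(T)+o(\sqrt L)$, which holds, for example, when the tilted token chain is a finite-state ergodic Markov model, since then $\log Z_L=L\mathcal F+O(1)$ by Perron--Frobenius. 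With these two facts, Slutsky's lemma combines the three pieces and yields the stated limit $\mathcal N(0,\sigma^2)$.

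If one only wants the paper's intended reading, where $\mu$ is the effective per-token centering, the proof reduces to the martingale CLT step plus Slutsky. I would present it that way and flag the $o(\sqrt L)$ free-energy and compensator conditions as the precise extra assumptions being used.
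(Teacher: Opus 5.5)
Your proof follows the paper's proof step for step. It uses the exact identity $\log Q^*(m)=\sum_k Y_k-\log Z_L$, the martingale CLT for $D_k=Y_k-\mathbb E[Y_k\mid\mathcal F_{k-1}]$ under (G3)--(G4), and the split of $L^{-1}\sum_k Y_k$ into a predictable part and a martingale part.

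The two conditions you isolate are $\log Z_L=L\,\mathcal F_{\mathrm{reg}}(T)+o(\sqrt L)$ and $L^{-1/2}\bigl(A_L-L\,\mathbb E_{Q^*}[Y_1]\bigr)\to 0$ in probability. These are exactly what the paper uses tacitly when it asserts only that $L^{-1}\log Z_L\to\mathcal F_{\mathrm{reg}}(T)$ and that the predictable average tends to $\mathbb E_{Q^*}[Y_1]$. Plain convergence does not survive multiplication by $\sqrt L$, and the conclusion can fail under (G1)--(G4) alone. For example, take $\lambda=0$, $\mathcal V=\{0,1\}$ and $U_L(m)/T=\sum_k(1+k^{-1/4})w_k$, so that $Q^*$ is a product law, (G1)--(G4) hold, and $\sqrt L\bigl(L^{-1}\log Q^*(m)-\mu\bigr)$ diverges. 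So you are right to state these two conditions as explicit hypotheses.
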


\begin{proof}
Under (G1), the log-density admits the decomposition
\[
\log Q^*(m)
=
\sum_{k=1}^L Y_k
-
\log Z_L,
\]
where
\[
Z_L = \sum_{m \in \mathcal V^L}
\exp\!\left(
\frac{1}{T}U_L(m) - \lambda \phi(m)
\right).
\]
Dividing by \(L\), we obtain
\[
\frac{1}{L}\log Q^*(m)
=
\frac{1}{L}\sum_{k=1}^L Y_k
-
\frac{1}{L}\log Z_L.
\]

By definition of the regulated free energy density, $(1/L)\log Z_L \to \mathcal F_{\mathrm{reg}}(T)$. Define the martingale difference sequence $D_k:=Y_k-\mathbb E[Y_k\mid\mathcal F_{k-1}]$.
Under (G3)--(G4), the martingale central limit theorem yields
\[
\frac{1}{\sqrt{L}}\sum_{k=1}^L D_k
\;\xrightarrow{d}\;
\mathcal N(0,\sigma^2).
\]

Moreover,
\[
\frac{1}{L}\sum_{k=1}^L Y_k
=
\frac{1}{L}\sum_{k=1}^L \mathbb E[Y_k \mid \mathcal F_{k-1}]
+
\frac{1}{L}\sum_{k=1}^L D_k,
\]
and the predictable term converges to \(\mathbb E_{Q^*}[Y_1]\).

Combining the above yields the result.
\end{proof}

\section{Keyword-Based Safety Regulation}

We now specialize the feature-restricted regulator in \eqref{eq:feature-restricted-regulator} to a token-level score. This specialization connects the variational formulation to binary classification and statistical detection. Let $\mathcal K \subseteq \mathcal V$ be a finite set of regulated tokens, and define a token-level scoring function $s:\mathcal V\to\mathbb R$ by $s(w):=\kappa\,\mathbf 1\{w\in\mathcal K\}$, where $\kappa\ge0$. For a message $m=(w_1,\dots,w_L)\in\mathcal X$, define the regulated-token fraction $f_{\mathcal K}(m):=L^{-1}\sum_{k=1}^L\mathbf 1\{w_k\in\mathcal K\}$. The aggregate score is $S(m):=L^{-1}\sum_{k=1}^L s(w_k)=\kappa f_{\mathcal K}(m)$.

The quantity $S(m)$ summarizes the presence of regulated tokens and serves as a sufficient statistic for classification. This is the simplest version of the feature-based regulator: instead of analyzing all words and all semantics, the receiver compresses a message to one number. More realistic filters use many features, but the one-dimensional case is pedagogically useful because it makes the saddle point, threshold rule, and CLT visible in closed form.

\subsection{Binary Classification Game}

We formulate regulation as a binary classification game between a generator and a classifier. This is the score-restricted analogue of the decision problem in \eqref{eq:risk-formal}, with the classifier allowed to use only the statistic $S(m)$.

\begin{definition}[Binary Classification Model]
\label{def:binary-model}
Let $\mathcal X$ denote the message space and let $Q_{\mathrm{reg}} \in \Delta(\mathcal X)$ be a fixed reference distribution. The generator selects a distribution $Q \in \Delta(\mathcal X)$.

We consider the hypothesis testing problem $H_0:m\sim Q_{\mathrm{reg}}$ versus $H_1:m\sim Q$.
\end{definition}

\begin{definition}[Classifier and Restricted Class]
\label{def:classifier}
A classifier is a measurable function $\delta:\mathcal X\to[0,1]$, where $\delta(m)$ denotes the probability of assigning $m$ to class $H_1$.

Let $S(m)$ denote the aggregate score. We restrict attention to classifiers of the form $\delta(m)=g(S(m))$ for some measurable function $g:\mathbb R\to[0,1]$. The admissible class is
\[
\mathcal D_{\mathcal K}
=
\left\{
\delta : \delta(m)=g(S(m))
\right\}.
\]
\end{definition}

\begin{definition}[Classification Risk]
\label{def:risk}
Let $\ell_0, \ell_1 : [0,1] \to \mathbb R$ be convex loss functions. The classification risk is defined as
\begin{equation}
\mathcal L(Q,\delta)
=
\mathbb E_{Q_{\mathrm{reg}}}[\ell_0(\delta(m))]
+
\mathbb E_{Q}[\ell_1(\delta(m))].
\end{equation}
\end{definition}

\begin{definition}[Binary Classification Game]
\label{def:game}
The interaction between the generator and the classifier is defined as the zero-sum game
\begin{equation}
\sup_{Q \in \Delta(\mathcal X)}
\inf_{\delta \in \mathcal D_{\mathcal K}}
\mathcal J(Q,\delta),
\label{eq:binary-game}
\end{equation}
where the payoff functional is
\begin{equation}
\mathcal J(Q,\delta)
:=
\frac{1}{T}\mathbb E_Q[U_L(m)]
+
H(Q)
-
\lambda \mathcal L(Q,\delta),
\end{equation}
with $\lambda \ge 0$.
\end{definition}

 \begin{theorem}[Saddle-Point Characterization]
\label{thm:saddle-characterization}
Consider the binary classification game in Definition~\ref{def:game}. Assume:

\begin{enumerate}
\item the loss functions $\ell_0,\ell_1$ are convex, differentiable, and strictly convex,
\item $U_L$ is bounded,
\item $\mathcal D_{\mathcal K}$ is convex and closed,
\item $\Delta(\mathcal X)$ is compact.
\end{enumerate}

Then there exists a saddle point $(Q^*,\delta^*)$ satisfying the following conditions:

\emph{(i) Optimal classifier.}
For fixed $Q^*$, the optimal classifier $\delta^*$ solves
\begin{equation}
\delta^*(m)
\in
\argmin_{\delta \in \mathcal D_{\mathcal K}}
\left\{
\mathbb E_{Q_{\mathrm{reg}}}[\ell_0(\delta(m))]
+
\mathbb E_{Q^*}[\ell_1(\delta(m))]
\right\}.
\end{equation}

Moreover, for each $m$, $\delta^*(m)$ satisfies the pointwise first-order condition
\begin{equation}
\ell_0'(\delta^*(m)) \, dQ_{\mathrm{reg}}(m)
+
\ell_1'(\delta^*(m)) \, dQ^*(m)
= 0,
\label{eq:delta-foc}
\end{equation}
whenever the densities exist.

\emph{(ii) Optimal generator.}
For fixed $\delta^*$, the optimal generator $Q^*$ is given by
\begin{equation}
Q^*(m)
=
\frac{
\exp\!\left(
\frac{1}{T}U_L(m) - \lambda \ell_1(\delta^*(m))
\right)
}{
Z(\delta^*)
},
\end{equation}
where
\begin{equation}
Z(\delta^*)
=
\sum_{m \in \mathcal X}
\exp\!\left(
\frac{1}{T}U_L(m) - \lambda \ell_1(\delta^*(m))
\right).
\end{equation}

\emph{(iii) Saddle-point property.}
The pair $(Q^*,\delta^*)$ satisfies
\begin{equation}
\mathcal J(Q,\delta^*)
\le
\mathcal J(Q^*,\delta^*)
\le
\mathcal J(Q^*,\delta),
\quad
\forall Q,\delta.
\end{equation}

\end{theorem}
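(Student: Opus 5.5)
The plan is to follow the three-step template of Theorem~\ref{thm:saddle-generator}: first obtain existence of a saddle point from Sion's minimax theorem, and then read off parts (i) and (ii) as the best-response conditions at that saddle point.

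\emph{Existence and part (iii).} For fixed $\delta$, the map $Q\mapsto\mathcal J(Q,\delta)$ is the sum of three pieces:
\begin{itemize}
\item the linear term $\frac1T\mathbb E_Q[U_L]$;
\item the strictly concave entropy $H(Q)$;
\item the term $-\lambda\mathbb E_Q[\ell_1(\delta(m))]$, which is linear in $Q$, while $-\lambda\mathbb E_{Q_{\mathrm{reg}}}[\ell_0(\delta(m))]$ does not depend on $Q$.
\end{itemize}
So $\mathcal J(\cdot,\delta)$ is concave and continuous on the compact simplex $\Delta(\mathcal X)$. Continuity uses boundedness of $U_L$ and of $\ell_1$ on $[0,1]$; for finite $\mathcal X_L$ this is automatic.

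For fixed $Q$, the map $\delta\mapsto-\lambda\mathcal L(Q,\delta)$ is concave, because $\ell_0,\ell_1$ are convex. Hence $\mathcal J(Q,\cdot)$ is concave in $\delta$, which is the wrong direction for the infimum. This sign issue needs care. In the classification game the payoff $\mathcal J$ contains $-\lambda\mathcal L$, yet the classifier minimizes $\mathcal J$, i.e.\ maximizes its own risk. I would treat this as the main obstacle. I would resolve it as the paper implicitly intends, with the classifier minimizing the risk $\mathcal L$, so that the relevant function is convex in $\delta$. In Sion's theorem the $\delta$-player's payoff is then convex, lower semicontinuous, and quasi-convex on the convex closed set $\mathcal D_{\mathcal K}$, and $\mathcal J$ is linear--concave upper semicontinuous in $Q$. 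Sion, together with compactness of $\Delta(\mathcal X)$, gives $\sup\inf=\inf\sup$.

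Attainment on the $\delta$ side also needs an argument. I would get it from compactness of $[0,1]^{\mathcal X}$ for finite $\mathcal X$: $\mathcal D_{\mathcal K}$ is a closed subset of a compact set, hence compact. With attainment on both sides, a saddle point exists, which gives the two inequalities in (iii) by definition.

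\emph{Part (i).} Fix $Q^*$. The classifier's problem is $\min_{\delta\in\mathcal D_{\mathcal K}}\mathcal L(Q^*,\delta)$, which is exactly the displayed argmin. To get the first-order condition, write the risk as $\sum_m[\ell_0(\delta(m))Q_{\mathrm{reg}}(m)+\ell_1(\delta(m))Q^*(m)]$. Differentiate with respect to $\delta(m)$ at an interior point to obtain \eqref{eq:delta-foc}, using differentiability of the losses. Strict convexity gives uniqueness of the pointwise minimizer.

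The restriction $\delta=g(S(m))$ makes the exact condition an aggregated one over level sets $\{S=s\}$. I would handle this either by noting that the pointwise minimizer $(\ell_0'/\ell_1')$-inverse of the likelihood ratio lies in $\mathcal D_{\mathcal K}$ when $S$ is sufficient, or by stating the condition with $Q_{\mathrm{reg}}$ and $Q^*$ replaced by their pushforwards under $S$, as in Proposition~\ref{prop:feature-induced-divergence}.

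\emph{Part (ii).} Fix $\delta^*$. The generator's problem is
\[
\sup_Q\bigl\{\mathbb E_Q[U_L/T-\lambda\ell_1(\delta^*)]+H(Q)\bigr\},
\]
up to an additive constant independent of $Q$. This is exactly the setting of Theorem~\ref{thm:saddle-generator}(ii), with $\phi$ replaced by $\ell_1\circ\delta^*$, or equivalently Proposition~\ref{prop:variational} with the utility $U_L-T\lambda\ell_1(\delta^*)$. That result gives the stated unique Gibbs form with normalizer $Z(\delta^*)$.

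Finally, a saddle point is a pair of mutual best responses. So the pair from the existence step satisfies (i) and (ii), and this closes the argument.
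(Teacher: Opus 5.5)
Your outline matches the paper's own proof: Sion for existence, pointwise separation for (i), the Gibbs/Lagrange computation for (ii), and (iii) read off from the saddle point. You also correctly spot something the paper passes over: for fixed $Q$, $\mathcal J(Q,\cdot)=\mathrm{const}-\lambda\mathcal L(Q,\cdot)$ is concave, not convex, in $\delta$, so the minimax argument of Theorem~\ref{thm:saddle-generator} does not transfer. Your repair, however, does not work. If the classifier minimizes $\mathcal L$ while the generator maximizes $\mathcal J$, which contains $-\lambda\mathcal L$, then both players want $\mathcal L$ small. The game is then no longer zero-sum. There is no single function that is concave in $Q$ and convex in $\delta$, so Sion's theorem has nothing to act on: you use $\mathcal L$ for convexity in $\delta$ and $\mathcal J$ for concavity in $Q$. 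The last step also fails under your own convention. If $\delta^*$ minimizes $\mathcal L(Q^*,\cdot)$, then $\mathcal J(Q^*,\delta^*)\ge\mathcal J(Q^*,\delta)$ for every $\delta$. That is the reverse of the right-hand inequality in (iii), so (iii) does not hold ``by definition''.

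This cannot be repaired within the statement as written, because for $\lambda>0$ conditions (i) and (iii) are incompatible. The law $Q^*$ in (ii) has full support on the finite message space. Strict convexity of $\ell_1$ therefore makes $\mathcal L(Q^*,\cdot)$ strictly convex, hence nonconstant, on $\mathcal D_{\mathcal K}$, which contains both $g\equiv0$ and $g\equiv1$. Condition (i) makes $\delta^*$ its minimizer, while the right inequality in (iii) makes $\delta^*$ its maximizer. You therefore have to say which corrected statement you prove.

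\emph{Option (a).} Take the payoff $\frac1T\mathbb E_Q[U_L]+H(Q)+\lambda\mathcal L(Q,\delta)$. This is the sign Corollary~\ref{cor:regulator-risk} actually dictates: substituting $\mathcal R(Q)=C-\inf_\delta\mathcal L$ into $-\lambda\mathcal R(Q)$ produces $+\lambda\inf_\delta\mathcal L$. This payoff is concave--convex, and Sion together with compactness of $\Delta(\mathcal X_L)$ and $\mathcal D_{\mathcal K}$ gives a genuine saddle point. Then (i) and (iii) hold for this payoff, but the Gibbs law in (ii) becomes $Q^*\propto\exp(U_L/T+\lambda\ell_1(\delta^*(m)))$.

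\emph{Option (b).} Keep $\mathcal J$ as the generator's payoff and $-\mathcal L$ as the classifier's. Each is continuous and concave in the player's own strategy on a compact convex set, so the Debreu--Glicksberg--Fan (Kakutani) theorem yields a Nash equilibrium. Then (i) and (ii) hold as stated, but (iii) must be replaced by $\mathcal J(Q,\delta^*)\le\mathcal J(Q^*,\delta^*)$ and $\mathcal L(Q^*,\delta^*)\le\mathcal L(Q^*,\delta)$.

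Your remark about the restriction $\delta=g(S(m))$ is a correct amendment to (i) that the paper's proof also misses. Under that restriction, the first-order condition holds only after aggregating over level sets of $S$, and only at interior values of $g$.
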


\begin{proof}
The existence of a saddle point follows from the same minimax argument used in Theorem~\ref{thm:saddle-generator}. 

(i) For fixed $Q^*$, the classifier problem is convex and separable across $m$, yielding the pointwise first-order condition \eqref{eq:delta-foc}.

(ii) For fixed $\delta^*$, the generator maximizes a strictly concave functional in $Q$. Introducing a Lagrange multiplier for normalization yields
\[
\log Q^*(m)
=
\frac{1}{T}U_L(m)
-
\lambda \ell_1(\delta^*(m))
-
\log Z(\delta^*),
\]
which gives the stated Gibbs form.

(iii) Follows from the saddle-point optimality conditions.
\end{proof}

\begin{corollary}[Monotone and Threshold Structure of the Optimal Classifier]
\label{cor:threshold}
Consider the binary classification game in Theorem~\ref{thm:saddle-characterization}. Suppose that:

\begin{enumerate}
\item the classifier is restricted to $\mathcal D_{\mathcal K} = \{\delta(m)=g(S(m))\}$,
\item the loss functions $\ell_0,\ell_1$ are convex and strictly proper,
\item the likelihood ratio $\frac{dQ^*}{dQ_{\mathrm{reg}}}(m)$ is a monotone function of the score $S(m)$ (monotone likelihood ratio condition).
\end{enumerate}

Then the optimal classifier $\delta^*$ is monotone in the score $S(m)$. In particular, there exists a non-decreasing function $g^* : \mathbb R \to [0,1]$ such that
$\delta^*(m)=g^*(S(m))$.

Moreover, in the case of deterministic classification, such as $0$--$1$ loss, the optimal classifier admits a threshold structure:
$\delta^*(m)=\mathbf{1}\{S(m)\ge\eta^*\}$ for some threshold $\eta^* \in \mathbb R$.
\end{corollary}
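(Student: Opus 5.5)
The plan is to reduce the restricted classifier problem to a one-dimensional pointwise minimization indexed by the score value. Then we show that the pointwise minimizer is monotone in the likelihood ratio, and hence in $S$ under the MLR hypothesis.

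\emph{Step 1: disintegrate over the score.} Fix $Q^*$ from Theorem~\ref{thm:saddle-characterization} and write $\delta(m)=g(S(m))$. By the change-of-variables argument of Proposition~\ref{prop:feature-induced-divergence}, applied with $\mathcal F=S$, the classifier risk becomes
\[
\mathbb E_{Q_{\mathrm{reg},S}}\big[\ell_0(g(s))+r_S(s)\,\ell_1(g(s))\big],
\qquad r_S(s):=\frac{dQ^*_S}{dQ_{\mathrm{reg},S}}(s).
\]
Here $r_S(s)=\mathbb E_{Q_{\mathrm{reg}}}[r(M)\mid S(M)=s]$, as in the proof of Corollary~\ref{cor:feature-data-processing}. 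Under the monotone likelihood ratio assumption, $r(m)$ is itself a function of $S(m)$, so $r_S(s)=\rho(s)$ for a non-decreasing $\rho$. Because the restricted class allows arbitrary measurable $g$, the minimization separates over $s$, exactly as in Proposition~\ref{prop:score-reformulation-clean}. Thus we may take $g^*(s)\in\argmin_{d\in[0,1]}\{\ell_0(d)+\rho(s)\ell_1(d)\}$.

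\emph{Step 2: monotonicity of the pointwise minimizer.} Define $d^*(u)\in\argmin_{d\in[0,1]}\{\ell_0(d)+u\,\ell_1(d)\}$. I would show that $d^*$ is non-decreasing in $u$ via a standard single-crossing or Topkis argument. The objective $F(d,u)=\ell_0(d)+u\ell_1(d)$ has cross-difference
\[
F(d',u')-F(d,u')-F(d',u)+F(d,u)=(u'-u)\big(\ell_1(d')-\ell_1(d)\big).
\]
For a proper loss penalizing wrong assignment of $H_1$ samples, $\ell_1$ is non-increasing in $d$. Hence for $d'>d$ and $u'>u$ this quantity is $\le0$, which means $F$ has decreasing differences. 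Then the comparison argument applies. Suppose $u<u'$ and $d^*(u')<d^*(u)$. Adding the two optimality inequalities yields a contradiction, at least when the minimizer is unique. Uniqueness holds by strict convexity in the differentiable case, and otherwise we select the largest minimizer. Setting $g^*:=d^*\circ\rho$ gives a composition of non-decreasing maps, so $g^*$ is non-decreasing. The first-order condition \eqref{eq:delta-foc} gives the same conclusion in the differentiable case. It reads $\ell_0'(d)+u\,\ell_1'(d)=0$, and implicit differentiation gives $\partial d^*/\partial u=-\ell_1'/(\ell_0''+u\ell_1'')\ge0$.

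\emph{Step 3: threshold structure.} For $0$--$1$ loss, take $\ell_0(d)=d$ and $\ell_1(d)=1-d$, so the objective is affine in $d$: $d(1-u)+u$. The minimizer is therefore $d^*(u)=\mathbf 1\{u\ge1\}$, with ties broken upward. The function $g^*(s)=\mathbf 1\{\rho(s)\ge1\}$ is the indicator of an upper set of $\mathbb R$ because $\rho$ is non-decreasing. Setting $\eta^*:=\inf\{s:\rho(s)\ge1\}$ gives $\delta^*(m)=\mathbf 1\{S(m)\ge\eta^*\}$, up to the boundary convention at $s=\eta^*$ and $Q_{\mathrm{reg}}$-null sets.

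The main obstacle is Step 2, specifically the sign condition on $\ell_1$ and the handling of nonunique minimizers. Strict properness alone does not literally fix the orientation of $\ell_1$, so I would make explicit the natural convention that $\ell_1$ is non-increasing and $\ell_0$ is non-decreasing. I would also choose a monotone selection, namely the largest minimizer, in degenerate cases. Ties and null sets are harmless, since $\delta^*$ only needs to be defined $Q_{\mathrm{reg}}$-almost everywhere.
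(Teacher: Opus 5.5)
Your proposal is correct and follows essentially the paper's route. Both arguments reduce to pointwise Bayes decisions in the likelihood ratio, show the pointwise optimum is non-decreasing in that ratio, transfer this to $S(m)$ via the MLR hypothesis, and for $0$--$1$ loss turn the likelihood-ratio threshold into a score threshold.

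The only difference is the monotonicity step. The paper appeals to strict propriety: the pointwise minimizer of $\ell_0(d)+u\,\ell_1(d)$ is the posterior $u/(1+u)$, which is increasing in $u$. You use a Topkis/single-crossing argument instead. Your added orientation convention is not actually an extra assumption, since properness already forces $\ell_1$ to be non-increasing and $\ell_0$ non-decreasing.
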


\begin{proof}
For fixed generator law $Q^*$, Theorem~\ref{thm:saddle-characterization} reduces the classifier's problem to pointwise Bayes decisions under $Q_{\mathrm{reg}}$ and $Q^*$. Strict propriety implies that the optimal probabilistic decision is a monotone transform of the posterior probability that the message was drawn from $Q^*$ rather than from $Q_{\mathrm{reg}}$. This posterior is itself a monotone transform of the likelihood ratio $dQ^*/dQ_{\mathrm{reg}}(m)$. By the monotone likelihood-ratio assumption, the likelihood ratio is monotone in $S(m)$, so the optimal classifier can be written as $\delta^*(m)=g^*(S(m))$ for a non-decreasing function $g^*$. In the deterministic $0$--$1$ case, Bayes optimality compares the same likelihood ratio with a constant threshold determined by class priors and misclassification costs. Monotonicity in $S(m)$ turns that likelihood-ratio threshold into a score threshold, so $\delta^*(m)=\mathbf{1}\{S(m)\ge\eta^*\}$ for some $\eta^*$.
\end{proof}

\begin{proposition}[Token-Level Exponential Tilting under Additive Receiver Loss]
\label{prop:token-tilting}
Consider the saddle-point characterization in Theorem~\ref{thm:saddle-characterization}. Assume:

\begin{enumerate}
\item the utility is additive, $U_L(m)=\sum_{k=1}^L u(w_k,h_k)$,
\item the induced receiver loss is additive at equilibrium, $\ell_1(\delta^*(m))=\alpha\sum_{k=1}^L s(w_k)+\beta$ with $\alpha>0$,
\item the statistic $s(w_k)$ is the token-level receiver score.
\end{enumerate}

Then the optimal generator $Q^*$ admits an additive (energy-based) representation:
\begin{equation}
Q^*(m)
\propto
\exp\!\left(
\sum_{k=1}^L
\Big[
\frac{1}{T}u(w_k,h_k)
-
\theta\, s(w_k)
\Big]
\right),
\label{eq:token-tilted}
\end{equation}
for $\theta=\lambda\alpha>0$.

Consequently, for a fixed context $h$, the induced token distribution is an exponentially tilted version of the unregulated token distribution: $q^*(w\mid h)\propto q_0(w\mid h)\exp(-\theta s(w))$, where $q_0(w\mid h)\propto \exp(u(w,h)/T)$.

\end{proposition}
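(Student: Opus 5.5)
The plan is to start from part (ii) of Theorem~\ref{thm:saddle-characterization}, which already gives the equilibrium generator as a Gibbs law,
\[
Q^*(m)=\frac{\exp\!\big(\tfrac{1}{T}U_L(m)-\lambda\,\ell_1(\delta^*(m))\big)}{Z(\delta^*)},
\]
and substitute the two additivity assumptions. Using $U_L(m)=\sum_k u(w_k,h_k)$ and $\ell_1(\delta^*(m))=\alpha\sum_k s(w_k)+\beta$, the exponent becomes
\[
\sum_{k=1}^L\Big[\tfrac1T u(w_k,h_k)-\lambda\alpha\, s(w_k)\Big]-\lambda\beta .
\]
The constant $e^{-\lambda\beta}$ does not depend on $m$, so it cancels against the same factor in $Z(\delta^*)$. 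This gives \eqref{eq:token-tilted} with $\theta=\lambda\alpha$. Positivity of $\theta$ needs $\lambda>0$; the statement's $\theta>0$ implicitly assumes this, and when $\lambda=0$ the tilt is trivial. I will say this explicitly.

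For the token-level consequence, I will fix a context $h$ and read off the conditional law of the next token. There are two ways to do this.

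The first is to interpret $q^*(\cdot\mid h)$, in the spirit of the Gibbs approximation \eqref{eq:gibbs}, as the locally normalized factor associated with one position. The weight of token $w$ at context $h$ is then $\exp(u(w,h)/T)\exp(-\theta s(w))$. Normalizing over $\mathcal V$ gives $q^*(w\mid h)\propto q_0(w\mid h)\,e^{-\theta s(w)}$, with $q_0(w\mid h)\propto e^{u(w,h)/T}$. The normalizer of $q_0$ is independent of $w$ and is absorbed into the proportionality.

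The second is exact conditioning under $Q^*$. The conditional probability of $w_k=w$ given the prefix is proportional to $e^{u(w,h_k)/T-\theta s(w)}$ multiplied by a future sum over continuations. That sum depends on $w$ through later histories $h_{k+1},\dots$.

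The main obstacle is this future factor, since in general it spoils the clean tilted form. I will handle it in one of two ways. The first is to note that the claim is made for a fixed context and to present it as the per-position factorization of the energy, matching the paper's Gibbs-approximation convention. The second is to add the natural sufficient condition under which the claim holds exactly: the future partition sum does not depend on the current token, for example when $u(w,h)$ depends on $h$ only through quantities unaffected by $w$. Under that condition $Q^*$ factorizes into a product of local laws, each of which is exactly $q_0(\cdot\mid h)$ tilted by $e^{-\theta s}$.

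I will present the first approach as the main argument and state the second as the condition that makes it exact.
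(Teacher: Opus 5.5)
Your proposal follows the paper's proof essentially step for step: substitute the additive utility and additive receiver loss into the Gibbs form from part (ii) of Theorem~\ref{thm:saddle-characterization}, absorb $e^{-\lambda\beta}$ into the normalizer to obtain $\theta=\lambda\alpha$, and read off the tilted one-token factor at a fixed context $h$. Your extra remarks are correct refinements of points the paper's phrase ``comparing the tilted one-token factor'' passes over silently: $\theta>0$ requires $\lambda>0$, and exact conditioning under $Q^*$ carries a future partition factor that depends on $w$, so the clean tilted form is exact only when that factor is independent of the current token.
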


\begin{proof}
The generator component of Theorem~\ref{thm:saddle-characterization} gives $Q^*(m)\propto \exp(U_L(m)/T-\lambda\ell_1(\delta^*(m)))$. Substituting the additive utility and additive receiver loss gives
\begin{equation*}
Q^*(m)
\propto
\exp\!\left(
\sum_{k=1}^L
\left[
\frac{u(w_k,h_k)}{T}
-
\lambda\alpha s(w_k)
\right]
-
\lambda\beta
\right).
\end{equation*}
The constant $-\lambda\beta$ is absorbed into the normalizing factor, yielding \eqref{eq:token-tilted} with $\theta=\lambda\alpha$. Conditioning on a fixed context $h$ and comparing the tilted one-token factor with the unregulated factor $q_0(w\mid h)\propto\exp(u(w,h)/T)$ gives the final proportionality.
\end{proof}

The corollary shows that optimal detection depends only on the scalar statistic $S(m)$ via a monotone threshold rule. The proposition shows that, under an additive receiver loss, the generator responds by exponentially downweighting tokens with high score $s(w)$. Together, these results characterize the equilibrium as a one-dimensional detection problem coupled with an exponential tilting of the token distribution.

\begin{theorem}[CLTs for Score, Log-Density, and Surprisal under Lindeberg Condition]
\label{thm:clt-unified}
Let $Q^*$ be the optimal generator. Define $\xi_k:=u(w_k,h_k)/T-\theta s(w_k)$ and $X_k:=-\log Q^*(w_k \mid h_k)$. Let $\mathcal F_{\mathcal K}(T)$ denote the keyword-regulated free energy density, defined by the normalization identity $\log Z_L^{\mathcal K}(T)=L\,\mathcal F_{\mathcal K}(T)+o(L)$, where $Z_L^{\mathcal K}(T)$ is the partition function associated with the tilted representation in \eqref{eq:token-tilted}.

Assume:

\begin{enumerate}
\item[(A1)] The process $\{(w_k,h_k)\}_{k\ge 1}$ is adapted to a filtration $\{\mathcal F_k\}$.

\item[(A2)] (Lindeberg condition) For each sequence $Y_k \in \{s(w_k), \xi_k, X_k\}$,
\[
\frac{1}{L}\sum_{k=1}^L 
\mathbb E\!\left[
\big(Y_k - \mathbb E[Y_k \mid \mathcal F_{k-1}]\big)^2
\mathbf{1}\{|Y_k|>\varepsilon \sqrt{L}\}
\right]
\to 0
\quad \forall \varepsilon>0.
\]

\item[(A3)] (Conditional variance convergence) For each sequence $Y_k$,
\[
\frac{1}{L}\sum_{k=1}^L 
\mathrm{Var}(Y_k \mid \mathcal F_{k-1})
\;\xrightarrow{P}\;
\sigma_Y^2.
\]

\item[(A4)] The log-density admits the decomposition $\log Q^*(m)=\sum_{k=1}^L \xi_k - L\,\mathcal F_{\mathcal K}(T)$.
\end{enumerate}

Then, as $L \to \infty$, the following central limit theorems hold:

\emph{(i) Score.}
Let $S_L:=L^{-1}\sum_{k=1}^L s(w_k)$ and $\mu_S:=\mathbb E_{Q^*}[s(w_1)]$.
Then
\begin{equation}
\sqrt{L}\big(S_L - \mu_S\big)
\;\xrightarrow{d}\;
\mathcal N(0,\sigma_S^2),
\end{equation}
where $\sigma_S^2$ is the limit in (A3) with $Y_k = s(w_k)$.

\emph{(ii) Log-density.}
Let $\Lambda_L:=L^{-1}\log Q^*(m)$ and $\mu_\Lambda:=\mathbb E_{Q^*}[\xi_1]-\mathcal F_{\mathcal K}(T)$.
Then
\begin{equation}
\sqrt{L}\big(\Lambda_L - \mu_\Lambda\big)
\;\xrightarrow{d}\;
\mathcal N(0,\sigma_\Lambda^2),
\end{equation}
where $\sigma_\Lambda^2$ corresponds to $Y_k = \xi_k$.

\emph{(iii) Surprisal.}
Let $\bar X_L:=L^{-1}\sum_{k=1}^L X_k$ and $H(Q^*):=\mathbb E_{Q^*}[X_1]$.
Then
\begin{equation}
\sqrt{L}\big(\bar X_L - H(Q^*)\big)
\;\xrightarrow{d}\;
\mathcal N(0,\sigma_X^2),
\end{equation}
where $\sigma_X^2$ corresponds to $Y_k = X_k$.

\end{theorem}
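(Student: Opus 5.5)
The plan is to follow the template of Theorem~\ref{thm:clt-general} and treat the three sequences $Y_k\in\{s(w_k),\xi_k,X_k\}$ uniformly. For each one, split it into a martingale-difference part and a predictable part:
\[
\frac{1}{L}\sum_{k=1}^L Y_k=\frac{1}{L}\sum_{k=1}^L \mathbb E[Y_k\mid\mathcal F_{k-1}]+\frac{1}{L}\sum_{k=1}^L D_k^{Y},
\qquad D_k^{Y}:=Y_k-\mathbb E[Y_k\mid\mathcal F_{k-1}].
\]
By (A1), $D_k^Y$ is a martingale difference array with respect to $\{\mathcal F_k\}$. Assumption (A2) is the conditional Lindeberg condition and (A3) is convergence of the normalized conditional variance. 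The martingale central limit theorem (Hall--Heyde form) then gives
\[
L^{-1/2}\sum_{k=1}^L D_k^Y\xrightarrow{d}\mathcal N(0,\sigma_Y^2).
\]
This single step serves all three parts.

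Part (i) takes $Y_k=s(w_k)$, so $S_L=L^{-1}\sum_k Y_k$ directly. For part (iii), take $Y_k=X_k$. The chain rule \eqref{eq:autoregressive-product} applied to $Q^*$ gives $-\log Q^*(m)=\sum_k X_k$, so $\bar X_L$ is the normalized surprisal. Its centering $H(Q^*)$ is read as the per-token entropy $\mathbb E_{Q^*}[X_1]$. For part (ii), (A4) gives
\[
\Lambda_L=\frac{1}{L}\sum_{k=1}^L\xi_k-\mathcal F_{\mathcal K}(T).
\]
The deterministic shift $\mathcal F_{\mathcal K}(T)$ is absorbed into $\mu_\Lambda$, so the CLT for $\xi_k$ transfers to $\Lambda_L$ unchanged. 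If (A4) is only used in its normalization form $\log Z^{\mathcal K}_L=L\mathcal F_{\mathcal K}+o(L)$, the error term must be $o(\sqrt L)$ rather than $o(L)$ for the centering to survive the $\sqrt L$ scaling. I would note this and rely on (A4) as stated, which is exact.

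The main obstacle is the predictable part. The CLT needs
\[
\sqrt L\Big(\frac1L\sum_k\mathbb E[Y_k\mid\mathcal F_{k-1}]-\mu_Y\Big)\to0
\]
in probability, which is stronger than a law of large numbers. In the i.i.d.\ or stationary tilted-token case of Proposition~\ref{prop:token-tilting}, where $q^*(\cdot\mid h)$ does not depend on $h$ or the chain is stationary with constant conditional means, each conditional mean equals $\mu_Y$ and the term vanishes identically. I would first verify the theorem in that setting. In general, I would interpret $\mu_Y$ as the limit of the averaged conditional means and state the needed $o_P(L^{-1/2})$ fluctuation as part of the stationarity implicit in (A3). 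Finally, Slutsky's lemma combines the vanishing predictable term with the martingale limit to give each of (i)--(iii).
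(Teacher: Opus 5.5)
Your argument follows the paper's proof. It uses the same split of $L^{-1}\sum_k Y_k$ into a predictable average plus a martingale-difference average, the martingale CLT under (A2)--(A3), and (A4) to make $\Lambda_L$ a deterministic shift of $L^{-1}\sum_k\xi_k$. Your remark that only the exact form of (A4) survives the $\sqrt L$ scaling, whereas an $o(L)$ normalization error would not, is correct.

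The gap is the step you isolate yourself, and the paper's proof has it too. The paper only asserts that the predictable average converges to $\mu_Y$. The $\sqrt L$-scaled statement needs the stronger condition $L^{-1/2}\sum_{k=1}^L(\mathbb E[Y_k\mid\mathcal F_{k-1}]-\mu_Y)\to 0$ in probability. Your fix treats this as part of a stationarity ``implicit in (A3)'', and that does not work. (A3) constrains only conditional variances. Stationarity and ergodicity give a law of large numbers for the predictable average, but not the $o_P(L^{-1/2})$ rate.

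In fact the rate cannot be extracted from (A1)--(A4), because part (i) is false as stated. Take a stationary chain on $\{0,1\}$ with $\mathbb P(w_k=1\mid w_{k-1}=0)=a$ and $\mathbb P(w_k=1\mid w_{k-1}=1)=1-b$, where $a,b\in(0,1)$, and let $s(w)=w$.

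\begin{itemize}
\item Choosing $u(w_k,h_k)=T[\log\mathbb P(w_k\mid w_{k-1})+\theta s(w_k)]$ puts the chain in the tilted form \eqref{eq:token-tilted}, with $\xi_k=\log\mathbb P(w_k\mid w_{k-1})$. Hence $Z_L^{\mathcal K}=1$ and (A4) holds with $\mathcal F_{\mathcal K}(T)=0$.
\item (A1)--(A3) hold because all variables are bounded and the chain is ergodic.
\item $\mu_S=a/(a+b)$ is the stationary mean, and it is also the limit of the averaged conditional means.
\end{itemize}

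Yet $\mathbb E[w_k\mid\mathcal F_{k-1}]-\mu_S=\rho\,(w_{k-1}-\mu_S)$ with $\rho=1-a-b$. Summing gives $(1-\rho)\sum_{k\le L}(w_k-\mu_S)=\sum_{k\le L}D_k+O(1)$. Therefore $\sqrt L(S_L-\mu_S)\xrightarrow{d}\mathcal N(0,\sigma_S^2/(a+b)^2)$, which is not $\mathcal N(0,\sigma_S^2)$ unless $a+b=1$.

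What your argument (and the paper's) does prove is the theorem under an explicitly added hypothesis that the predictable part is $o_P(\sqrt L)$. Constant conditional means suffice, as in the context-free case $u(w,h)=u(w)$ of Proposition~\ref{prop:token-tilting}, where tokens are i.i.d.\ under $Q^*$. In that case $X_k=\mathcal F_{\mathcal K}(T)-\xi_k$, so (ii) and (iii) coincide up to sign. State that hypothesis explicitly rather than attributing it to (A3).

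A smaller point of the same kind: (A2) truncates on $|Y_k|$ rather than on $|D_k|$. It therefore becomes the Lindeberg condition for the martingale array only after an extra remark, for example boundedness of the scores.
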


\begin{proof}
For each choice of $Y_k \in \{s(w_k), \xi_k, X_k\}$, define the martingale difference sequence $D_k:=Y_k-\mathbb E[Y_k\mid\mathcal F_{k-1}]$.
Under assumptions (A2)--(A3), the martingale central limit theorem applies, yielding convergence of
\[
\frac{1}{\sqrt{L}}\sum_{k=1}^L D_k
\;\xrightarrow{d}\;
\mathcal N(0,\sigma_Y^2).
\]
Since
\[
\frac{1}{L}\sum_{k=1}^L Y_k
=
\frac{1}{L}\sum_{k=1}^L \mathbb E[Y_k \mid \mathcal F_{k-1}]
+
\frac{1}{L}\sum_{k=1}^L D_k,
\]
and the predictable term converges to $\mu_Y$, the result follows.

The log-density case uses (A4).
\end{proof}

The three limits correspond respectively to detection statistics, log-density (energy), and information content, showing that all key observables of the optimal generator exhibit Gaussian fluctuations under the same structural conditions. For interpretation, the score CLT approximates filtering probability, the log-density CLT approximates how likely a generated message is under the equilibrium law, and the surprisal CLT approximates how concentrated the information content is around its entropy rate.

\begin{corollary}[Gaussian Approximation of Detection]
\label{cor:clt-threshold}
Under the conditions of Theorem~\ref{thm:clt-unified}, we have
\begin{equation}
\sqrt{L}\big(S_L - \mu_S\big)
\;\xrightarrow{d}\;
\mathcal N(0,\sigma_S^2).
\end{equation}
Equivalently, $S_L=\mu_S+(\sigma_S/\sqrt L)Z_L$, where $Z_L \xrightarrow{d} \mathcal N(0,1)$. 
In particular, for any fixed threshold $\eta^* \in \mathbb R$,
\begin{equation}
\mathbb P_{Q^*}\big(S_L \ge \eta^*\big)
=
\mathbb P\!\left(
Z_L \ge \frac{\eta^* - \mu_S}{\sigma_S/\sqrt{L}}
\right),
\end{equation}
Therefore, for large $L$,
\begin{equation}
\mathbb P_{Q^*}\big(S_L \ge \eta^*\big)
\approx
1-\Phi\!\left(
\frac{(\eta^* - \mu_S)\sqrt{L}}{\sigma_S}
\right)
\quad \text{(Gaussian approximation)}.
\end{equation}
\end{corollary}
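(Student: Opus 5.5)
The plan is to read the corollary directly off part~(i) of Theorem~\ref{thm:clt-unified} and then convert the distributional limit into a statement about threshold-crossing probabilities.

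\emph{Step 1: the Gaussian limit.} The first display is exactly the score CLT in Theorem~\ref{thm:clt-unified}(i), which holds under (A1)--(A3) with $Y_k=s(w_k)$. Assume $\sigma_S^2>0$, which is implicit since the statement divides by $\sigma_S$. Define $Z_L:=\sqrt L(S_L-\mu_S)/\sigma_S$. Then $S_L=\mu_S+(\sigma_S/\sqrt L)Z_L$ holds identically, and $Z_L\xrightarrow{d}\mathcal N(0,1)$ follows by the continuous mapping theorem (division by the positive constant $\sigma_S$).

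\emph{Step 2: the exact identity.} Since $\sigma_S/\sqrt L>0$, the event $\{S_L\ge\eta^*\}$ coincides with $\{Z_L\ge(\eta^*-\mu_S)\sqrt L/\sigma_S\}$. The displayed equality of probabilities is therefore exact for every $L$, with no limit taken.

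\emph{Step 3: the approximation, which is the main obstacle.} The difficulty is that the threshold $c_L:=(\eta^*-\mu_S)\sqrt L/\sigma_S$ moves with $L$. Convergence in distribution at a fixed point is therefore not enough by itself. I would use Pólya's theorem: because the limit CDF $\Phi$ is continuous, $\sup_x|\mathbb P(Z_L<x)-\Phi(x)|\to0$. Evaluating at $x=c_L$ gives
\[
\Bigl|\mathbb P_{Q^*}(S_L\ge\eta^*)-\bigl(1-\Phi(c_L)\bigr)\Bigr|\to0.
\]
This justifies the approximation in the sense of a vanishing additive error, uniformly in the threshold.

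I would add a remark on the case $\eta^*\neq\mu_S$. There both sides tend to $0$ or $1$, so the statement is sharp only as an additive approximation, not a relative one. A relative (large-deviation-accurate) version would need Berry--Esseen or moderate-deviation bounds for martingales, which go beyond (A2)--(A3). Accordingly, I would present the ``$\approx$'' as the uniform additive statement. The nondegenerate regime is $\eta^*=\mu_S+O(L^{-1/2})$, where the limit $1-\Phi(c)$ is nontrivial.
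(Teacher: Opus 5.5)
Your proposal is correct and follows the same route as the paper's proof: invoke Theorem~\ref{thm:clt-unified}(i), rescale to obtain $Z_L$, and rewrite $\{S_L\ge\eta^*\}$ exactly as a threshold event for $Z_L$. The paper's last step says only ``applying the Gaussian approximation''; your use of P\'olya's theorem, together with the explicit assumption $\sigma_S^2>0$, makes that step precise, giving an additive error that vanishes uniformly in the moving threshold $c_L=(\eta^*-\mu_S)\sqrt{L}/\sigma_S$.
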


\begin{proof}
The first statement follows directly from Theorem~\ref{thm:clt-unified}. The representation with $Z_L$ follows by rescaling. The probability statement is obtained by rewriting the event in terms of $Z_L$ and applying the Gaussian approximation to the resulting tail probability.
\end{proof}

\begin{corollary}[Concentration of Average Surprisal]
\label{cor:surprisal-concentration}
Under the conditions of Theorem~\ref{thm:clt-unified}, $\bar X_L=H(Q^*)+O_P(L^{-1/2})$.
\end{corollary}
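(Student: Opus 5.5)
The corollary follows from part (iii) of Theorem~\ref{thm:clt-unified}, using the elementary fact that a sequence converging in distribution is tight (bounded in probability). The plan is to write
\[
\bar X_L - H(Q^*) = L^{-1/2}\, W_L, \qquad W_L := \sqrt{L}\big(\bar X_L - H(Q^*)\big),
\]
and then show that $W_L = O_P(1)$. Once that holds, multiplying by $L^{-1/2}$ gives $\bar X_L - H(Q^*) = O_P(L^{-1/2})$, which is the claim.

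\textbf{Step 1.} Take $Y_k = X_k = -\log Q^*(w_k\mid h_k)$ in Theorem~\ref{thm:clt-unified}(iii). Assumptions (A1)--(A3) are in force by hypothesis, so
\[
W_L \xrightarrow{d} \mathcal N(0,\sigma_X^2).
\]

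\textbf{Step 2.} Show that convergence in distribution implies tightness. Fix $\varepsilon>0$ and choose $M$ so that $M$ and $-M$ are continuity points of the Gaussian limit, with $\mathbb P(|N|>M)<\varepsilon/2$ for $N\sim\mathcal N(0,\sigma_X^2)$. Since the Gaussian law is continuous, every point is a continuity point. By the portmanteau theorem, $\mathbb P(|W_L|>M)\to \mathbb P(|N|>M)$, so $\mathbb P(|W_L|>M)<\varepsilon$ for all $L\ge L_0$.

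\textbf{Step 3.} Handle the finitely many indices $L<L_0$. Each $W_L$ is a real-valued random variable; it is a.s.\ finite on the finite space $\mathcal V^L$, since any realized token has positive $Q^*$-probability and hence finite surprisal. So for each such $L$ one can enlarge $M$ until $\mathbb P(|W_L|>M)<\varepsilon$. Taking the maximum of these finitely many thresholds gives a single $M$ that works for all $L$, which is the definition of $W_L=O_P(1)$.

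\textbf{Edge case and main obstacle.} If $\sigma_X^2=0$, the limit is degenerate at $0$. Step 2 still applies with any $M>0$, since $\pm M$ are then continuity points of the point mass at $0$. The only step needing care is Step 3, the a.s.\ finiteness of $W_L$ for small $L$, and it is routine given the finite message space.
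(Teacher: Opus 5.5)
Your proposal is correct and matches the paper's proof: the paper also applies part (iii) of Theorem~\ref{thm:clt-unified} and notes that convergence in distribution of $\sqrt{L}(\bar X_L - H(Q^*))$ implies tightness, which is exactly $\bar X_L - H(Q^*) = O_P(L^{-1/2})$. Your Steps 2--3 simply write out the standard tightness lemma, via portmanteau plus the finitely many small-$L$ indices, which the paper invokes without proof.
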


\begin{proof}
From Theorem~\ref{thm:clt-unified}, $\sqrt{L}(\bar X_L-H(Q^*))\xrightarrow{d}\mathcal N(0,\sigma_X^2)$, which implies tightness of $\sqrt{L}(\bar X_L - H(Q^*))$. Hence $\bar X_L-H(Q^*)=O_P(L^{-1/2})$.
\end{proof}

\begin{corollary}[Gaussian Approximation of Surprisal]
\label{cor:surprisal-gaussian}
Under the conditions of Theorem~\ref{thm:clt-unified},
\begin{equation}
\sqrt{L}\big(\bar X_L - H(Q^*)\big)
\;\xrightarrow{d}\;
\mathcal N(0,\sigma_X^2),
\end{equation}
and hence
\begin{equation}
\bar X_L
=
H(Q^*) + \frac{\sigma_X}{\sqrt{L}}Z_L,
\qquad
Z_L \xrightarrow{d} \mathcal N(0,1).
\end{equation}
\end{corollary}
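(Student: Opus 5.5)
The corollary is the surprisal part of Theorem~\ref{thm:clt-unified} restated in an explicit Gaussian-approximation form. The plan mirrors the proof of Corollary~\ref{cor:clt-threshold}, with the score $S_L$ replaced by the average surprisal $\bar X_L$.

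\textbf{Step 1: the limit.} Apply Theorem~\ref{thm:clt-unified}(iii) with $Y_k=X_k=-\log Q^*(w_k\mid h_k)$. Assumptions (A1)--(A3) for this choice of $Y_k$ are part of the hypotheses, so they give directly
\[
\sqrt L\big(\bar X_L-H(Q^*)\big)\xrightarrow{d}\mathcal N(0,\sigma_X^2).
\]
This is the first display, and no further argument is needed for it.

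\textbf{Step 2: the representation.} Define the standardized statistic
\[
Z_L:=\frac{\sqrt L\big(\bar X_L-H(Q^*)\big)}{\sigma_X}.
\]
Rearranging gives $\bar X_L=H(Q^*)+(\sigma_X/\sqrt L)\,Z_L$ identically for every $L$. Since $x\mapsto x/\sigma_X$ is continuous, the continuous mapping theorem (equivalently, scaling a Gaussian limit) turns Step 1 into $Z_L\xrightarrow{d}\mathcal N(0,1)$.

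\textbf{Main obstacle: a degenerate variance.} The one real issue is $\sigma_X=0$. Unlike (G4) in Theorem~\ref{thm:clt-general}, assumption (A3) does not require the limit to be strictly positive, so the normalization above may be undefined. I would handle this in one of two ways.
\begin{itemize}
\item[(a)] State that positivity $\sigma_X^2>0$ is implicitly assumed, as it is in Corollary~\ref{cor:clt-threshold} where one divides by $\sigma_S$.
\item[(b)] In the case $\sigma_X=0$, set $Z_L$ to be any independent $\mathcal N(0,1)$ variable. The representation then still holds in the sense that $\bar X_L-H(Q^*)=o_P(L^{-1/2})$, which is consistent with Corollary~\ref{cor:surprisal-concentration}.
\end{itemize}
I would choose option (a) and note option (b) in a remark. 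The remainder is pure rescaling.
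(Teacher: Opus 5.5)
Your proposal is correct and takes the same approach as the paper, whose proof is the one-line argument of invoking Theorem~\ref{thm:clt-unified}(iii) with $Y_k=X_k$ and then rescaling by $\sigma_X$. Your remark that (A3) does not force $\sigma_X^2>0$, so the standardization $Z_L$ implicitly requires $\sigma_X>0$, is a legitimate caveat that the paper leaves unstated.
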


\begin{proof}
This is an immediate restatement of Theorem~\ref{thm:clt-unified} for $Y_k = X_k$, together with a rescaling argument.
\end{proof}

The first corollary shows that, under the optimal generator, the high-dimensional detection problem reduces to a one-dimensional Gaussian test based on the statistic $S_L$. The classifier operates as a threshold test on an approximately normal variable, with mean $\mu_S$ controlled by the generator through exponential tilting. This provides a second-order finite-length characterization of detectability.

The probability $\mathbb P_{Q^*}(S_L \ge \eta^*)$ quantifies the likelihood of detection. The generator influences this probability through $\mu_S$ and $\sigma_S^2$, which depend on the tilting parameter $\theta$. Increasing regulation (larger $\lambda$) shifts $\mu_S$ downward, reducing detection probability at the cost of utility. The CLT reveals an explicit tradeoff between utility and detectability at the $\sqrt{L}$ scale.

The surprisal $\bar X_L$ concentrates around the entropy rate $H(Q^*)$ with fluctuations of order $L^{-1/2}$. This shows that, under the optimal generator, the information content per token becomes sharply predictable for large $L$. In particular, typical sequences satisfy
$-L^{-1}\log Q^*(m)\approx H(Q^*)$, linking the CLT to the asymptotic equipartition property.

The central limit theorems for $S_L$, $\Lambda_L$, and $\bar X_L$ show that detection statistics, log-density, and information content all exhibit Gaussian fluctuations around their respective means. These quantities are coupled through the same underlying token-level structure:
$\xi_k=u(w_k,h_k)/T-\theta s(w_k)$ and $X_k=-\log Q^*(w_k\mid h_k)$. Thus, the equilibrium simultaneously controls detection (via $S_L$), typicality (via $\Lambda_L$), and information (via $\bar X_L$).

The Gaussian approximation is accurate for moderate deviations of order $O(L^{-1/2})$. For rare events such as $\mathbb P(S_L \ge \eta)$ with $\eta$ far from $\mu_S$, large deviations theory is required, leading to exponential decay of the form $\exp(-L I(\eta))$. Thus, the CLT provides a local approximation, while large deviations characterize tail behavior.

\begin{corollary}[Berry--Esseen Bound for Surprisal]
\label{cor:be-surprisal}
If, in addition, the normalized third moments are uniformly bounded and the dependence conditions are strong enough for a Berry--Esseen estimate, then with $Y_k = X_k$,
\begin{equation}
\sup_{x}
\left|
\mathbb P\!\left(
\sqrt{L}\frac{\bar X_L - H(Q^*)}{\sigma_X} \le x
\right)
-
\Phi(x)
\right|
\le
\frac{C}{\sqrt{L}}.
\end{equation}
\end{corollary}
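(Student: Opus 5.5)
The plan is to reduce the corollary to a martingale Berry--Esseen theorem applied to the surprisal increments $X_k=-\log Q^*(w_k\mid h_k)$. Following the decomposition in the proof of Theorem~\ref{thm:clt-unified}, write
\[
\sqrt{L}\,(\bar X_L - H(Q^*)) = \frac{1}{\sqrt L}\sum_{k=1}^L D_k + \frac{1}{\sqrt L}\sum_{k=1}^L\big(\mathbb E[X_k\mid\mathcal F_{k-1}]-H(Q^*)\big),
\]
where $D_k=X_k-\mathbb E[X_k\mid\mathcal F_{k-1}]$ is a martingale difference sequence. The first term is the one to which a Berry--Esseen estimate applies. The second, predictable term is a bias that must be shown to be $O(L^{-1/2})$ in the scale of the statistic; that is, $\sum_k(\mathbb E[X_k\mid\mathcal F_{k-1}]-H(Q^*))$ must be $O(1)$ in a suitable sense.

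I will interpret ``dependence conditions strong enough for a Berry--Esseen estimate'' as the following three hypotheses.
\begin{enumerate}[label=(\roman*)]
\item The conditional variances are tight at rate $L^{-1/2}$: $\mathbb E\big|L^{-1}\sum_k \mathrm{Var}(X_k\mid\mathcal F_{k-1})-\sigma_X^2\big| = O(L^{-1/2})$.
\item The conditional third moments $\mathbb E[|D_k|^3\mid\mathcal F_{k-1}]$ are uniformly bounded by some $\rho<\infty$.
\item The conditional means of $X_k$ are predictable with summable deviation: $\sum_k|\mathbb E[X_k\mid\mathcal F_{k-1}]-H(Q^*)|=O_{L^1}(1)$. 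This holds, for example, under a stationary, geometrically mixing token process, where it follows by a Poisson-equation/coboundary argument.
\end{enumerate}

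Under (i)--(ii), a martingale Berry--Esseen theorem (Bolthausen type, or Haeusler's bound with $p=1$ up to logs, or the sharper Mourrat form) gives
\[
\sup_x\Big|\mathbb P\Big(\tfrac{1}{\sigma_X\sqrt L}\sum_k D_k\le x\Big)-\Phi(x)\Big|\le C_1L^{-1/2}.
\]
The bounded-third-moment hypothesis is the normalized-third-moment assumption stated in the corollary.

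To absorb the bias, use the perturbation lemma: for any $\varepsilon>0$, $\sup_x|\mathbb P(A+B\le x)-\Phi(x)|\le \sup_x|\mathbb P(A\le x)-\Phi(x)|+\varepsilon/\sqrt{2\pi}+\mathbb P(|B|>\varepsilon)$. Take $A$ to be the normalized martingale sum and $B$ the normalized bias, and set $\varepsilon=L^{-1/2}$. By (iii) and Markov's inequality, $\mathbb P(|B|>L^{-1/2})$ is $O(1)$, not small, so this choice fails. Instead, the coboundary decomposition should be used before applying Berry--Esseen: write $\mathbb E[X_k\mid\mathcal F_{k-1}]-H(Q^*)=g_k-g_{k+1}+\tilde D_k$ with $\tilde D_k$ martingale differences and $g$ bounded. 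Fold $\tilde D_k$ into the martingale part, and the bias telescopes to $(g_1-g_{L+1})/\sqrt L$, which is deterministically $O(L^{-1/2})$ if $g$ is bounded. The perturbation lemma with $\varepsilon=2\|g\|_\infty/(\sigma_X\sqrt L)$ then adds only $O(L^{-1/2})$. Setting $C=C_1+C_2$ gives the claim.

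The main obstacle is this control of the predictable part at the sharp rate $L^{-1/2}$. A generic martingale Berry--Esseen bound gives only $L^{-1/4}$ or $L^{-1/2}\log L$ unless the conditional variance condition (i) holds at the right rate. I would therefore fold these into the standing hypothesis and verify them in the finite-vocabulary Markov case. In that case the bounded surprisal (a finite vocabulary with $Q^*>0$) and a spectral gap give both the bounded coboundary $g$ and rate-$L^{-1/2}$ variance convergence via Nagaev's spectral method, which yields the $C/\sqrt L$ bound directly.
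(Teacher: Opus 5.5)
The step that fails is your claim that (i)--(ii), together with a martingale Berry--Esseen theorem, give $\sup_x|\cdot|\le C_1L^{-1/2}$. No martingale Berry--Esseen theorem gives that rate at this level of generality.

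\begin{itemize}
\item Bolthausen's theorem assumes bounded increments and an almost surely deterministic $V_L^2=\sum_k\mathbb E[D_k^2\mid\mathcal F_{k-1}]$. Even then it gives only $L^{-1/2}\log L$.
\item His accompanying example is a bounded martingale difference sequence with deterministic $V_L^2$ that attains this rate. It satisfies your (i) exactly and your (ii), yet its Kolmogorov distance is not $O(L^{-1/2})$.
\item When $V_L^2$ is random, the Haeusler and Mourrat bounds add a term of the form $\|V_L^2/(L\sigma_X^2)-1\|_p^{p/(2p+1)}$. Under your $L^1$ hypothesis (i) this is only $O(L^{-1/6})$. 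Even with $O(L^{-1/2})$ control in every $L^p$, these bounds do not get past $L^{-1/4}$.
\end{itemize}

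So your closing diagnosis is off. Condition (i) is already at rate $L^{-1/2}$; the loss comes from the martingale structure itself, not from a slow rate of variance convergence.

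You are right that the predictable part $\sum_k(\mathbb E[X_k\mid\mathcal F_{k-1}]-H(Q^*))$ needs control. The paper's proof passes over this by identifying the variance of $\sum_kX_k$ with $L\sigma_X^2$. However, (iii) does not follow from stationarity plus geometric mixing. In that setting $\mathbb E[X_k\mid\mathcal F_{k-1}]$ is the next-token entropy at $h_k$, a nondegenerate stationary sequence. Its summed absolute deviations therefore grow linearly in $L$. The Poisson-equation argument gives a martingale plus a bounded term, i.e.\ $O_P(\sqrt L)$, not $O(1)$.

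Folding $\tilde D_k$ into the martingale does not repair this. The new martingale's conditional variances are no longer $\mathrm{Var}(X_k\mid\mathcal F_{k-1})$, so (i) must be re-posed for $D_k+\tilde D_k$. Likewise $\sigma_X^2$ must be read as the long-run variance $\lim_L L^{-1}\mathrm{Var}(\sum_kX_k)$. And you would still face the obstruction of the previous paragraph.

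The only piece of your proposal that actually yields $C/\sqrt L$ is the final appeal to the Nagaev--Guivarc'h spectral method for a bounded additive functional of a finite-state chain with a spectral gap. That is a self-contained Berry--Esseen theorem: it makes the martingale, coboundary and perturbation steps unnecessary. It also needs an extra bounded-context (finite-state) assumption on $h_k$, since in the paper $h_k$ is the full prefix.

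The paper's proof is exactly such a direct appeal. It reads ``dependence conditions strong enough for a Berry--Esseen estimate'' as licensing a Berry--Esseen bound for the normalized partial sum. It then checks only that the accumulated third absolute moment is $O(L)$ while the variance of $\sum_kX_k$ is $L\sigma_X^2$, so the Lyapunov ratio is $O(L/L^{3/2})=O(L^{-1/2})$.

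To close your argument, either adopt that reading explicitly, or state finite-state Markov and spectral-gap hypotheses up front and cite the Markov-chain Berry--Esseen theorem directly. The martingale route as written does not close.
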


\begin{proof}
Apply the Berry--Esseen estimate to the normalized partial sum with $Y_k=X_k$. The bounded normalized third moments and the stated dependence assumptions ensure that the accumulated third absolute moment is of order $L$, while the variance of $\sum_{k=1}^L X_k$ is $L\sigma_X^2$ under the normalization in Theorem~\ref{thm:clt-unified}. The resulting Kolmogorov distance between $\sqrt{L}(\bar X_L-H(Q^*))/\sigma_X$ and the standard normal distribution is therefore bounded by $C/\sqrt L$ for a constant $C$ independent of $L$.
\end{proof}

\section{Censorship Filtering Case Study}
\label{sec:censorship-case-study}

The first case study models a receiver-side censorship filter that blocks, suppresses, or down-ranks messages containing restricted topics, names, slogans, URLs, images, or event-specific terms. This setting is empirically grounded in measurement studies of online information control \cite{king2013censorship,roberts2018censored} and in public blocklist datasets collected by Citizen Lab. The Citizen Lab \texttt{chat-censorship} repository contains keyword blocklists and other trigger material used in applications and platforms studied across chat, live streaming, mobile games, open-source projects, WeChat, QQMail, search, translation, and related services \cite{citizenlabChatCensorship}. Several of these measurements distinguish between lists recovered by reverse engineering and lists discovered by sample testing, which is directly aligned with the present model: the receiver implements a scoring rule over observable message features, while the generator changes the distribution of message classes in response.

The goal of the case study is not to reproduce any particular platform's censorship rule. Instead, it gives a calibrated finite-vocabulary abstraction in which the semantic classes correspond to receiver-side features that can be extracted from a real dataset. The class $c_0^{(C)}$ represents ordinary permissible language, $c_1^{(C)}$ represents indirect or context-bearing language that is semantically related to a restricted topic but does not directly match a known trigger, and $c_2^{(C)}$ represents direct restricted terms, names, slogans, URLs, image hashes, or event labels. A practical extractor may combine blocklist matching, named-entity recognition, contextual topic classification, and embedding similarity to restricted-topic exemplars. If $z_j(m)$ denotes the resulting feature indicators, then a simple receiver score has the form $S(m)=\sum_j\theta_j z_j(m)$; a learned moderation system can replace this linear score with a classifier while preserving the same distributional interpretation.

\begin{table}[t]
\centering
\begingroup
\singlespacing
\scriptsize
\begin{tabular}{p{0.28\textwidth}|p{0.58\textwidth}}
Receiver-side feature & Censorship-filtering interpretation \\
\hline
Recovered blocklist or exact keyword match & A high score is assigned when a message contains a term, name, slogan, URL, or other trigger appearing in a recovered or experimentally discovered list. This is the direct restricted class $c_2^{(C)}$. \\
Tested trigger family or event-specific list membership & A message receives high score when it matches terms associated with a time-varying event, policy campaign, public figure, platform-specific rule, or experimentally measured censorship category. \\
Semantic proximity to restricted-topic exemplars & A moderate score is assigned when a message is close to restricted-topic examples under an embedding or topic model even if it avoids exact lexical matches. This is the indirect class $c_1^{(C)}$. \\
Contextual coordination or dissemination signal & Additional score can be assigned when otherwise ordinary language appears in a context associated with collective action, repeated posting, event organization, or coordinated circulation.
\end{tabular}
\endgroup
\caption{Receiver-side feature map for the censorship-filtering case study. The table describes how empirical censorship measurements can be converted into abstract class labels and scores; it does not enumerate restricted terms or provide evasion templates.}
\label{tab:censorship-filtering-features}
\end{table}

We use a three-class KL-regulated equilibrium. For the censorship application, the class set is $\{c_0^{(C)},c_1^{(C)},c_2^{(C)}\}$, the receiver-accepted reference law is $q_{\mathrm{reg}}^{(C)}$, the class utility vector is $u^{(C)}$, and the receiver risk-score vector is $r^{(C)}$. With $T=1$, Proposition~\ref{prop:kl-solution} gives the class-level equilibrium
\begin{equation}
q_\lambda^{(a)}(c)
=
\frac{
q_{\mathrm{reg}}^{(a)}(c)^{\lambda/(1+\lambda)}
\exp\!\left(u^{(a)}(c)/(1+\lambda)\right)
}{
\sum_{c'}
q_{\mathrm{reg}}^{(a)}(c')^{\lambda/(1+\lambda)}
\exp\!\left(u^{(a)}(c')/(1+\lambda)\right)
}.
\label{eq:case-class-distribution}
\end{equation}
where $a=C$ in the present section and $a=P$ in Section~\ref{sec:phishing-case-study}. For a length-$L$ message with class sequence $(C_1,\dots,C_L)$, define the receiver score and flag probability by
\begin{equation}
S_L^{(a)}
=
\frac{1}{L}\sum_{k=1}^L r^{(a)}(C_k),
\qquad
P_{\mathrm{flag}}^{(a)}(\lambda)
=
\mathbb P_{q_\lambda^{(a)}}\!\left(S_L^{(a)}\ge\eta_a\right).
\label{eq:case-flag-probability}
\end{equation}
Because the benchmark is independent at the class level, this probability can be computed exactly from the multinomial distribution of the class counts. For longer or dependent messages, the CLT in Theorem~\ref{thm:clt-unified} gives the approximation $P_{\mathrm{flag}}^{(a)}(\lambda)\approx 1-\Phi((\eta_a-\mu_S^{(a)})\sqrt{L}/\sigma_S^{(a)})$, where $\mu_S^{(a)}=\mathbb E_{q_\lambda^{(a)}}[r^{(a)}(C)]$ and $(\sigma_S^{(a)})^2=\mathrm{Var}_{q_\lambda^{(a)}}(r^{(a)}(C))$. The plots combine analytic and sampled quantities. Solid curves are computed from \eqref{eq:case-class-distribution} and the exact multinomial law of the receiver score. Shaded bands show finite-sample variability, not uncertainty about the equilibrium: for each $\lambda$, the script draws $400$ batches, each with $25$ messages of length $L=100$ sampled from $q_\lambda^{(a)}$, and plots one empirical standard deviation of the batch estimates. This mirrors a small-language-model experiment in which continuations are sampled with the Transformers library \cite{wolf2020transformers}, mapped to receiver-side classes, and summarized by empirical class counts rather than operationally sensitive message text.

The finite-class benchmark also allows a direct verification of the score CLT. For each application, define the standardized score $Z_L^{(a)}=\sqrt{L}(S_L^{(a)}-\mu_S^{(a)})/\sigma_S^{(a)}$. Since the class counts are multinomial, the exact distribution of $Z_L^{(a)}$ and the exact flag probability are computable without asymptotic approximation. Figure~\ref{fig:case-study-clt-verification} compares these exact quantities with the Gaussian approximation from Theorem~\ref{thm:clt-unified}. The agreement is strongest in the transition regime where the threshold is near the mean score, while discrepancies in very small tail probabilities reflect the usual distinction between central-limit and large-deviation approximations.

\begin{figure}[t]
\centering
\includegraphics[width=\textwidth]{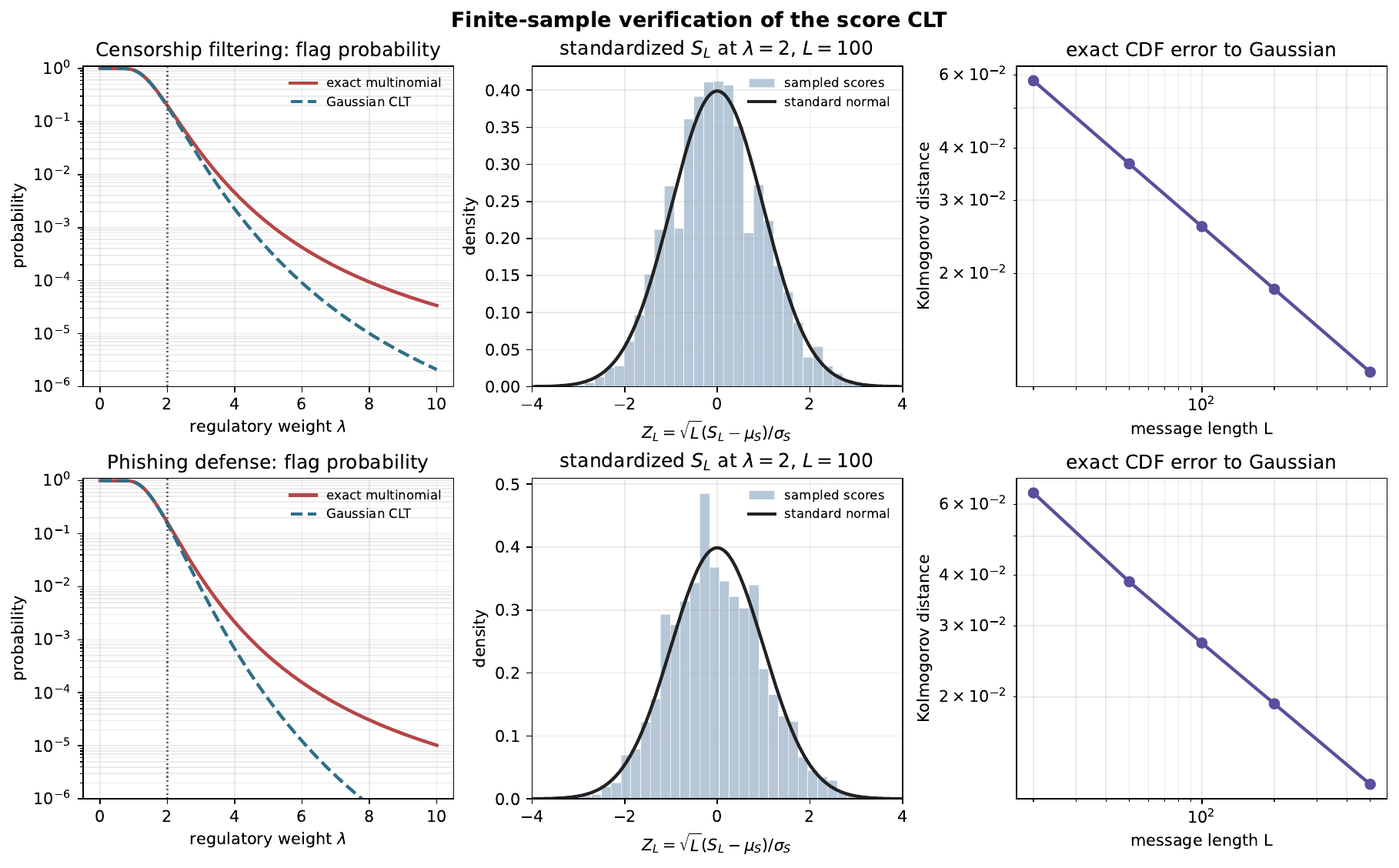}
\caption{Finite-sample verification of the score CLT in the censorship and phishing case studies. The left panels compare exact multinomial flag probabilities with the Gaussian CLT approximation for $L=100$. The middle panels show histograms of $20{,}000$ sampled standardized scores at $\lambda=2$ and $L=100$, with the standard normal density overlaid. The right panels plot the exact Kolmogorov distance between the distribution of $Z_L^{(a)}$ and the standard normal law at $\lambda=2$ for $L\in\{20,50,100,200,500\}$.}
\label{fig:case-study-clt-verification}
\end{figure}

After optimizing out the KL regulator, the generator chooses a class distribution $q\in\Delta(\{c_0^{(a)},c_1^{(a)},c_2^{(a)}\})$ to maximize the reduced objective
\begin{equation}
J_\lambda^{(a)}(q)
=
\langle q,u^{(a)}\rangle
+H(q)
-\lambda D_{\mathrm{KL}}\!\left(q\,\|\,q_{\mathrm{reg}}^{(a)}\right).
\label{eq:case-reduced-objective}
\end{equation}
The maximizer of \eqref{eq:case-reduced-objective} is the equilibrium distribution in \eqref{eq:case-class-distribution}. Equivalently, the saddle point is the point on the probability simplex where utility, entropy, and regulatory divergence jointly balance.

\begin{figure}[t]
\centering
\includegraphics[width=\textwidth]{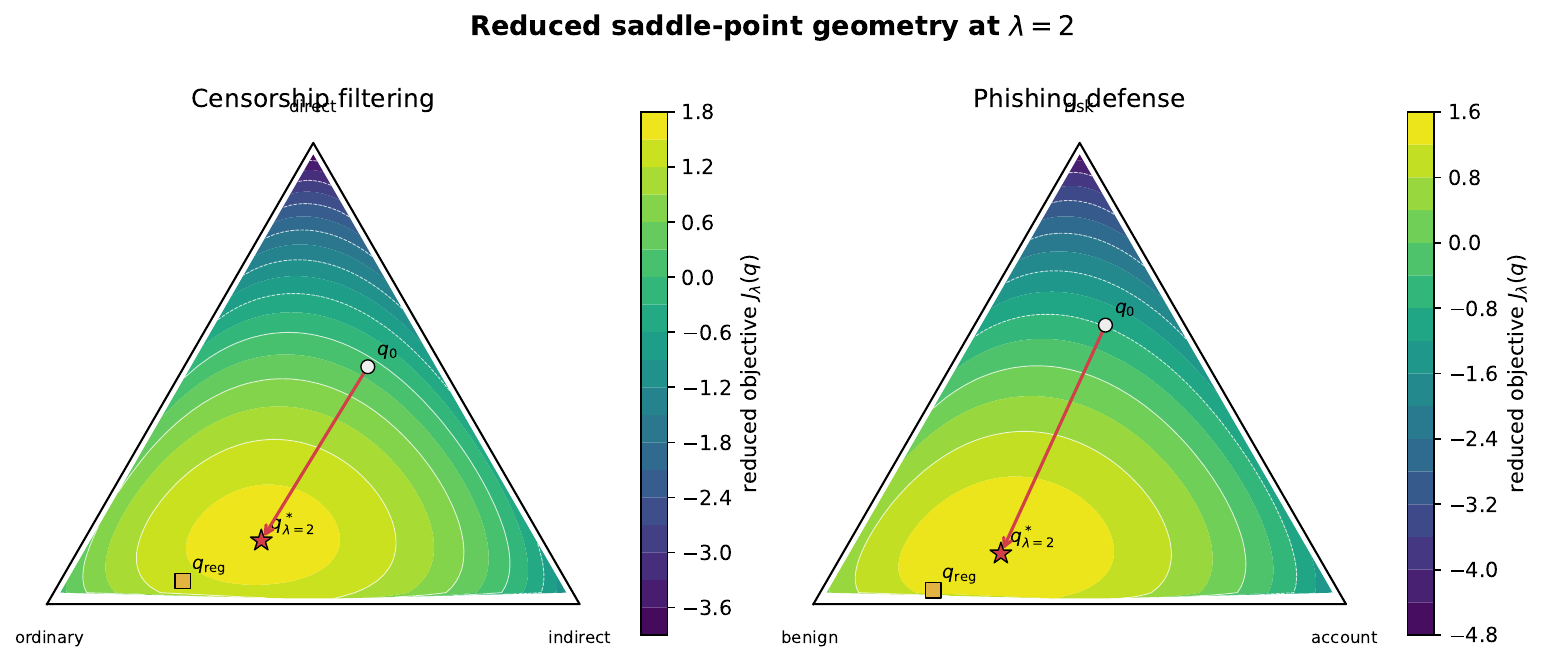}
\caption{Reduced saddle-point geometry for the censorship and phishing case studies at $\lambda=2$. Each triangle is the probability simplex over the three semantic classes. Color contours show the reduced objective $J_\lambda^{(a)}(q)$ after the KL regulator has been optimized. The circle marks the unregulated Gibbs distribution $q_0$, the square marks the receiver-accepted reference law $q_{\mathrm{reg}}$, and the star marks the regulated equilibrium $q_{\lambda=2}^*$. The arrow shows how regulatory pressure moves the generator away from the utility-dominated distribution and toward the receiver-compatible region of the simplex.}
\label{fig:case-study-saddlepoint-simplex}
\end{figure}

The generator's utility is communicative: ordinary words receive low reward, topic-bearing language receives higher reward, and direct restricted terms receive the highest reward. The receiver assigns moderate score to indirect expressions and high score to direct restricted expressions, capturing the incentive to reshape wording away from direct terms. The reference law is $q_{\mathrm{reg}}^{(C)}=(0.72,0.23,0.05)$, the utility vector is $u^{(C)}=(0.4,1.3,1.7)$, and the receiver score is $r^{(C)}=(0,0.25,1)$. We take $L=100$ and $\eta_C=0.25$, so a message is flagged when its average score is at least $0.25$. These numbers are stylized but empirically interpretable: $q_{\mathrm{reg}}^{(C)}$ represents the accepted mixture of ordinary, indirect, and direct-topic language under a censoring platform; $u^{(C)}$ represents the communicative value of topic-bearing expression; and $r^{(C)}$ represents the receiver's risk score derived from the features in Table~\ref{tab:censorship-filtering-features}.

For a sanitized instance, suppose a receiver maps a length-$20$ message to counts $(12,6,2)$, where $n_j$ is the number of positions assigned to $c_j^{(C)}$. The score is $S^{(C)}=(0\cdot 12+0.25\cdot 6+1\cdot 2)/20=0.175$, below $\eta_C=0.25$, so the message is not filtered. Counts $(10,5,5)$ instead give $S^{(C)}=0.3125$, so the message is filtered. Filtering is therefore a function of receiver-side feature counts, not a judgment about one word in isolation.

At the regulated equilibrium $\lambda=2$, Table~\ref{tab:case-study-censorship} gives $q_2^{(C)}\approx(0.529,0.334,0.138)$. A readable length-$20$ sample from this law can be represented by the sanitized message
\begin{quote}
\emph{Local residents plan to meet after work to discuss service delays, share a memorial notice, mention [RESTRICTED-NAME], reference [RESTRICTED-EVENT], and preserve archived notes labeled [RESTRICTED-TOPIC] and [RESTRICTED-SLOGAN].}
\end{quote}
The bracketed terms are placeholders for receiver-side feature extraction. If the filter maps the message to counts $(n_0,n_1,n_2)=(10,6,4)$, the score is $S^{(C)}=(0\cdot 10+0.25\cdot 6+1\cdot 4)/20=0.275$, so this equilibrium sample is filtered because $S^{(C)}>\eta_C$. Equilibrium does not mean every sample avoids filtering; it specifies the distribution under which filtered and unfiltered realizations occur.

\begin{table}[t]
\centering
\begingroup
\singlespacing
\scriptsize
\begin{tabular}{c|ccc|cccc}
$\lambda$ &
$q(c_0^{(C)})$ &
$q(c_1^{(C)})$ &
$q(c_2^{(C)})$ &
$\bar u$ &
$H$ &
$D_{\mathrm{KL}}$ &
$P_{\mathrm{flag}}$ \\
\hline
$0$   & $0.140$ & $0.345$ & $0.515$ & $1.380$ & $0.985$ & $1.111$ & $1.000$ \\
$0.5$ & $0.310$ & $0.386$ & $0.303$ & $1.142$ & $1.092$ & $0.486$ & $0.9999$ \\
$1$   & $0.418$ & $0.371$ & $0.211$ & $1.008$ & $1.061$ & $0.254$ & $0.9309$ \\
$2$   & $0.529$ & $0.334$ & $0.138$ & $0.879$ & $0.976$ & $0.100$ & $0.2004$ \\
$5$   & $0.632$ & $0.283$ & $0.085$ & $0.766$ & $0.857$ & $0.022$ & $1.20\times 10^{-3}$ \\
$10$  & $0.674$ & $0.259$ & $0.067$ & $0.720$ & $0.797$ & $0.006$ & $3.39\times 10^{-5}$
\end{tabular}
\endgroup
\caption{Censorship-filtering case study. The generator values direct restricted expressions most, but the receiver-side censor assigns those expressions the largest score. Increasing $\lambda$ shifts mass toward the receiver-accepted reference distribution, lowers the filter probability, and reduces communicative utility.}
\label{tab:case-study-censorship}
\end{table}

Table~\ref{tab:case-study-censorship} shows how regulation changes the generator distribution. At $\lambda=0$, the generator ignores the reference law and assigns more than half of its mass to direct restricted terms, producing near-certain filtering. As $\lambda$ increases, mass moves toward ordinary and indirect classes. The intermediate class $c_1^{(C)}$ preserves some communicative utility while carrying less receiver-side risk, giving a stylized account of linguistic reshaping under censorship without enumerating evasion phrases.

\begin{figure}[t]
\centering
\includegraphics[width=\textwidth]{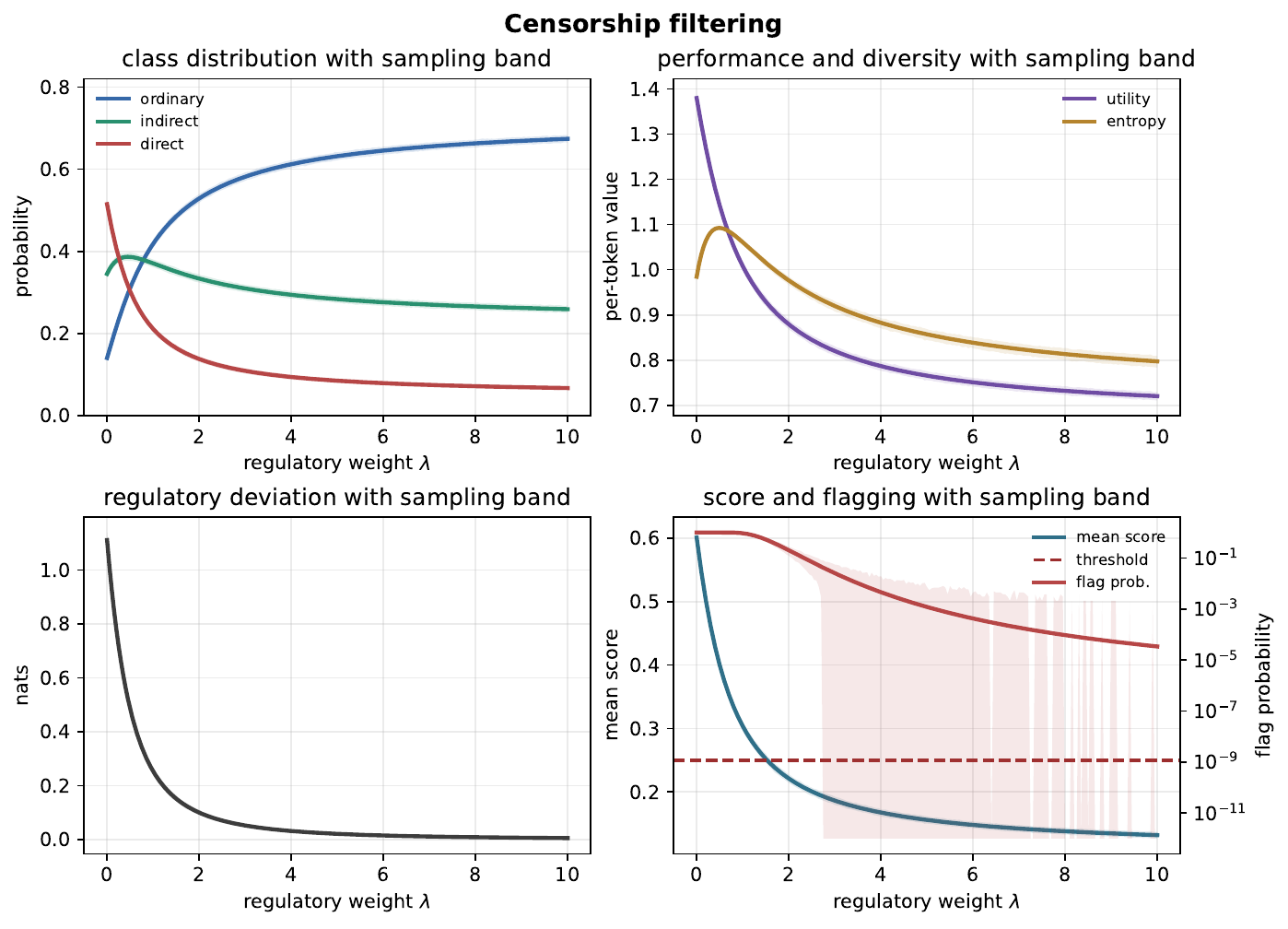}
\caption{Censorship-filtering tradeoffs as the regulatory weight $\lambda$ varies from $0$ to $10$. Solid curves report the analytic equilibrium and exact multinomial flag probability. Shaded bands show one empirical standard deviation over $400$ repeated sampling batches, each containing $25$ sampled messages of length $L=100$.}
\label{fig:case-study-censorship-paths}
\end{figure}

Figure~\ref{fig:case-study-censorship-paths} shows the full path. Regulatory pressure continuously reweights all classes, not only the direct restricted class. Entropy can initially rise as mass spreads across classes, then fall as $q_\lambda^{(C)}$ approaches $q_{\mathrm{reg}}^{(C)}$. Once the mean receiver score falls sufficiently below threshold, concentration over $100$ positions makes filtering rare; the shaded bands show the sampling variability visible near this transition in a finite empirical evaluation.

\section{Phishing-Defense Case Study}
\label{sec:phishing-case-study}

The phishing case study has the same mathematical structure but the opposite ethical interpretation: the generator is adversarial and the receiver is the defender. The receiver may be a human user supported by a warning system, an email-security gateway, or a browser-side classifier. Its goal is to filter social-engineering and credential-seeking risk, a setting studied in usable security, phishing detection, and deceptive-information resilience \cite{dhamija2006phishing,khonji2013phishing,yang2024deceptiveits}. Public repositories indexed under GitHub's \texttt{phishing-attacks} topic illustrate the breadth of the operational ecosystem: the topic aggregates repositories involving phishing kits, phishing pages, phishing domains, domain permutations, threat-intelligence feeds, indicators of compromise, phishing detection, and training or simulation tooling \cite{githubPhishingAttacksTopic}. We use this ecosystem only to motivate defensive feature categories. The case study does not execute attack code, name target organizations, specify domains, or provide instructions for phishing.

The class $c_0^{(P)}$ represents benign service communication, $c_1^{(P)}$ represents legitimate account-security language that may resemble warning or recovery messages, and $c_2^{(P)}$ represents credential-seeking or social-engineering risk. A realistic receiver-side extractor may combine natural-language cues with URL and infrastructure features, sender-authentication signals, landing-page similarity scores, attachment or form indicators, and known threat-intelligence signals. The key modeling point is that the detector scores a complete message through observable features, while the adversarial generator changes the class distribution in response to that detector.

\begin{table}[t]
\centering
\begingroup
\singlespacing
\scriptsize
\begin{tabular}{p{0.28\textwidth}|p{0.58\textwidth}}
Receiver-side feature & Phishing-defense interpretation \\
\hline
Credential or code request & A high score is assigned when the message asks for authentication material, account recovery secrets, payment credentials, or private verification information. This is the phishing-risk class $c_2^{(P)}$. \\
Suspicious link, domain, form, or attachment signal & A message receives score when it contains a risky destination, domain mismatch, shortened or newly observed domain, form-collection cue, or attachment pattern associated with compromise attempts. \\
Impersonation, authority, urgency, or threat cue & A moderate or high score is assigned when the message combines time pressure, institutional impersonation, account-lock language, or authority framing with an action request. \\
Benign security-notification pattern & A lower score is assigned when the message resembles legitimate account-security communication, such as an informational alert or routine notification, without credential collection or suspicious routing.
\end{tabular}
\endgroup
\caption{Receiver-side feature map for the phishing-defense case study. The entries describe defensive scoring features that can be estimated from threat-intelligence repositories, phishing corpora, or benign account-message data; they are not generation templates.}
\label{tab:phishing-filtering-features}
\end{table}

The reference distribution for receiver-accepted messages is $q_{\mathrm{reg}}^{(P)}=(0.76,0.21,0.03)$, the utility vector is $u^{(P)}=(0.5,1.0,1.9)$, and the receiver score is $r^{(P)}=(0,0.20,1)$. We again take $L=100$, and we set $\eta_P=0.20$. Here $q_{\mathrm{reg}}^{(P)}$ represents a benign mixture of service notifications and legitimate security messages, while $u^{(P)}$ represents the adversarial utility of persuasive, risk-bearing content. The score $r^{(P)}$ maps the features in Table~\ref{tab:phishing-filtering-features} into a receiver-side filter statistic.

For a defensive instance, counts $(n_0,n_1,n_2)=(13,5,2)$ give $S^{(P)}=(0\cdot 13+0.20\cdot 5+1\cdot 2)/20=0.15$, below $\eta_P=0.20$, corresponding to ordinary account-security language with few high-risk features. Counts $(11,5,4)$ give $S^{(P)}=0.25$, so the message is flagged. The same feature-count machinery now represents a benign receiver protecting the user.

At $\lambda=2$, Table~\ref{tab:case-study-phishing} gives $q_2^{(P)}\approx(0.593,0.297,0.110)$. A sanitized training-only realization from this law is
\begin{quote}
\emph{Training simulation: a sender claims an account review is pending, asks the recipient to use [EXTERNAL-LINK], says action is needed today, requests [AUTHENTICATION-CODE], and refers to [HELPDESK-IDENTITY].}
\end{quote}
This is not an operational phishing message: no organization, domain, form, or credential field is specified. It only displays features a receiver-side detector would score. If the detector assigns counts $(n_0,n_1,n_2)=(11,5,4)$, then $S^{(P)}=(0\cdot 11+0.20\cdot 5+1\cdot 4)/20=0.25>\eta_P$, so the message is flagged. An equilibrium realization with counts $(13,5,2)$ has score $0.15$ and is not flagged. Thus filtering remains stochastic at equilibrium: the saddle point determines feature-count probabilities, while the threshold determines which realizations are filtered.

\begin{table}[t]
\centering
\begingroup
\singlespacing
\scriptsize
\begin{tabular}{c|ccc|cccc}
$\lambda$ &
$q(c_0^{(P)})$ &
$q(c_1^{(P)})$ &
$q(c_2^{(P)})$ &
$\bar u$ &
$H$ &
$D_{\mathrm{KL}}$ &
$P_{\mathrm{flag}}$ \\
\hline
$0$   & $0.149$ & $0.246$ & $0.605$ & $1.470$ & $0.933$ & $1.613$ & $1.000$ \\
$0.5$ & $0.360$ & $0.328$ & $0.312$ & $1.101$ & $1.097$ & $0.608$ & $1.000$ \\
$1$   & $0.482$ & $0.325$ & $0.193$ & $0.933$ & $1.034$ & $0.282$ & $0.9467$ \\
$2$   & $0.593$ & $0.297$ & $0.110$ & $0.802$ & $0.913$ & $0.098$ & $0.1633$ \\
$5$   & $0.686$ & $0.255$ & $0.059$ & $0.710$ & $0.773$ & $0.019$ & $4.91\times 10^{-4}$ \\
$10$  & $0.722$ & $0.235$ & $0.043$ & $0.678$ & $0.712$ & $0.005$ & $1.02\times 10^{-5}$
\end{tabular}
\endgroup
\caption{Phishing-defense case study. The receiver is a defender: high scores correspond to credential-seeking or social-engineering risk. Increasing $\lambda$ moves the adversarial generator toward the receiver-accepted reference law and sharply reduces the probability that a length-$100$ message exceeds the defensive filter threshold.}
\label{tab:case-study-phishing}
\end{table}

Table~\ref{tab:case-study-phishing} illustrates the defender's role. With no regulatory pressure, the adversarial generator places most mass on the phishing-risk class and is almost always flagged. Larger $\lambda$ values move the generator closer to the benign reference law. The drop in $P_{\mathrm{flag}}$ is sharp because the threshold is applied to an average over $100$ positions; once the mean score falls below threshold, concentration makes crossings rare. In a small-language-model experiment, the table would be obtained by replacing hand-specified class probabilities with empirical counts from sampled outputs and a fixed defensive extractor.

\begin{figure}[t]
\centering
\includegraphics[width=\textwidth]{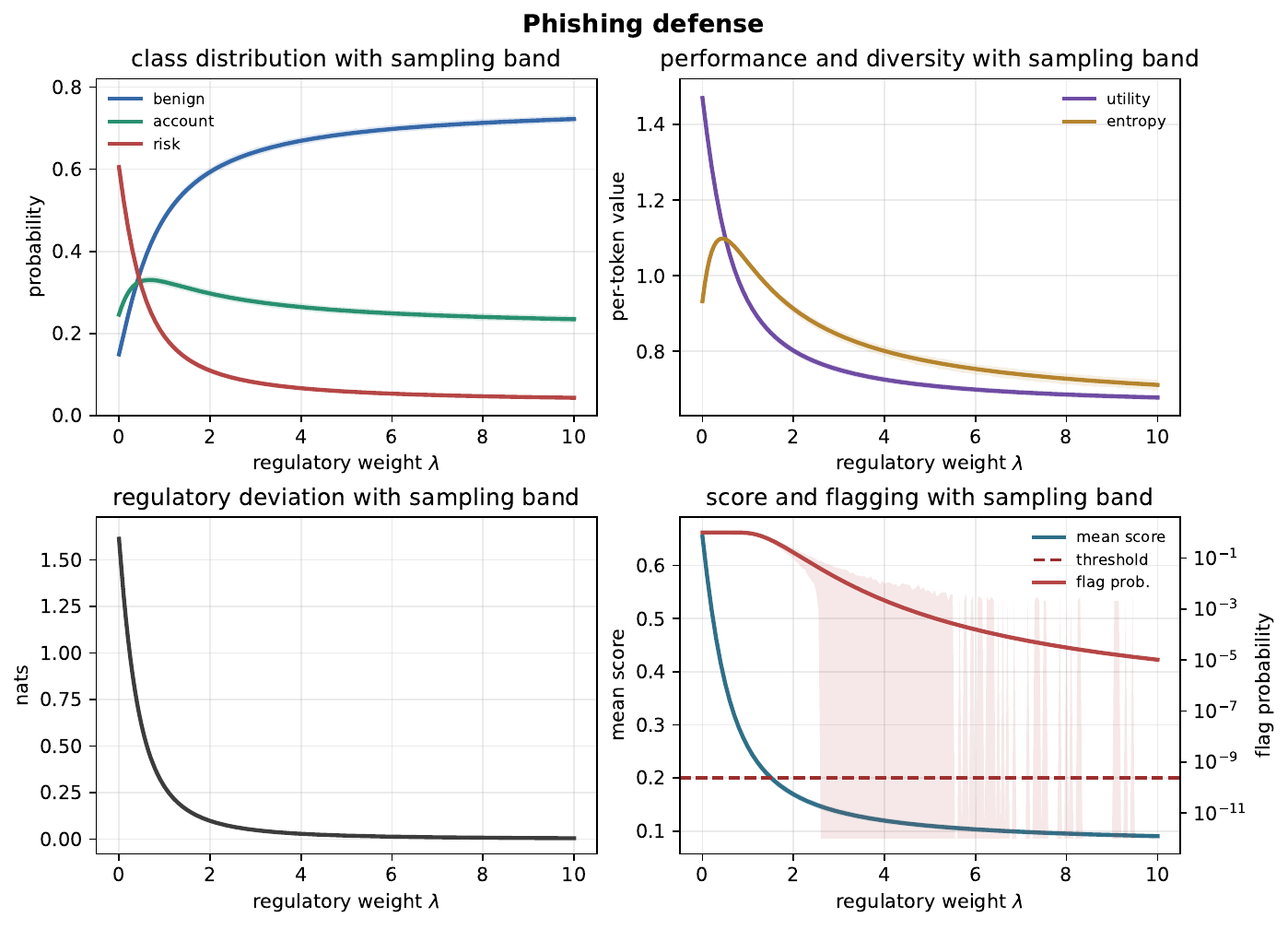}
\caption{Phishing-defense tradeoffs as the regulatory weight $\lambda$ varies from $0$ to $10$. Solid curves report the analytic equilibrium and exact multinomial flag probability. Shaded bands show one empirical standard deviation over $400$ repeated sampling batches, each containing $25$ sampled messages of length $L=100$.}
\label{fig:case-study-phishing-paths}
\end{figure}

Figure~\ref{fig:case-study-phishing-paths} shows the same formal path under the defensive interpretation. The unregulated generator favors the risk class because it has the largest adversarial utility. As $\lambda$ grows, probability shifts toward benign and legitimate account-security language. Near threshold, small distributional changes produce large changes in $P_{\mathrm{flag}}$; far below threshold, the multinomial tail is small even when the risk class has nonzero probability. The shaded bands show the empirical fluctuation from finite sampled messages.

\section{Conclusion}

This paper developed a variational framework for regulated language generation by representing the generator as a distribution $Q$ over complete messages. Autoregressive token sampling induces this distribution operationally, while entropy-regularized utility maximization gives an analytic Gibbs form. The regulator is a functional $\mathcal R(Q)$ that penalizes deviation from a regulated reference law $Q_{\mathrm{reg}}$; in variational form, it is an optimal discriminator whose value becomes an $f$-divergence by convex duality.

The generator--regulator interaction is a saddle-point problem. Under KL regulation, the equilibrium is the tilted Gibbs measure
\[
Q^*(m)
\propto
Q_{\mathrm{reg}}(m)^{\frac{\lambda}{1+\lambda}}
\exp\!\left(
\frac{1}{T(1+\lambda)}U_L(m)
\right).
\]
The parameter $\lambda$ controls the strength of the regulated reference, while $T$ controls utility-based concentration. The finite-length results show that scores, log-densities, and surprisals concentrate around deterministic means with fluctuations of order $L^{-1/2}$, so detection probabilities and typicality can often be approximated by one-dimensional Gaussian calculations.

The censorship and phishing case studies show how the formal objects become operational receiver-side filtering problems. By specifying a utility $U_L$, reference law $Q_{\mathrm{reg}}$, discriminator class, regulation strength $\lambda$, and score statistic $S_L$, the framework quantifies the tradeoff among generation quality, diversity, regulatory alignment, and detectability.

\bibliographystyle{unsrt}
\bibliography{ref}

\end{document}